\newtheorem{theorem}{Theorem}
\newtheorem{definition}{Definition}
\newtheorem{lemma}{Lemma}
\newtheorem{lem}{Lemma}
\let\emptyset\varnothing
\newcommand{\innerp}[2]{\langle#1|#2\rangle}
\begin{document}

%\preprint{APS/123-QED}

\title{Enhancing Generative Models via Quantum Correlations}% Force line breaks with \\
%\thanks{A footnote to the article title}%

\author{Xun Gao$^{1}$, Eric R. Anschuetz$^{2}$, Sheng-Tao Wang${}^{1,3}$, J. Ignacio Cirac${}^{4,5}$ and Mikhail D. Lukin${}^{1}$}  
\affiliation{${}^1$Department of Physics, Harvard University, Cambridge, Massachusetts 02138, USA}
 \affiliation{${}^2$MIT Center for Theoretical Physics,
77 Massachusetts Avenue, Cambridge, MA 02139, USA}%Lines break automatically or can be forced with \\
 \affiliation{${}^3$QuEra Computing Inc., Boston, Massachusetts 02135, USA}
\affiliation{${}^4$Max-Planck-Institut f{\"u}r Quantenoptik, Hans-Kopfermann-Str.\ 1, 85748 Garching, Germany}
\affiliation{${}^5$
Munich Center for Quantum Science and Technology (MCQST), Schellingstr. 4, D-80799 M{\"u}nchen, Germany}

%\date{\today}% It is always \today, today,
             %  but any date may be explicitly specified

\begin{abstract}

Generative modeling using samples drawn from the probability distribution constitutes a powerful approach for unsupervised machine learning. Quantum mechanical systems can produce probability distributions that 
exhibit quantum correlations which are difficult to capture using classical models. 
We show theoretically  that such quantum correlations provide a powerful resource for generative modeling. In particular, we provide an unconditional proof of separation in expressive power between a class of widely-used generative models, known as Bayesian networks, and its minimal quantum extension. 
We show that this expressivity advantage is associated with quantum nonlocality and quantum contextuality. Furthermore, we numerically test this separation on standard machine learning data sets and show that it holds for practical problems. The possibility of quantum advantage 
demonstrated in this work not only sheds light on the design of useful quantum machine learning protocols but also provides inspiration to draw on ideas from quantum foundations to improve
purely classical algorithms.  
\end{abstract}

\pacs{Valid PACS appear here}% PACS, the Physics and Astronomy
                             % Classification Scheme.
%\keywords{Suggested keywords}%Use showkeys class option if keyword
                              %display desired
\maketitle

%\tableofcontents

\section{Introduction}

%\subsection{Motivation}

Over the past three decades, the field of machine learning has achieved remarkable success. A variety of powerful models and algorithms have been developed and deployed for broad applications ranging from computer vision and natural language processing to 
autonomous vehicles ~\cite{shalev2014understanding,bishop2006pattern,goodfellow2016deep}. Unsupervised learning, involving the task of learning from unlabeled data sets, is among the frontier areas of machine learning research. This task is typically much more challenging than supervised learning. The most common approach to tackle unsupervised learning problems is \emph{generative modeling},
where one attempts to construct and train models with efficient representations for high-dimensional probability distributions.  
One of the most important aspects of any generative model is its expressive power, which, together with associated training algorithms, primarily determines the model performance. Models with high expressive power can capture complex correlations in the target probability distribution, while upholding the standard wisdom of Occam's razor by keeping the structure simple (typically corresponding to  a simple connectivity structure or limited number of  parameters).

%%%%

Quantum systems are known to produce complex probability distributions that are hard to capture with classical generative models~\cite{google_supremacy}. For this reason,  quantum models are believed to be more powerful in tackling unsupervised learning tasks. %Motivated by these considerations,  
Consequently, over the past few years, quantum machine learning has emerged as a promising approach to enhance machine learning performance. However, apart from abstract computational complexity arguments~\cite{gao2018quantum,coyle2020born}, any potential quantum advantage in quantum machine learning models and its physical origin is not well understood.   
%physical understandings remain scarce.
%although supported by intuition from quantum interference and evidence from computational complexity theory, still remains unclear~\cite{gao2018quantum}. 
Motivated by these considerations, in this work we explore the role of 
quantum correlations associated with  non-locality and contextuality~\cite{einstein1935can,bell1964einstein,bell1966problem,kochen1975problem},
that are known to 
%are the most fundamental features unique to quantum mechanics and they 
%have been shown to 
be the key resource for quantum advantage in many quantum information processing tasks 
%such as communication and  quantum computing
~\cite{anders2009computational, buhrman2010nonlocality,howard2014contextuality,bermejo2017contextuality,frembs2018contextuality}, in unsupervised machine learning problems such as natural language processing. Intuitively, the potential for quantum advantage for such tasks can be understood by noting that, in language processing problems, one often needs to read the whole sentence to understand the meaning of some words; in other words, the interpretation of a word might depend on the context, which shares similarity with observables in quantum contextuality. 
%Thus we may expect complex quantum correlations such as contextuality to play a role. % Thus motivated, 
In this work, we explore if and how generative models could benefit from such quantum correlations.

Specifically, we focus on a class of standard generative models, known as Bayesian networks, and show that  quantum correlations can be used to achieve provable separation between such models and their minimal quantum extension described by  a corresponding class of tensor networks. Focusing on sequential models, we compare subclasses of Bayesian networks with the corresponding 1D tensor networks described by  Matrix Product States (MPSs) and show that MPS features more expressive power compared to traditional machine learning models \cite{perez2006matrix,schollwock2011density,stoudenmire2016supervised,PhysRevX.8.031012,1803.09111,glasser2020probabilistic}. Since the 1D models can be efficiently evaluated on a classical computer, we also numerically test the models on real-world data sets and find an improvement in generative modeling using MPS. While these results provide new insights into the power of MPS-based machine learning algorithms, since there exists a subclass of tensor networks that cannot be efficiently simulated using a classical computer but can be implemented on a quantum computer, our results also suggest the possibility of a quantum advantage in generative machine learning.

Our paper is organized as follows. In the next section, we provide an outline of the main results and discuss their implications.  In Sec.~\ref{sec:bqc}, we  review Bayesian networks and their quantum circuit interpretation, and introduce our minimal quantum extension of Bayesian networks and its relation with tensor networks. In Sec.~\ref{sec:correlations}, we prove separations in expressivity between the two classes of models in learning sequential data sets. In Sec.~\ref{sec:numerics}, we give numerical evidence that this separation often holds not only in theory but also in practice, by showing separations on a variety of standard machine learning data sets. Finally, in Sec.~\ref{sec:outlook} we discuss the implications of our results and consider future lines of research.

% Since quantum correlations are characterized by correlations in the probability distribution resulting from measuring the quantum system, we ask the natural question: could generative models benefit from these quantum correlations?

% should it be a separate subsection, or merged with the previous one?

\begin{figure*}[tp]
\includegraphics[width=\textwidth]{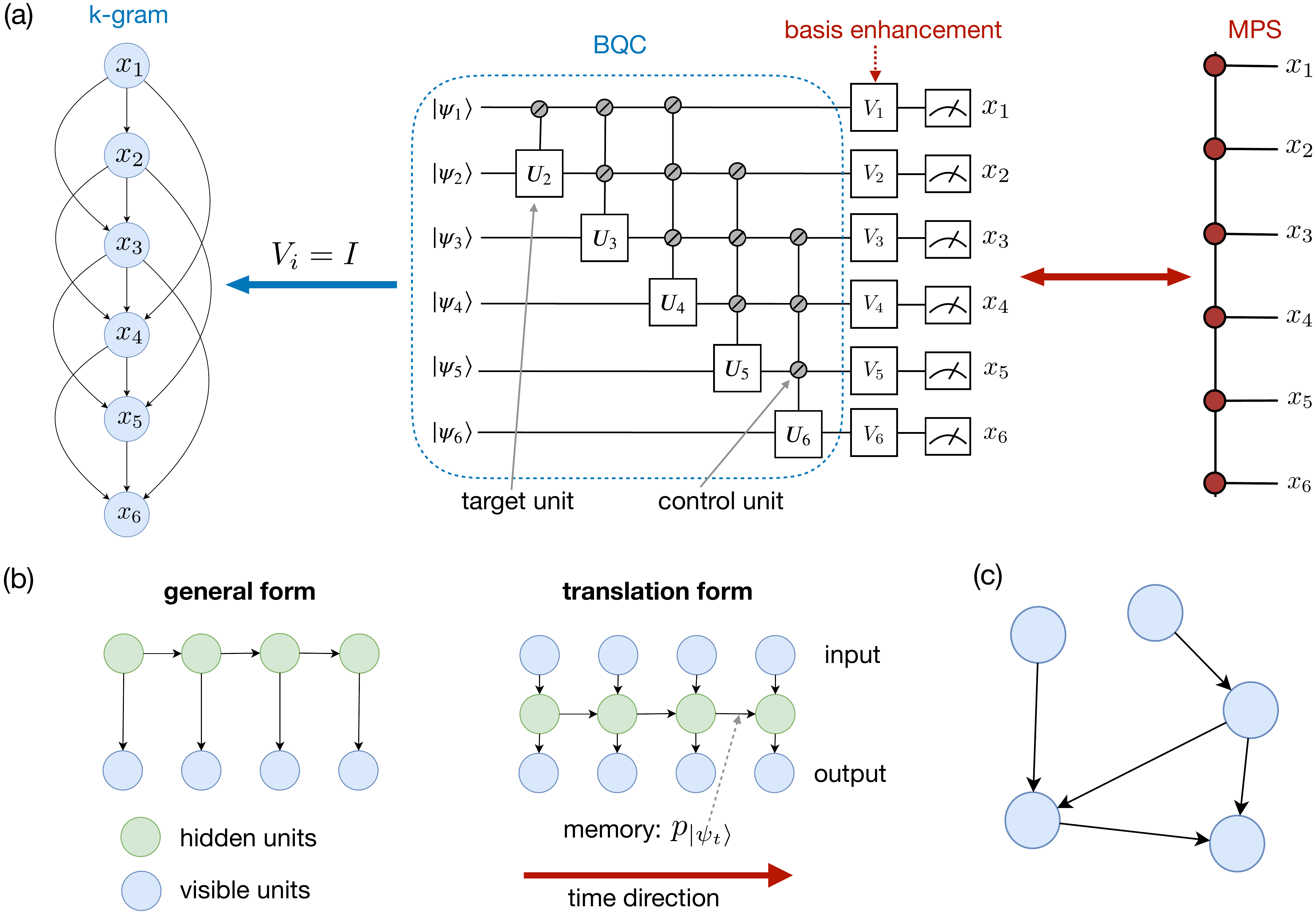}
\caption{A Bayesian network and its quantum circuit. (a) $k$-gram model and its basis enhancement. The leftmost diagram is a 4-gram model as the transition probability takes the form $p\left(x_l\mid x_{l-1},x_{l-2},x_{l-3}\right)$, involving 4 variables. All of the variables are visible. The middle diagram is a basis-enhanced Bayesian quantum circuit (BBQC). The part in the blue dashed box is a Bayesian quantum circuit (BQC) of the corresponding Bayesian network since measuring the output qubits in the computational basis results in the same probability distribution. The basic elements of BQCs are uniformly controlled gates of which control units and target units are labeled. The BBQC is a special case of tensor networks. In this case, the BBQC can be written as a Matrix Product State (MPS) as shown in the rightmost diagram. (b) A Hidden Markov model (left) and a hidden Markov model for a translation problem (right). Visible variables and hidden variables are colored as blue and green respectively. The top and bottom visible units store input and output respectively. We adopt a dynamical point of view: the HMM can be understood as a measurement-driven evolution of probability distributions which encode the quantum states at the $t$-th time step over the hidden variables in the $t$-th virtual bond. (c) An example of a Bayesian network on a general directed graph. More detailed discussion is in Appendix~\ref{app:general_graph}.}
\label{fig:baynet}
\end{figure*}

\section{Summary of Results and Their Implications}\label{sec:summary}

%To be specific, we consider a minimal extension of 
Bayesian networks, associated with  a class of generative models based on directed graphs, have a wide range of applications \cite{niedermayer2008introduction}. Probability distributions described by Bayesian networks are known to have an equivalent formulation in the computational basis measurements of a class of quantum circuits known as \emph{Bayesian Quantum Circuits} (BQCs, see Fig.~\ref{fig:baynet} and Ref.~\cite{low2014quantum}). By extending this class to allow local measurement beyond the computational basis, we define a class of quantum-inspired models dubbed \emph{Basis-enhanced Bayesian Quantum Circuits} (BBQCs), which are a special class of tensor networks that inherit the graph structure of their corresponding Bayesian networks.

In this work, we construct BBQCs that have unconditional expressivity separations compared to their classical counterparts, i.e.\ Bayesian networks on the same directed graphs. Instead of requiring an exact representation, we relax the comparison criterion to allow for any finite error in the forward and backward Kullback--Leibler (KL) divergence. This is equivalent to the condition \begin{equation}\label{eq:error}
q\left(\bm{x}\right)=0\iff p\left(\bm{x}\right)=0, \forall\bm{x}.
\end{equation}
where $p$ and $q$ are the two comparison distributions. This error model, however, is still not practical enough; for instance, when $q\left(\bm{x}\right)=0$ and $p\left(\bm{x}\right)$ is very small, the exact KL divergence is infinite. In this paper, we adopt this error model to obtain rigorous proofs, but show numerically that there exists a finite advantage in KL divergence even when in practical training $q\left(\bm{x}\right)$ does not have exact zero probabilities for any $\bm{x}$. KL divergence is a widely used error model in unsupervised machine learning.

We first analyse the implications of quantum nonlocality for a so-called $k$-gram model, a very successful Bayesian network model used in natural language processing (see Fig.~\ref{fig:baynet}(a)) . We introduce a basis-enhanced 2-gram model, shown in Fig.~\ref{fig:quantum2gram} (where the left is the BBQC and the right is the corresponding Bayesian network), and prove that any $k$-gram model for $k<\left\lfloor(n-1)/2\right\rfloor$ cannot approximate its probability distribution under finite KL divergence. %The approximate KL divergence is widely used in unsupervised machine learning; though we require the exact KL divergence to obtain rigorous proofs, we show numerically that there appears to be a finite advantage even in the approximate case. 
The proof makes use of  
%a well-known property of quantum nonlocality: 
the nonlocal correlations present in measuring a GHZ state that cannot be described by local hidden variable models~\cite{greenberger1990bell}. We extend this argument to a cluster state where the qubits are measured either in the $X$ or $Y$ basis. This state can be represented by a basis-enhanced 2-gram model, but not a local hidden variable model. By measuring all the qubits other than the first, middle, and final qubits (shown as dashed circles in Fig.~\ref{fig:quantum2gram}(a)), the state reduces to a GHZ state up to local unitaries. For the corresponding $k$-gram model, however, the conditional probability distribution factorizes and can be described by a local hidden variable model. This result is summarized as the following theorem:
\begin{theorem}[$k$-gram models and quantum non-locality]\label{thm:kgram}
There exists a family of basis-enhanced $2$-gram models with generated probability distribution $q$ such that any classical $k$-gram models with $k=o\left(n\right)$ (where $n$ is the length of the 2-gram model) cannot approximate $q$ to the error model in Eq.~\eqref{eq:error}. This separation originates from quantum nonlocality.
\end{theorem}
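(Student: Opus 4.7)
The plan is to construct a concrete basis-enhanced 2-gram model from a 1D cluster state with random measurement bases, and then exploit the GHZ paradox to produce a support mismatch with any $k$-gram model for $k=o(n)$. First, I would take $q$ to be the joint distribution on $n$ qubits prepared in a 1D cluster state, where each qubit is measured in either the $X$ or $Y$ basis and both the basis choice and the outcome are recorded in the visible variable $x_\ell$. Since a cluster state is an MPS of bond dimension~$2$ and the measurement choice is drawn from a local ancilla register on each site, this distribution is realized by a basis-enhanced 2-gram BBQC of the type shown in Fig.~\ref{fig:quantum2gram}. The first step is to give this explicit MPS/BBQC construction and check it genuinely fits the 2-gram graph.

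Next, I would single out three positions $i_1 < i_2 < i_3$ that split the chain into three blocks with each gap exceeding $k$ (for instance $i_1=1$, $i_2=\lfloor n/2\rfloor$, $i_3=n$, which is allowed whenever $k<\lfloor (n-1)/2\rfloor$). Conditioning on specific X/Y outcomes of all intermediate qubits leaves qubits $i_1,i_2,i_3$ in a state that is, up to known local Pauli corrections, the three-qubit GHZ state, and by our freedom in choosing the bases at $i_1,i_2,i_3$ I can realize the four Mermin settings $XXX, XYY, YXY, YYX$. For each such setting, quantum mechanics deterministically forbids exactly half of the outcome strings (those of the ``wrong'' parity), so these events have $q(\bm{x})=0$. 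The second step is to package this carefully so that all the conditioning variables are actual values of recorded $x_\ell$'s.

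The third step is the classical lower bound. For a $k$-gram model $p(\bm{x})=\prod_\ell p(x_\ell\mid x_{\ell-1},\ldots,x_{\ell-k})$ with $k<\lfloor(n-1)/2\rfloor$, after conditioning on the intermediate outcomes chosen above, the joint distribution on $(x_{i_1},x_{i_2},x_{i_3})$ factors as a product of conditional distributions on three \emph{disjoint} context windows of size at most $k$. Treating the content of each window as a local hidden variable, this is exactly a tripartite LHV model. The standard Mermin/GHZ argument then forces any such model to assign strictly positive probability to at least one quantum-forbidden string across the four Mermin settings, producing an $\bm{x}$ with $q(\bm{x})=0$ but $p(\bm{x})>0$ and violating Eq.~\eqref{eq:error}.

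The hard part, I expect, is not the GHZ contradiction itself but the conditional-independence step: one must verify that after conditioning on intermediate outcomes \emph{and} basis records, the induced distribution really factors as an LHV on the three distant sites, i.e.\ that the context windows genuinely do not couple them. This is where the gap condition $k<\lfloor(n-1)/2\rfloor$ enters, guaranteeing the windows are disjoint. A secondary subtlety is handling the Pauli corrections from the intermediate measurements: the recorded intermediate outcomes must be allowed as free ``hidden'' labels of the LHV model so that the Mermin contradiction survives averaging over them, which I would address by instead fixing a single outcome pattern of positive probability and deriving the contradiction pointwise.
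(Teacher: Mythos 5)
Your proposal is correct and follows essentially the same route as the paper: the same cluster-state BBQC with the basis bit and outcome recorded per site, the same reduction of the conditional $k$-gram distribution to a tripartite local-hidden-variable model by conditioning on intermediate outcomes so that the $k$-windows around the three distant sites are disjoint, and the same GHZ/Mermin contradiction to produce a string with $q(\bm{x})=0$ but $p(\bm{x})>0$. The subtleties you flag (the factorization after conditioning, and fixing a single positive-probability intermediate outcome pattern to control the Pauli corrections) are exactly the points the paper's proof handles, via its Eq.~\eqref{eq:HVT} and by conditioning the ancillary pairs on the all-zero outcome.
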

%\noindent This unconditional separation in expressive power also holds under the $l_1$-norm error model.

Since $k$-gram models can only capture local correlations, we then investigate a more expressive class of models, \emph{hidden Markov models} (HMMs, shown in Fig.~\ref{fig:baynet}(b)), which are widely used in reinforcement learning and temporal pattern recognition. HMMs extend $k$-gram models by introducing hidden variables as memory to capture long-range correlations, and they are the most generic 1D sequential generative models, including both feedforward and recurrent neural networks (given finite precision) as specific instances. We focus on the HMMs in the so-called translation form, with input and output regarded as original and target languages, respectively, as shown in Fig.~\ref{fig:baynet}(b). Basis-enhanced versions of such HMMs correspond to  a special instance of Matrix Product Operators (MPO). We use quantum contextuality to prove an expressivity separation between classical HMMs and their basis-enhanced counterparts. Specifically, we prove the following theorem:
\begin{theorem}[Hidden Markov models and quantum contextuality]\label{thm:hmm}
There exists a family of basis-enhanced $2$-gram models with a state space of dimensionality $D$, that cannot be approximated, in the sense of Eq.~\eqref{eq:error}, by any classical hidden Markov models in the translation form (Fig.~\ref{fig:baynet}(b)) with a number of hidden units fewer than $D^{\Omega(\log D)}$. The separation originates from quantum contextuality.
\end{theorem}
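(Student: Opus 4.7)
The plan is to establish Theorem~\ref{thm:hmm} by constructing a family of basis-enhanced $2$-gram models whose output distribution $q$ encodes a parallel composition of Kochen--Specker-type contextuality gadgets, and then translating the resulting contextuality obstruction into a quasi-polynomial lower bound on the non-negative rank of $q$ across every bipartite cut. Since translation-form HMMs with $d$ hidden units can only represent conditional distributions whose non-negative rank across each cut is at most $d$, such a lower bound directly forces $d \geq D^{\Omega(\log D)}$.

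First I would design the BBQC explicitly. Building on the cluster-state measurement pattern used for Theorem~\ref{thm:kgram}, I would enlarge each site to hold $O(\log D)$ qubits and partition the chain into $\Theta(\log D)$ blocks; within each block the $2$-gram couplings are arranged so that the visible inputs jointly select one of the contexts (rows or columns) of a Peres--Mermin magic square, or an equivalent state-independent contextuality gadget, while the visible outputs record the associated measurement outcomes. A bond dimension of $D$ accommodates all $\Theta(\log D)$ gadgets in parallel because each gadget uses only bounded-width cluster-state correlations. By construction, $q$ is supported precisely on the outcome tuples that satisfy the quantum parity constraints of every chosen context across every block.

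Next I would exploit the fact that every translation-form HMM is a non-contextual hidden-variable model: conditioned on a hidden trajectory $\bm{h}$, outputs are independent across sites and each $y_t$ is drawn from a distribution depending only on $h_t$ and the local input $x_t$. The support condition of Eq.~\eqref{eq:error} therefore requires every positive-probability hidden trajectory to assign outputs compatible with \emph{all} contexts simultaneously. For a single block, this is precisely a Kochen--Specker-type obstruction, and a counting argument (each hidden state corresponds to one of the disjoint equivalence classes of globally consistent outcome assignments compatible with the quantum predictions) shows that the hidden alphabet restricted to one block must have size polynomial in $D$.

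Finally I would amplify the single-block bound to $d \geq D^{\Omega(\log D)}$ by parallel composition across the $\Theta(\log D)$ blocks. Placing a Markov cut between two consecutive blocks, the hidden variable at the cut must simultaneously encode enough information to respect the contextuality constraints on \emph{both} sides, and if the gadgets are chosen so that their witnesses are statistically independent under $q$, the required classical dimensions multiply. The main obstacle I anticipate is precisely this amplification: in a sequential $2$-gram chain, a classical HMM could in principle correlate hidden-state usage across blocks to save dimension, so rigorously excluding such ``cross-block compression'' is nontrivial. I would address this either by a direct non-negative-rank lower bound for the conditional-distribution matrix across a rectangular cut, adapting techniques from quantum communication-complexity lower bounds (for instance, pattern-matrix-type arguments used for hidden matching), or via a product theorem for contextuality-induced classical dimension establishing additivity of $\log d$ under parallel composition. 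Carrying one of these strategies through under the constrained $2$-gram topology is where the technical core of the proof lies.
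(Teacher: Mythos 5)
Your proposal diverges substantially from the paper's proof, and the step you yourself flag as the ``technical core''---amplifying a single-gadget bound across $\Theta(\log D)$ sequentially arranged blocks---is a genuine gap, and as described it would in fact fail. If the contextuality gadgets in consecutive blocks are statistically independent under $q$, then across a Markov cut placed between two blocks the joint distribution factorizes, the non-negative rank (and the rectangle covering number of the support) across that cut is $1$, and the HMM needs to carry essentially no information over the cut; independence makes the cut \emph{easy}, not hard. For the classical dimensions to multiply you would need every block's constraint to remain relevant arbitrarily far into the future, which is precisely the cross-block compression issue you acknowledge but do not resolve. A further subtlety: under the error model of Eq.~\eqref{eq:error} only the support of $q$ is constrained, so a lower bound on the non-negative rank of $q$ itself does not suffice---you would need to lower-bound the minimum non-negative rank over \emph{all} distributions with the same support, i.e.\ a rectangle-covering-type quantity, which is a strictly harder object to control.

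The paper sidesteps composition entirely. It keeps a single persistent $n$-qubit register subjected to adaptively chosen Pauli measurements (inputs select the Pauli, outputs record the result, teleportation gadgets thread the register through a basis-enhanced $2$-gram chain with $D=2^{O(n)}$ internal states). It then identifies translation-form HMMs with ontological models, associates to each reachable stabilizer state a support of hidden variables, and shows via the Mermin--Peres magic square that any three stabilizer states whose stabilizers form a magic square must have empty triple intersection of supports. The quantitative punchline is a direct counting argument: there are $2^{n^2/2+o(n^2)}$ stabilizer states (Lemma~\ref{lem:NoStabilizers}), while Lemma~\ref{lem:square} shows any family of more than $2^{n^2/4+7n/2}$ of them contains a magic-square triple, so no hidden variable lies in the support of more than $m\leq 2^{n^2/4+7n/2}$ states; hence $V\geq |S|/m\geq 2^{n^2/4-O(n)}=D^{\Omega(\log D)}$ in one shot. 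Because all the ``hardness'' lives in the combinatorics of a single register's reachable state set rather than in a product of independent instances, no direct-product or non-negative-rank amplification theorem is ever needed. If you want to salvage your route, you would have to replace independence with a construction in which the memory of every past block is provably required to predict future outcomes---which is essentially what the paper's persistent-register construction does implicitly.
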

\noindent Here, the quantum-enhanced model is based on a $2$-gram model (Fig.~\ref{fig:Clifford}(c)), which is a special case of an HMM. The corresponding quantum circuit representation is shown in Fig.~\ref{fig:Clifford}(a-b), with  $D=2^n$ where $n$ is the number of qubits.

%show that there exists a basis-enhanced HMM with $D$ hidden states, corresponding to a bond dimension of $D$ in the MPO representation, such that any HMM translation form of model that approximates this model must have at least $D^{O(\log D)}$ hidden states.

This result can be understood by considering  
%through a dynamics point of view, where 
the 1D structure of the models  as a time dimension as shown in Fig.~\ref{fig:baynet}(b). The state of the HMM (or the corresponding basis-enhanced $2$-gram model) is encoded as a probability distribution $p_{\ket{\psi_t}}$ (quantum state $\ket{\psi_t}$) over the hidden states of the HMM (virtual bond of the MPO) at the $t$-th time step. The number of hidden states (bond dimension) corresponds to the memory of the system, of which the logarithm is the number of bits (qubits) of memory required to store the state of the system. The inputs and outputs are different measurement basis and measurement results, respectively. In order to simulate the quantum process, the HMM should have enough memory of the previous measurement basis and measurement results to predict future behavior. Within this picture,  the translation form of HMMs is essentially equivalent to hidden variable models (also called ontological models)~\cite{harrigan2007representing,karanjai2018contextuality}. Quantum contextuality formalizes the phenomenon that a measurement result % (determined by a hidden variable) 
of an observable should depend on which commuting observable set (known as a context) the observable belongs to in the given measurement scenario. However, since there are many different commuting sets that include this observable, when it is measured, a hidden variable must memorize which context this observable belongs to in any given measurement scenario. A well-known example of contextuality is associated with 
%As an example, consider the three states that compose 
the Mermin--Peres magic square~\cite{mermin1993hidden,peres1991two}. Our proof strategy for Theorem~\ref{thm:hmm} relies on showing that Mermin--Peres magic squares are very common in stabilizer states~\cite{gottesman1997stabilizer}, and we use that to find a lower bound on the number of hidden states needed to accurately represent stabilizer measurements.

Finally, we 
%step away from the specific models used in our analysis and 
evaluate the relative performance of BBQCs and Bayesian networks on standard machine learning data sets. We focus on the relative performance of HMMs and their basis-enhanced counterparts, but here we use the general HMM graph structure in Fig.~\ref{fig:baynet}(b). As basis-enhanced HMMs are a special case of MPSs, we are able to evaluate the expressive performance of both HMMs and basis-enhanced HMMs efficiently on a classical computer. Specifically, we evaluate both models on the biofam data set~\cite{Mueller2007,ritschard2013exploratory}, which is known to be well modeled by a simple $2$-gram model. Additionally, we evaluate both models on the more difficult SPECT Heart and Promoter Gene Sequences data sets~\cite{Dua:2019}.
% To accomplish this, we train the HMM using the well known Baum--Welch algorithm~\cite{baum1970}, and the basis-enhanced HMM using gradient descent, such that the distribution of the models approximately match those of the data sets. 
%As we are interested in the absolute performance capabilities of the models, we pick the best-performing trained model over ten training instances, and compare their raw performance in terms of the achieved KL divergence on both the training data and the withheld testing data; observing good performance on the latter signals that the trained model is not overfitting to the training data. 
We find that the basis-enhanced HMM outperforms the HMM on both the training and testing data for both the SPECT Heart and Promoter Gene Sequences data sets, and achieves comparable performance on the control biofam data set. These results are summarized in Fig.~\ref{fig:performance_plot}. In addition, we perform a likelihood-ratio test on the goodness of fits of the two models; this measures the statistical significance of the expressivity gap of the two models, accounting for the potential overfitting of the basis-enhanced model due to its having more parameters. We show that the improvement in KL divergence of the basis-enhanced HMM over the HMM is statistically significant on the SPECT Heart and Promoter Gene Sequences data sets to a confidence of $>5\sigma$. These results are summarized in Fig.~\ref{fig:p_value_plot}.

% In particular, we focus on a class of quantum machine learning models based on Bayesian networks, which have a natural quantum circuit interpretation, where we extend the model by allowing general local measurements. Furthermore, for certain subclasses of Bayesian networks our quantum extensions are classically simulable, and therefore our results also prove separations in expressivity between certain classes of classical machine learning models. We also give numerical evidence that this separation in expressivity often holds in practice, by training on standard machine learning data sets and showing a separation in the achieved log-likelihood on both the training and on withheld testing data.

%\subsection{Implications of Results}

Our results have important implications for developing both classical and quantum machine learning methods. 
%As discussed above, we directly showed both analytically and numerically that minimal extensions to classical Bayesian networks can lead to large improvements in the expressive performance of the model. 
Although the source of the advantage mechanisms described above is inspired by quantum correlations, for many classes of Bayesian networks---including $k$-gram and hidden Markov models---our extension still results in classical models, described by  
special cases of MPS or MPO that can be efficiently implemented on classical systems. 
In such cases, our results indicate that with a minimal computational overhead, one can obtain markedly improved modeling of data using novel quantum-inspired classical approaches. 
While a number of classical machine learning techniques are already employing the methods based on tensor network \cite{,PhysRevX.8.031012,glasser2020probabilistic,NIPS2019_8429,stoudenmire2016supervised}, our results demonstrate that one could draw on ideas from quantum foundations to show unconditional efficiency separations for such novel classical models.
%We hope this could offer inspiration for future studies.

Furthermore, for more complicated models, such as 2D Bayesian networks, where our extension cannot be efficiently implemented on a classical computer, our results provide important insights into designing novel quantum machine learning algorithms. We emphasize that in contrast 
with the previously proposed quantum machine learning models, 
which consider generic quantum circuits to provide  quantum correlations, our approach makes use of the minimal extension of classical models.  
%known to give quantum advantages in other contexts. 
This is important, since unstructured quantum circuits are not practical machine learning models due to difficulties associated with  barren plateaus in training landscape~\cite{McClean_2018,2design,marrero2020entanglement} and the no free lunch theorem~\cite{poland2020no}. By restricting our study to minimal quantum extensions of classical machine learning models, we sidestep these issues while still maintaining a quantum advantage over the corresponding classical model. In addition, this minimal approach allows us to understand the origin of quantum advantage, which is essential for efficient 
design of new quantum machine learning models. In particular, our results indicate
%In short, we show an unconditional quantum advantage in generative modeling, and 
%give evidence 
that a practical computational advantage may be obtained on quantum devices in machine learning tasks perhaps in the near future.

\section{Unsupervised Generative Modeling and Minimal Quantum Extensions}\label{sec:bqc}

%One way to understand 
Many unsupervised machine learning tasks can be understood through a probabilistic lens. In this approach the data $\bm{x}$ (e.g.\ $\bm{x}$ could be a vector $\left(x_1,\cdots, x_n\right)$ representing pixels of a handwritten digit) are regarded as being generated identically and independently from an unknown probability distribution $p_{\mathcal D}$ 
%(i.i.d.\ assumption)
~\cite{shalev2014understanding,goodfellow2016deep}. The task of unsupervised learning is to characterize some aspects of this distribution explicitly or implicitly. \emph{Generative models} attempt to represent the entire probability distribution $p_{\mathcal D}$ approximately, thus providing an almost complete characterization of $p_{\mathcal D}$.

Directly representing a probability distribution over $\left(x_1,\cdots, x_n\right)$ requires a number of parameters exponential in $n$. However, assuming some underlying structure on the distribution $p_{\mathcal{D}}$, it is expected that only a polynomial number of parameters is sufficient to approximate  $p_{\mathcal D}$ for most natural distributions. One can draw an analogy to quantum many-body physics: physical states, which play the role of naturally occurring distribution $p_{\mathcal D}$, typically require a polynomial number of parameters to represent, whereas generic states in the entire Hilbert space, like generic distributions with $n$ variables, need an exponential number of parameters~\cite{poulin2011quantum}. Graphical structures with a polynomial number of parameters often constitute efficient representations for generative models, similar to the representation of tensor networks in quantum many-body physics~\cite{orus2014practical,bridgeman2016hand,verstraete2008matrix,vidal2007entanglement,verstraete2004renormalization}.

In what follows,  we focus on a particular type of probabilistic graphical models, called Bayesian networks, and explore one minimal quantum extension of this classical model. This allows us to understand the origin for the underlying quantum advantage, which sheds light on the design of quantum models. 
%The extended quantum model can also be understood as a subclass of tensor network models, which may not be efficiently calculated on a classical computer. 
We  emphasize that our approach is not limited to Bayesian networks and can be extended to other models. 

% Given a model distribution $q$, perhaps the most common cost function used to measure the effectiveness of $q$ at modeling $p$ is the Kullback--Leibler (KL) divergence~\cite{kullback1951information,goodfellow2016deep}:
% \begin{equation}\label{eq:KL}
% D_\text{KL}\left(p\mid\mid q\right)=\sum_{\bm x}p\left(\bm x\right)\log\frac{p\left(\bm x\right)}{q\left(\bm x\right)},
% \end{equation}
% which is non-negative and saturates to 0 when $p=q$. In this paper, we use it to measure the divergence between two generative models in Sec.~\ref{sec:correlations} and also use it to measure the the divergence of data distributions from generative models as a cost function for training in Sec.~\ref{sec:numerics}. 

\subsection{Bayesian Networks and Language Processing}\label{sec:bn}

Bayesian networks are a class of generative models that define a probability distribution through a directed acyclic graph in the following way (see an example in Fig.~\ref{fig:baynet}(a,c)): for each node $x_i$ (associated with a random variable $x_i$), we assign a transition probability $p\left(x_i\mid\text{parents of }x_i\right)$, where the parents of $x_i$ are nodes with edges directed towards $x_i$; if there is no parent node for $x_i$, the transition probability reduces to the marginal probability $p(x_i)$; then, the product of these transition (marginal) probabilities 
\begin{equation}
p\left(x_1,\cdots,x_n\right)=\prod_i p\left(x_i\mid\text{parents of }x_i\right)
\end{equation}
is the final joint probability distribution. 
% Here we abuse the notation of the variable associated with a node, $x_i$, as the notation of the corresponding node.

% k-gram model, hidden Markov model, translation problem
Bayesian networks are useful in natural language processing as statistical language models. Roughly speaking, statistical language models are generative models for language and they are used to generate a probability distribution of ``meaningful'' combinations of word sequences. A good design for statistical language model is crucial to the performance of machine learning for natural language processing such as translation, speech recognition, and natural language generation~\cite{martin2009speech}.

%\begin{figure}
%\centering
%\includegraphics[width=1\linewidth]{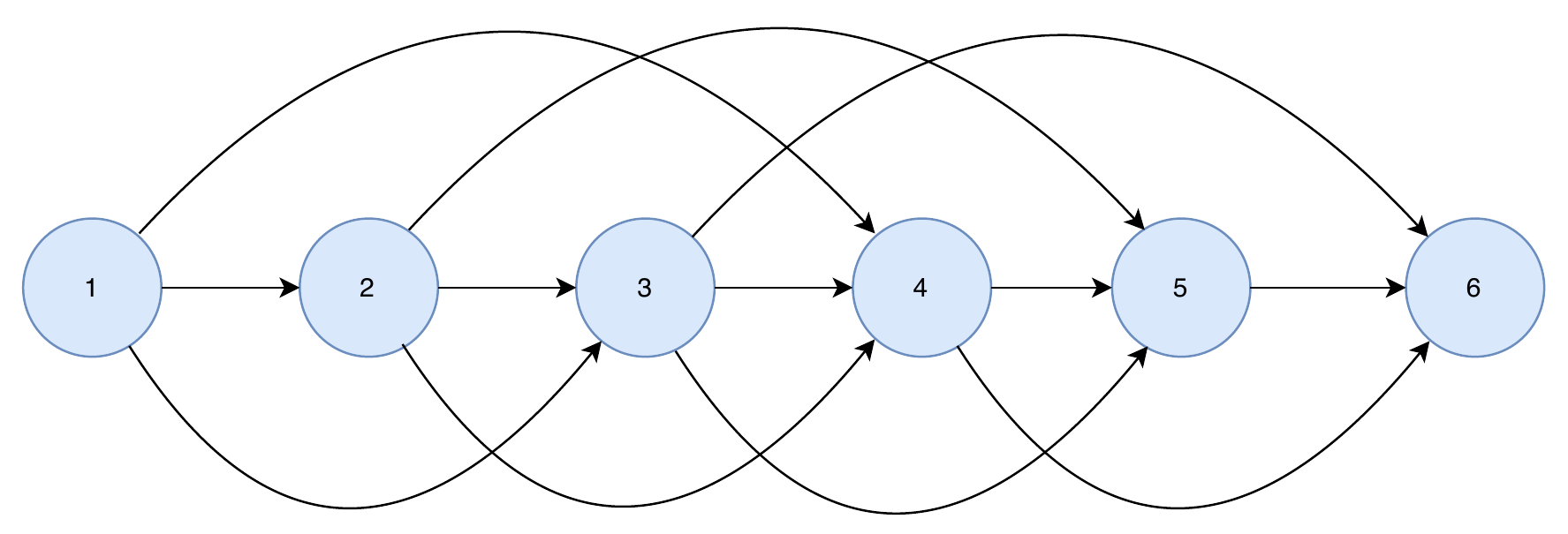}\caption{A $k$-gram model. In this case, $k=4$ because the transition probability has the form of $p(x_l|x_{l-1},x_{l-2},x_{l-3})$, which involves 4 variables. All the variables are visible.
%}%
%\label{fig:kgram}%
%\end{figure}

Historically, prior to the rise of deep learning, one of the most commonly used statistical language models were \emph{$k$-gram models}~\cite{martin2009speech}, which are Bayesian networks on a 1D graphical structure with $k-1$ neighbors connected (see Fig.~\ref{fig:baynet}(a) for an example with $k=4$). Despite their simplicity, certain types of generative neural networks can also be viewed as $k$-gram models, e.g., deep belief nets~\cite{hinton2009deep} even for $k=2$ (see Appendix~\ref{app:relation_models}).

In order to capture more complex correlations, a more complex model, the \emph{hidden Markov model} (HMM), with additional hidden nodes on a 1D graphical structure with nearest-neighbor connections, is introduced as shown in the translation form of Fig.~\ref{fig:baynet}(b). $k$-gram models are a special case of HMMs (see Appendix~\ref{app:relation_models}), as HMMs can store $k$-length correlations into the hidden variables. The graph structure of the HMM is easily generalized for translation problems. From a probabilistic point of view, a translation problem can be considered as a modeling problem for a conditional probability distribution $p\left(\bm{y}\mid\bm{x}\right)$ by a generative model, e.g.\ using the HMM shown in the translation form of Fig.~\ref{fig:baynet}(b) where $\bm{x}$ (represented by the top row visible variables) is a ``sentence'' in the original language and $\bm{y}$ (represented by the bottom row visible variables) corresponds to the translation in the target language with the conditional probability $p\left(\bm{y}\mid\bm{x}\right)$. If the prior probability $p\left(\bm{x}\right)$ can be captured by an HMM in the general form of Fig.~\ref{fig:baynet}(b), the joint probability distribution $p\left(\bm{x},\bm{y}\right)$ could be viewed as a special case of the general form of Fig.~\ref{fig:baynet}(b).

%\begin{figure}
%\centering
%\includegraphics[width=0.9\linewidth]{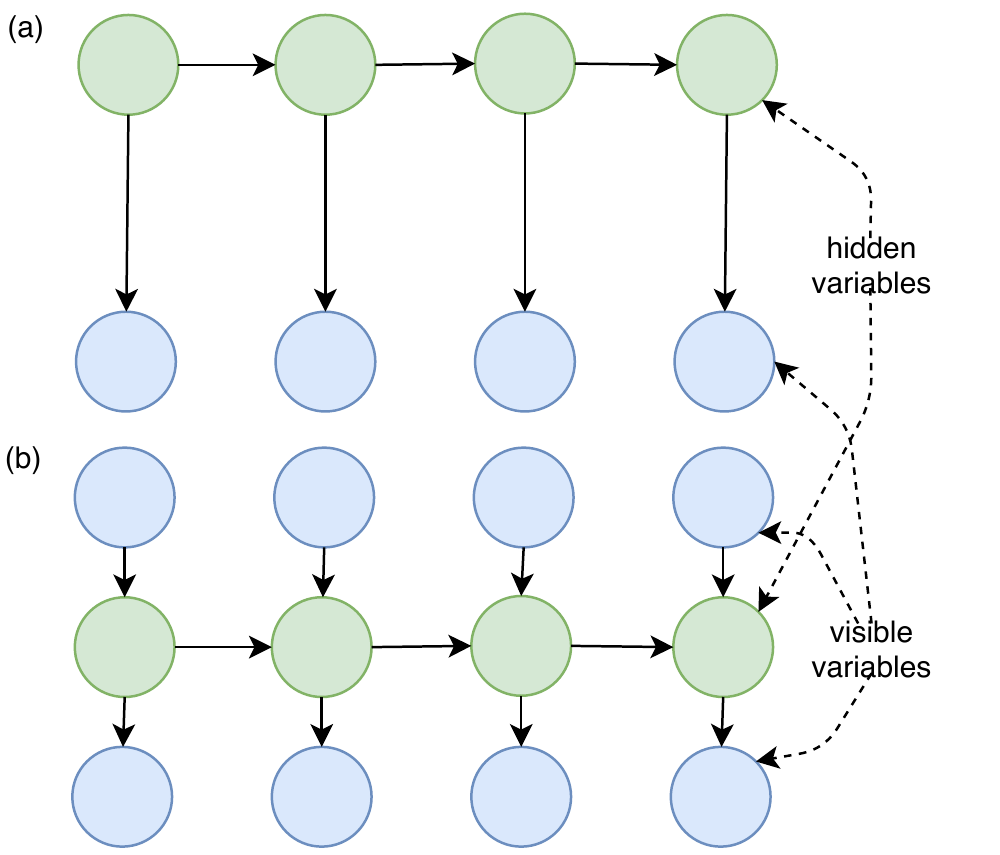}\caption{(a) Hidden Markov model. (b) Hidden Markov model for translation problem. Visible variables and hidden variables are colored as blue and green respectively.}%
%\label{fig:HMM}%
%\end{figure}

\subsection{Bayesian Quantum Circuits and Minimal Quantum Extensions}
% most of the result should be put on appendix
The key to defining our quantum extension of Bayesian networks is the equivalence between Bayesian networks and a restricted class of quantum circuits, which we call \emph{Bayesian quantum circuits} (BQCs; see also Ref.~\cite{low2014quantum}). %The exact mapping can be found in Appendix~\ref{app:equivalence}.
BQCs are defined such that the probability distribution sampled from the quantum circuits, by measuring the visible qubits in the computational basis, is the same as the probability distribution defined by the corresponding Bayesian network. In addition, we define a minimal quantum extension of Bayesian networks, \emph{basis-enhanced Bayesian quantum circuits} (BBQCs), by allowing the final measurements to be in an arbitrary local basis.

\subsubsection{Bayesian Quantum Circuits}

The building block of BQCs are uniformly controlled gates. A uniformly controlled gate is a generalization of a control-$U$ gate, which consists of $k$ control qubits and 1 target qubit~\cite{bergholm2005quantum}: if the control qubits are in the state $\ket{x_1\cdots x_k}$, the target qubit will be applied by a unitary $U\left(x_1\cdots x_k\right)$ (i.e.\ a single qubit unitary determined by $x_1\cdots x_k$). For convenience, we introduce the concepts of control units and target unit for a uniformly controlled gate as shown in Fig.~\ref{fig:baynet}(a). 
\begin{definition}[Bayesian quantum circuits]\label{def:BQN}
A Bayesian quantum circuit consists of a sequence of uniformly controlled gates followed by a measurement of a subset of the qubits in the computational basis, with the following restrictions:
\begin{itemize}
    \item Each uniformly controlled gate only targets a single qubit, reflecting the fact that there is only a single target variable in a transition probability in Bayesian networks.
    \item After being used as a control qubit, the qubit cannot be targeted by a uniformly controlled gate, reflecting the fact that Bayesian networks are defined on directed acyclic graphs.
\end{itemize}
\end{definition}

The exact mapping between Bayesian networks and BQCs can be found in Appendix~\ref{app:equivalence}. The implementation of an arbitrary uniformly controlled gate by elementary gates is in general not efficient, %since it requires the computation of arbitrary boolean functions, 
since it typically consists of an exponential number of standard control gates~\cite{bergholm2005quantum}. However, for most relevant Bayesian networks, the transition probabilities only involve a few variables or are highly structured, which will make the corresponding uniformly controlled gates easy to implement. We discuss further the implementation of BQCs with multi-qubit collective gates in Appendix~\ref{app:MRG}.

\subsubsection{Basis-Enhanced Bayesian Quantum Circuits}
\label{sec:bmbn_features}

As defined above, BQCs can only produce distributions that correspond to Bayesian networks and thus there is no quantum advantage in the expressivity of the model. In principle, there are many possible ways to generalize this model, such as violating the order requirement between target and control units or generalizing the uniformly controlled gates to more general gates. These generalizations will include universal quantum circuits and thus lose resemblance with classical Bayesian networks. To identify the differences between quantum models and their classical counterparts in terms of quantum advantage, we introduce a natural, minimal extension by allowing the measurements to be in other local basis beyond the computational basis. We call this \emph{basis-enhanced Bayesian quantum circuits} (BBQCs). Note that the locality in measurement basis is important; otherwise, the model will be as powerful as universal quantum circuits. 
\begin{definition}[Basis-enhanced Bayesian quantum circuits]
A basis-enhanced Bayesian quantum circuit is a generalization of Bayesian quantum circuit, where the measurements can be in any local basis beyond the computational basis.
\end{definition}

This seemingly modest extension of classical Bayesian networks comes with considerable quantum advantages. For general underlying Bayesian networks, it can be shown that the quantum extension has an exponential improvement in expressive power compared to any ``reasonable'' classical generative models based on computational complexity assumptions (see Appendix~\ref{app:general_graph} for the rigorous proof). However, certain aspects of the complexity proof are unsatisfying. First, it relies on unproven computational complexity assumptions. Second, it does not provide physical insights and understanding on what gives rise to the purported quantum advantage. Therefore, in Sec.~\ref{sec:correlations}, we show explicitly that quantum correlations are the source of quantum advantage for BBQCs. The unconditional separation between classical and quantum models based on quantum correlations is, however, more modest than the separation guaranteed by the complexity-theory based arguments. The analysis can be potentially generalized to other models beyond Bayesian networks as discussed in Sec.~\ref{sec:beyondBayesian}.

\subsubsection{BBQCs as Tensor Networks}

Here, we remark that the quantum model BBQC is still a special case of tensor networks. We can use the $k$-gram model as an example as shown in Fig.~\ref{fig:baynet}(a). Clearly, the BQC is a tensor network. Since the qubits are arranged on a line, we regard this tensor network as an MPS. The bond dimension is bounded by $2^{k-1}$. The exponent is the maximum amount of information transmitted through a qubit and the unitary to change the measurement basis does not increase the bond dimension. Generally speaking, if the degree of the graph is bounded, the bond dimension around a qubit in the tensor network is also bounded. Therefore, a BBQC can still be understood as a tensor network.

\section{Provable Expressivity Separation Through Quantum Correlations}\label{sec:correlations}

To demonstrate how quantum correlations give rise to quantum advantages, we compare the power of BQCs and BBQCs in generating sequential data. We show that, at least for some toy models, several fundamental non-classical characteristics of quantum theory, i.e. non-locality and contextuality, can be used as resources of quantum advantage for generative models. At the same time, BBQCs are special classes of tensor networks. Our proof also demonstrates that ideas from quantum correlations can be used to show unconditional separation between purely classical models ($k$-gram versus MPS or HMM versus MPO). 
\subsection{Error Models}\label{sec:errors}
Given a target probability distribution $p_{\mathcal D}$ and a distribution generated by a generative model $p$, one of the most commonly used cost function to measure the effectiveness of $p$ at modeling $p_{\mathcal D}$ is the forward KL divergence~\cite{kullback1951information,goodfellow2016deep}:
\begin{equation}\label{eq:KL}
D_\text{KL}\left(p_{\mathcal D} \mid\mid p\right)=\sum_{\bm x}p_{\mathcal D}\left(\bm x\right)\log\frac{p_{\mathcal D}\left(\bm x\right)}{p\left(\bm x\right)},
\end{equation}
which is non-negative and lower-bounded by 0 when $p=p_{\mathcal D}$. Since KL divergence is asymmetric, one may also consider the reverse KL divergence, $D_\text{KL}\left(p \mid\mid p_{\mathcal D}\right)$. The choice of forward versus reverse KL divergence when training unsupervised learning models reflects different priorities in the trained distribution~\cite{goodfellow2016deep}. 

%In this paper, we use it to measure the divergence between two generative models in Sec.~\ref{sec:correlations} and also use it to measure the the divergence of data distributions from generative models as a cost function for training in Sec.~\ref{sec:numerics}. 

To compare the expressive power of classical versus quantum models in generative modeling, we use the KL divergence to measure how effective classical models can generate a distribution originating from the corresponding minimally-extended quantum model. In particular, we denote the probability distribution generated from BQCs as $p$ and from BBQCs as $q$, and we investigate what is the separation between $p$ and $q$ in terms of expressive power. 

The error model we use in the following theoretical analysis is to require both the forward KL divergence $D_\text{KL}\left(q \mid\mid p\right)$ and reverse KL divergence $D_\text{KL}\left(p \mid\mid q\right)$ to be finite. This is equivalent to Eq.~(\ref{eq:error}).
Thus, $p$ approximating $q$ under this error model is a weaker requirement than a small divergence of $p$ from $q$. Nevertheless, we now show %, in the rest of Sec.~\ref{sec:correlations},
that quantum correlations, such as nonlocality and contextuality, give rise to quantum advantages. 

Before proceeding, we note that various other common error models can be considered. One widely used one, multiplicative error, is a stronger requirement and a less realistic error model than the finiteness of KL divergence. This is used in our complexity theory based proof of quantum advantage in Appendix~\ref{app:general_graph}. In Sec.~\ref{sec:beyond}, we discuss additional error models that are more realistic and more robust to small perturbations of model parameters. Relations among error models are explained in Appendix~\ref{app:errors}.

% $D_{\text{KL}}\left(p\mid\mid q\right)$ being finite implies closeness under this model. Thus $p$, approximating $q$ under this error model is a weaker requirement than a small divergence of $p$ from $q$. For the rest of Sec.~\ref{sec:correlations}, we will adopt this error model to separate the expressivities of Bayesian networks and basis-enhanced Bayesian networks. Robustness under this error model also implies robustness under the KL divergence.

\subsection{A Toy Model: $k$-gram Models and Quantum Nonlocality}

\begin{figure}
\centering
\includegraphics[width=1\linewidth]{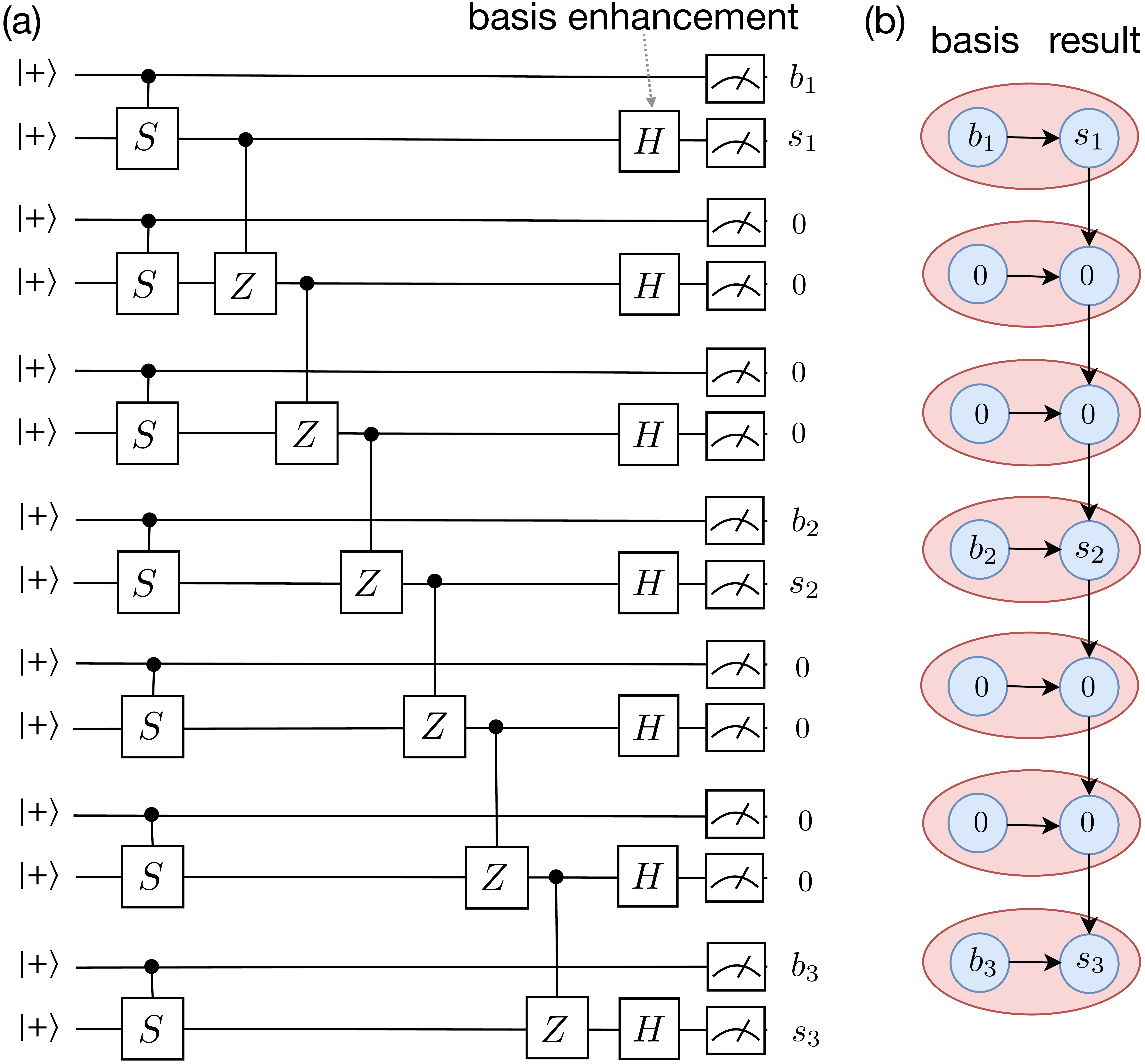}\caption{
(a) An example of a quantum $2$-gram model. In this example, each pair of 2 qubits (counting from the top) corresponds to 1 variable. This BBQC can produce a ($4+3$)-qubit cluster state and measure each qubit on the cluster state in either $X$ or $Y$ basis. For each pair, the first qubit is used to choose the measurement basis, i.e., if it is measured to be 0, the second qubit in this pair will be measured in the $X$ basis otherwise in $S^\dag XS =Y$ basis. The second qubit is used as the ``physical'' qubit in the cluster state. The cluster state can be extended to be arbitrarily long. Among all pairs, three pairs are measured respectively in basis $b_i$ with output results $s_i$. In-between the three pairs, there is an even number of ancillary pairs. These ancillary pairs are measured in the $X$ basis with outputs postselected to be $0$. %Each of the qubits here is represented by 2 qubits in (a).
There is a non-zero probability to get a string that satisfies the constraint in Eq.~(\ref{eq:nonlocality}).
(b) The corresponding 2-gram model.
}
\label{fig:quantum2gram}%
\end{figure}

In this section, we prove Theorem~\ref{thm:kgram}.
The separation of 2 versus $O\left(n\right)$ between the basis-enhanced $2$-gram model and classical $k$-gram models (Fig.~\ref{fig:baynet}(a)) is demonstrated through an example constructed from three-partite Bell tests of a GHZ state~\cite{mermin1990extreme,greenberger1990bell}. The GHZ state is embedded in a $n$-qubit 1D cluster state~\cite{raussendorf2001one}, such that measurement on $n-3$ qubits in the $X$ basis will produce a GHZ state (up to Pauli corrections according to the measurement results). A similar embedding was also used in Ref.~\cite{barrett2007modeling,bravyi2018quantum}. The basis-enhanced $2$-gram model is shown in Fig.~\ref{fig:quantum2gram}(a), which can be verified directly to be a BBQC, where each variable corresponds to two qubits. 

The measurement result, $b_i$, of the first qubit in the $i$th pair plays the role of choosing a measurement basis for the second qubit: $b_i = 0$ corresponds to measurement in the $X$ basis, and $b_i = 1$ corresponds to measurement in the $Y = S^\dag XS$ basis for the second qubit. All of the second qubits in each pair form a cluster state because they are connected through control-Z gates and the initial states are all $\ket{+}$. Suppose we choose three qubits to form the GHZ state and measure the remaining qubits according to Fig.~\ref{fig:quantum2gram}(b). The resulting quantum state will be a GHZ state, where the probability to get $b_i$ and $s_i$ is:
\begin{equation}
q\left(b_1s_1b_2s_2b_3s_300\ldots00\right)\propto 1+i^{b_1+b_2+b_3}(-1)^{s_1+s_2+s_3},
\end{equation}
where $b_i$ and $s_i$ denote the measurement result from the first and second qubit of the $i$th pair. When $b_1\oplus b_2\oplus b_3=0$, the strings generated by this model with non-zero probability contain and only contain $b_i$ and $s_i$ constrained by:
\begin{equation}\label{eq:nonlocality}
i^{b_1+b_2+b_3}(-1)^{s_1+s_2+s_3}=1.
\end{equation}
It can be shown that any local hidden variable theory (i.e. $s_i=s_i\left(\lambda, b_i\right)$ does not depend on $b_{j\ne i}$, where $\lambda$ is the hidden variable) cannot satisfy this equation~\cite{greenberger1990bell,bravyi2018quantum}.

We now prove that, for any $k<\left\lfloor\left(n-1\right)/2\right\rfloor$, a classical $k$-gram model cannot approximate the probability distribution $q$ generated by the above BBQC up to the error defined in Eq.~\eqref{eq:error}. We show this by reducing the classical model to a local hidden variable theory in some sense. However, classical $k$-gram models are not strictly a ``local'' theory since there is information flow from the left-most to the right-most nodes even if $k$ is a constant. There is causal influence between any pairs of nodes, i.e. a $k$-gram model can simulate the scenario that any node could communicate, though possibly only one-way, to a node on the right. In order to establish nonlocal correlations among 3 variables, communicating $2^{2\cdot(3-1)}=16$ bits of information is 
sufficient. One can cut this information flow by measuring the variables in-between. Rigorously speaking, it means one needs to show the corresponding conditional probability of the $k$-gram model is described by a local hidden variable theory:
% Our strategy is to prove that the classical model can be reduced to a local hidden variable theory in some sense. However, even the classical $k$-gram model (Fig.~\ref{fig:kgram}) is not strictly a ``local" theory even if $k$ is a constant since there is information flow from the left-most to the right-most nodes. There is causal influence between any pairs of nodes, i.e. a $k$-gram model can simulate the scenario that any node could communicate (although possibly only one-way) to a node on the right (in order to establish nonlocal correlations among 3 variables, communicating $2^{2\cdot(3-1)}=16$ bits information is enough). One way to eliminate the information flow between distant variables is to measure the variables in the middle. Rigorously speaking, we need to prove the corresponding conditional probability of the classical model is described by a local hidden variable theory by directly calculating according to the structure of $k$-gram model:
\begin{align}\label{eq:HVT}
p_C\left(b_1s_1b_2s_2b_3s_3\mid
\text{other variables are }0\text{s}\right) \nonumber \\
=\frac{f_1\left(s_1,b_1\right)f_2\left(s_2,b_2\right)f_3\left(s_3,b_3\right)}{\sum_{s_i,b_i}f_1\left(s_1,b_1\right)f_2\left(s_2,b_2\right)f_3\left(s_3,b_3\right)},
\end{align}
% \begin{widetext}
% \begin{equation}\label{eq:HVT}
% p_C\left(b_1s_1b_2s_2b_3s_3\mid
% \text{all other variables are }0\text{s}\right)=\frac{f\left(s_1,b_1\right)f\left(s_2,b_2\right)f\left(s_3,b_3\right)}{\sum_{s_i,b_i}f\left(s_1,b_1\right)f\left(s_2,b_2\right)f\left(s_3,b_3\right)},
% \end{equation}
% \end{widetext}
where $f_i\left(s_i,b_i\right)$ is the product of the terms $p\left(b_{l+k-1}s_{l+k-1}\mid b_{l}s_{l}\ldots b_{l+k-2}s_{l+k-2}\right)$ involving $s_i,b_i$ while setting other variables to be 0. Because the three variables are chosen to be further than $2k$ apart, each product only involves one variable. We can thus normalize $f_i\left(s_i,b_i\right)$ to be $p\left(s_i\mid b_i,\lambda\right)$ where $\lambda$ is determined by the measurement basis and results, as well as each term in the $k$-gram model, but $\lambda$ does not depend on $b_{j\ne i}$. This shows that Eq.~\eqref{eq:HVT} can be described by a local hidden variable theory and thus completes the proof of Theorem~\ref{thm:kgram}.

We note that the 2 vs.\ $O\left(n\right)$ separation still holds under the error in Eq.~\eqref{eq:error}, implying a separation also under the KL divergence. The circuit in Fig.~\ref{fig:quantum2gram} is essentially the same as the one used in Ref.~\cite{bravyi2018quantum} other than the boundary conditions. However, their result cannot be applied directly here since the $k$-gram model is not a constant depth classical probabilistic circuit. Concretely, $k$ does not correspond to the circuit depth and the ``light-cone'' scales with the system size even when $k$ is small. 
On the other hand, hidden Markov models with bond dimension 6 could simulate this basis-enhanced 2-gram model. We give an explicit construction in Appendix~\ref{app:hmm}.

%Because the three chosen variables are distant (larger than $2k$), each product only involves one variable (a variable is determined by both $s$ and $b$). We can normalize $f$ to be $p\left(s_i\mid b_i,\lambda\right)$ where $\lambda$ is determined by the measurement basis and results, and each term in the $k$-gram model. So it cannot produce the language such that when all the variables except $s_i,b_i$ with $i=1,2,3$ are 0s and $b_1\oplus b_2\oplus b_3=0$, $s_i,b_i$ satisfy Eq.~\eqref{eq:nonlocality}. This completes the proof of Theorem \ref{thm:kgram}.

% \begin{theorem}[$k$-gram models and quantum non-locality]\label{thm:kgram}

\subsection{Hidden Markov Models and Quantum Contexuality}\label{sec:hmm}
We now study basis-enhanced HMMs (the translation form in Fig.~\ref{fig:baynet}(b)) in the context of translation problems. We find that any classical HMM requires $D^{\Omega\left(\log D\right)}$ hidden variables in order to approximate a basis-enhanced HMM with $D$ hidden variables, under the error model of Eq.~\eqref{eq:error}. The separation originates from quantum contextuality---in particular, our proof is constructed from the Mermin--Peres Magic square~\cite{mermin1993hidden,peres1990incompatible}.

Our approach to the result is as follows. First, we discuss hidden variable theories---more precisely, \emph{ontological} theories~\cite{kochen1975problem,harrigan2007representing}---and show that they are equivalent to classical hidden Markov models. Then, we give a lower bound on the number of ontological states needed to simulate Pauli measurements on stabilizer states using the Mermin--Peres magic square~\cite{peres1990incompatible,mermin1990simple,aravind2002simple}. Finally, we discuss how basis-enhanced 2-gram models can efficiently simulate Pauli measurements on stabilizer states, proving our result.

\subsubsection{Ontological Theories and Hidden Markov Models}\label{sec:otethmm}

First, we give a description of hidden variable theories (more precisely, \emph{ontological} theories)~\cite{kochen1975problem,harrigan2007representing} in terms of hidden Markov models. An ontological theory is characterized at any moment by a state variable $\lambda_i\in\left\{\lambda_1,\lambda_2, \cdots, \lambda_V \right\}$, which we assume completely determines the resulting distribution of the measurement outcomes of various observables. In particular, the model assumes a quantum state $\ket{\psi}$ is encoded as a probability distribution over hidden variables as $p_{\ket{\psi}}(\lambda_i)$, where $\sum_i p_{\ket{\psi}}(\lambda_i) = 1$, and the measurement output from measuring an observable $\hat{O}$ is described as 
\begin{equation}\label{eq:ontological}
p_{\ket{\psi}}(y_i \mid \hat{O}) = \sum_i p(y_i \mid \lambda_i, \hat{O}) p_{\ket{\psi}}(\lambda_i),
\end{equation}
where $p_{\ket{\psi}}(y_i \mid \hat{O})$ is the quantum mechanical measurement output probability for output $y_i$ and $p(y_i \mid \lambda_i, \hat{O})$ is an indicator function independent of the quantum state $\ket{\psi}$. Below, we use the following notations as illustrated in Fig.~\ref{fig:new_HMM}(a). After a measurement of an observable $\hat{O}_{x_i}$ from some restricted set of observables $\{\hat{O}_{x_i}\}$ labeled by $x_i$, and upon obtaining a measurement result $y_i$ with probability $p(y_i\mid\lambda_i,x_i)$, there is also generally a transition probability to another state $\lambda_j$ with probability $\Gamma_M\left(\lambda_j\mid\lambda_i\right)$, where $M=\left(x_i,y_i\right)$ characterizes the measurement and its result.

\begin{figure}
\centering
\includegraphics[width=1\linewidth]{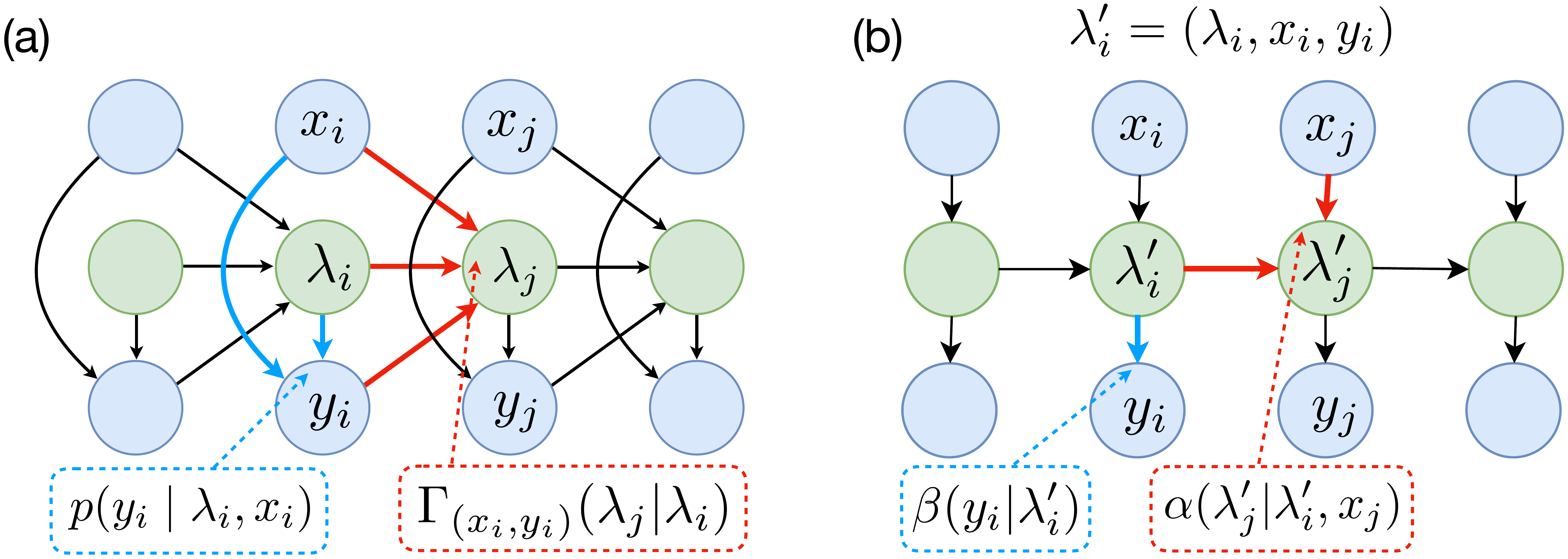}\caption{
(a) The ontological model in terms of Bayesian networks. (b) The standard hidden Markov model.
}
\label{fig:new_HMM}%
\end{figure}

As discussed in Sec.~\ref{sec:bn}, an HMM used for translation problems is a Bayesian network characterized at any moment by a hidden state $\lambda_i^\prime$, with some input $x_i$ and transitions to a new internal state $\lambda_j^\prime$ with probability $\alpha\left(\lambda_j^\prime\mid\lambda_i^\prime,x_j\right)$, and some probability $\beta\left(y_i\mid\lambda_i^\prime\right)$ emitting a symbol $y_i$ (see Fig.~\ref{fig:new_HMM}(b)). We note that if we set $\lambda_i^\prime=(\lambda_i,x_i,y_i)$,
this is identical to an ontological theory in the above paragraph.

% , with
% \begin{equation}
%     \sum_{M=(x,y)}\Gamma_M\left(\lambda^\prime\mid\lambda\right)\beta\left(y\mid\lambda,x\right)=\alpha\left(\lambda^\prime\mid\lambda\right).
% \end{equation}\todo{confused}
% Note that the form of this transition probability is exactly the form of a transition probability in a hidden Markov model, from internal state $\lambda$ to $\lambda^\prime$.
% Correspondingly, $\Gamma_{M}\left(\lambda^\prime\mid\lambda\right)=0$ for all $\lambda^\prime$ implies that measuring $\hat{O}_x$ on the state $\lambda$ will never yield the measurement result $y$. We will see that under the error model of Eq.~\eqref{eq:error2}, this is the condition we are interested in.

\subsubsection{Ontological Theories Representing Quantum States}\label{sec:Ontological}

Let us now consider how ontological theories can simulate measurements on a quantum system. We follow Ref.~\cite{karanjai2018contextuality} on the discussion of ontological models, only keeping concepts that are relevant for our goal. We use Mermin--Peres magic squares explicitly to demonstrate the advantage of quantum contextuality.

A naive way to simulate a quantum system subject to sequential measurements is by recording each quantum state $\ket{\psi_i}$ that the system could generate as its own state variable $\lambda_i$. Though this encodes all information in a quantum theory, there is a large overhead in terms of the number of internal states $\lambda_i$ needed, depending on which classes of circuits are modeled. We thus consider encodings that allow an internal state $\lambda_i$ to be shared by different quantum states $\ket{\psi}$. In this case, each quantum state is encoded as a probability distribution over $\left\{\lambda_1,\lambda_2,\cdots, \lambda_V\right\}$. Since we consider the error model in Eq.~\eqref{eq:error}, we only need to consider if a measurement probability is zero or not, while the precise values of the probabilities are not important. Thus, a quantum state can be associated with 
%, perhaps with some overlap depending on the measurement history of the system being and the probabilistic internal transitions of the ontological theory.
% For example, if we only have the following three states and observable involved: $\{\ket0,\ket1,\ket+,\ket-\}$ and $\{Z\}$, only two hidden variables are enough to generate the same measurement outcomes by assigning $\ket0\sim\{\lambda_1\},\ket1\sim\{\lambda_2\}$ and $\ket\pm\sim\{\lambda_1,\lambda_2\}$. Here we use a subset to correspond to a quantum state because we do not care the value of non-zero probability (in general case, a quantum state corresponds to a probability distribution of hidden variables). Later we will use
%In particular, we will be interested in 
a \emph{support}% (a subset of $\{\lambda_1,\lambda_2,\cdots\}$)
\begin{equation}
    \operatorname{supp}\left(\ket{\psi}\right) = \left\{ \lambda_i \mid p_{\ket{\psi}}(\lambda_i) \neq 0 \right\},
    \label{eq:supp}
\end{equation}
which means the subset of internal states that the ontological theory could be in when representing the quantum state $\ket{\psi}$. %, over all possible measurement histories.

As illustrated in Fig.~\ref{fig:baynet}(b), we interpret the translation form of the HMM from a dynamical point of view. The state of the HMM is encoded as a probability distribution $p_{\ket{\psi_t}}$ over the hidden states of the HMM at the $t$-th time step. The quantum state at time $t$, $\ket{\psi_t}$, depends on all the previous measurement outcomes, $M_t^{\text{past}} \equiv \left\{M_1,M_2,\cdots,M_{t-1}\right\}$. Thus, in order to faithfully simulate the quantum process, the HMM should have enough memory about all previous measurement bases and outcomes to predict future behavior. The number of hidden states corresponds to the memory of the system. We could define the union of all the states at time $t$ resulted from different measurement outcomes, $\left\{\ket{\psi_t (M_t^{\text{past}})}\right\}$, but there is ambiguity in setting the weights for different measurement histories. However, under the error model in Eq.~\eqref{eq:error}, one only needs to care about whether the probabilities are nonzero or not. Each measurement history can be associated with a support of hidden variables, i.e. $\operatorname{supp}\left(\ket{\psi_t (M_t^{\text{past}})}\right)$, and the union over different histories can be defined as the union of the support spaces. We emphasize that this is well-defined because there is no interference in the hidden Markov model, i.e.\ summation of different histories cannot be cancelled.

% Strictly speaking, we cannot associate a probability distribution over $\{\lambda_1,\lambda_2,\cdots, \lambda_V\}$ to a quantum state directly. We could define a probability distribution for a history of previous measurement (sequence of $M$s). In quantum mechanics, a history defines a quantum state. However, there might be different measurement histories to reach $\ket\psi$. % Starting from some certain hidden variable, after a measurement history (a sequence of $M$), there is a definite probability distribution in the ontological theory and there is a definite quantum state in quantum theory. 
%  In this case, we may try to define %$supp(\ket\psi)$ as 
%  the union of all the distributions of measurement histories which define the state $\ket{\psi}$. But it is hard to get well-defined probability distribution for a quantum state since it's ambiguous to set the weights of different histories. Fortunately, under the error model in Eq.~(\ref{eq:error}), each history is associated with a support of hidden variables then the union over different histories becomes unique. Notice that this is well-defined because there is no interference in hidden Markov model, i.e., summation of different histories can not be cancelled.

Naively, one might believe that it is possible to encode $2^V-1$ quantum states using only $V$ ontological states, since a set with $V$ elements has $2^V-1$ nontrivial subsets. However, in order for the ontological theory to make the same predictions as the quantum theory, there are restrictions on which subsets of hidden variables are used to label quantum states. For instance, if two states $\ket{\psi_1}$ and $\ket{\psi_2}$ are eigenstates with different eigenvalues of an allowed observable $\hat{O}$, then we must have $\operatorname{supp} \left(\ket{\psi_1}\right)\cap \, \operatorname{supp} \left(\ket{\psi_2}\right)=\emptyset$.
%there is a so-called \emph{separating measurement} in $\mathcal{O}$ between two states $\ket{\psi_1}$ and $\ket{\psi_2}$---i.e. an observable $\hat{O}$ such that $\bra{\psi_1}\hat{O}\ket{\psi_1}=-\bra{\psi_2}\hat{O}\ket{\psi_2}=\pm 1$---then $\operatorname{supp}\left(\ket{\psi_1}\right)\cap\operatorname{supp}\left(\ket{\psi_2}\right)=\emptyset$, 
%as otherwise their supports has overlap thus leads to a contradiction by measuring $\hat{O}$. 
As an example to illustrate this, suppose $\hat{O}\ket{\psi_1}=-\ket{\psi_1}$ and  $\hat{O}\ket{\psi_2}=\ket{\psi_2}$, and there is at least one overlapping hidden variable in the support denoted as $\lambda_i$. Let us first assume the output is always $+1$ when measuring $\lambda_i$ by $\hat{O}$, i.e. $p\left(+1 \mid \lambda_i, \hat{O}\right) = 1$; then, there is a nonzero probability for both $\ket{\psi_1}$ and $\ket{\psi_2}$ to obtain the measurement result $+1$ according to Eq.~\eqref{eq:ontological}, which contradicts the prediction from quantum measurements. Similarly, assuming the output of measuring $\lambda_i$ by $\hat{O}$ is always $-1$ or nonzero on both $\pm1$ also leads to the same contradiction.

%they could both be represented by the same hidden state $\lambda$ and give the same measurement outcome when measuring $\hat{O}$.
%Another a little bit less trivial example is that even though $\innerp{\psi_1}{\psi_2}\ne0$, if there exists an $M$ (an observable with a specific measurement result) s.t. the resulting states are orthogonal $\innerp{\phi_1}{\phi_2}=0$, then we still have $supp(\ket{\psi_1})\cap supp(\ket{\psi_2})=\emptyset$. The reason is that, suppose there is a common hidden variable $\lambda$

A less trivial example is given by considering quantum contextuality, where the intersection among the supports of several quantum states should still be empty even if there is not a pair of states that are orthogonal. We now proceed to use the Mermin--Peres square to construct such triplets of states.

\subsubsection{No Common Hidden Variables in the Mermin--Peres Magic Square}
In the following, we focus our attention on stabilizer states~\cite{gottesman1997stabilizer}. We first extend in a more formal way the discussion between memory and contextuality through the Mermin--Peres magic square example mentioned in Sec.~\ref{sec:summary}. Then, in Sec.~\ref{sec:mpbound}, we prove a lower bound on the number of hidden states required to simulate Pauli measurements on all stabilizer states. 

Given three stabilizer states 
 $\ket{\psi_1}$, $\ket{\psi_2}$, and $\ket{\psi_3}$, let $\left\{\hat{O}_{i1}\right\}$, $\left\{\hat{O}_{i2}\right\}$, and $\left\{\hat{O}_{i3}\right\}$ be their corresponding stabilizer groups. If 
\begin{equation}\label{eq:observables}
\left\{\hat{O}_{i1}\right\}\cup\left\{\hat{O}_{i2}\right\} \cup \left\{\hat{O}_{i3}\right\}
\end{equation}
contains nine observables which form a Mermin--Peres square as shown in Table~\ref{tab:square}, then we show by contradiction that it must be true that
\begin{equation}\label{eq:3states}
\operatorname{supp}\left(\ket{\psi_1}\right)\cap \operatorname{supp}\left(\ket{\psi_2}\right)\cap \operatorname{supp}\left(\ket{\psi_3}\right)=\emptyset.\end{equation}

\begin{table}[ht]
  \begin{center}
    \begin{tabular}{c|c|c|c|l} % <-- Alignments: 1st column left, 2nd middle and 3rd right, with vertical lines in between
      \cline{2-5}
     $\ket{\psi_1}$& A & a & Aa & $+1$\\
      \cline{2-5}
    $\ket{\psi_2}$&  B & b & Bb& $+1$\\
      \cline{2-5}
     $\ket{\psi_3}$& AB & ab & ABab &$+1$\\
     \cline{2-5}
     & $+1$& $+1$ & $-1$&
    \end{tabular}
  \end{center}
  \caption{A Mermin--Peres magic square. The commutation relations among these operators are given in the following equations:
  $\left[A,a\right]=\left[B,b\right]=\left[A,B\right]=\left[a,b\right]=0$ and $
\left\{A,b\right\}=\left\{a,B\right\}=0$.}\label{tab:square}
\end{table}

Concretely, we consider an example %in Sec.~\ref{sec:summary} 
to illustrate this idea and leave the general proof in Appendix~\ref{app:stab}.
Let:
\begin{align}\label{eq:eg}
\ket{\psi_1}&=\ket{00}, \ket{\psi_2}=\ket{++}, \nonumber \\ \ket{\psi_3}&= \frac{\ket{00}+\ket{01}+\ket{10}-\ket{11}}{2}, \\
A&=Z_1, a=Z_2, B=X_2, b=X_1. \nonumber
\end{align}
These stabilizer states and stabilizers form a Mermin--Peres magic square as in Table~\ref{tab:square}. Here, we show that the intersection among the supports of the above three states must be empty. Suppose instead the intersection contains a hidden variable $\lambda_i$. We consider a measurement of the stabilizer $ABab=Y_1Y_2$. Since $\ket{\psi_3}$ is its eigenstate of $ABab$ with eigenvalue $+1$, we have that for any ontological theory in state $\lambda_i$ belonging to the support of $\ket{\psi_3}$:
\begin{equation}
p\left(y=+1\mid \hat O_y=Y_1Y_2,\lambda_i\right)=1
\end{equation}
in order to make the same prediction with quantum mechanics and since $\sum_i p_{\ket{\psi_3}}(\lambda_i) = 1$. In addition, there exists a $\lambda_j$ such that
\begin{equation}
    \Gamma_{\left(Y_1Y_2,+1\right)}\left(\lambda_j\mid \lambda_i\right)>0.
\end{equation}
Since $\lambda_i$ also belongs to the supports of $\ket{\psi_1}$ and $\ket{\psi_2}$, $\lambda_j$ must also belong to the supports of:
\begin{eqnarray}
\frac{(1+Y_1Y_2)}{2}\ket{\psi_1}&\propto &\frac{1}{\sqrt{2}}\left(\ket{00}-\ket{11}\right)\\\nonumber
\frac{(1+Y_1Y_2)}{2}\ket{\psi_2}&\propto &\frac{1}{\sqrt{2}}\left(\ket{01}+\ket{10}\right),
\end{eqnarray}
which are the resulting states after measuring $Y_1Y_2$ on states $\ket{\psi_1}$ and $\ket{\psi_2}$ and getting the measurement result $+1$. However, these two states are orthogonal and thus cannot share a common $\lambda_j$ as explained in Sec.~\ref{sec:Ontological} (e.g.\ consider measuring $Z_1Z_2$ or $X_1X_2$). We thus arrive at a contradiction and the three states cannot share a common hidden variable $\lambda_i$. It is straightforward to extend this example to more general stabilizers with the same commutation relations as those in Table~\ref{tab:square}, which is detailed in Appendix~\ref{app:stab}.

\subsubsection{Bounding the Efficiency of Hidden Markov Models}\label{sec:mpbound}

We now prove a lower bound, which has been used in Ref.~\cite{karanjai2018contextuality}, for the number of hidden states needed in the HMM. Denote $S$ as the set of all the possible quantum states appearing in the quantum system we want to simulate using an ontological theory; in our case here, it is the set of all stabilizer states. Denote $s$ as a subset of $S$ such that
\begin{equation}
\bigcap_{\ket\psi\in s}\operatorname{Supp}\left(\ket{\psi}\right)\ne\emptyset.
\end{equation}
Let
\begin{equation}
m=\max_s\left\lvert s\right\rvert.
\end{equation}
Then, the number of state variables, $V$, needed in an ontological theory in order to simulate the quantum system is not smaller than $\left\lvert S\right\rvert/m$. The reason is illustrated in Fig.~\ref{fig:hypergraph}.

\begin{figure}[tbp]
\includegraphics[width=0.5\textwidth]{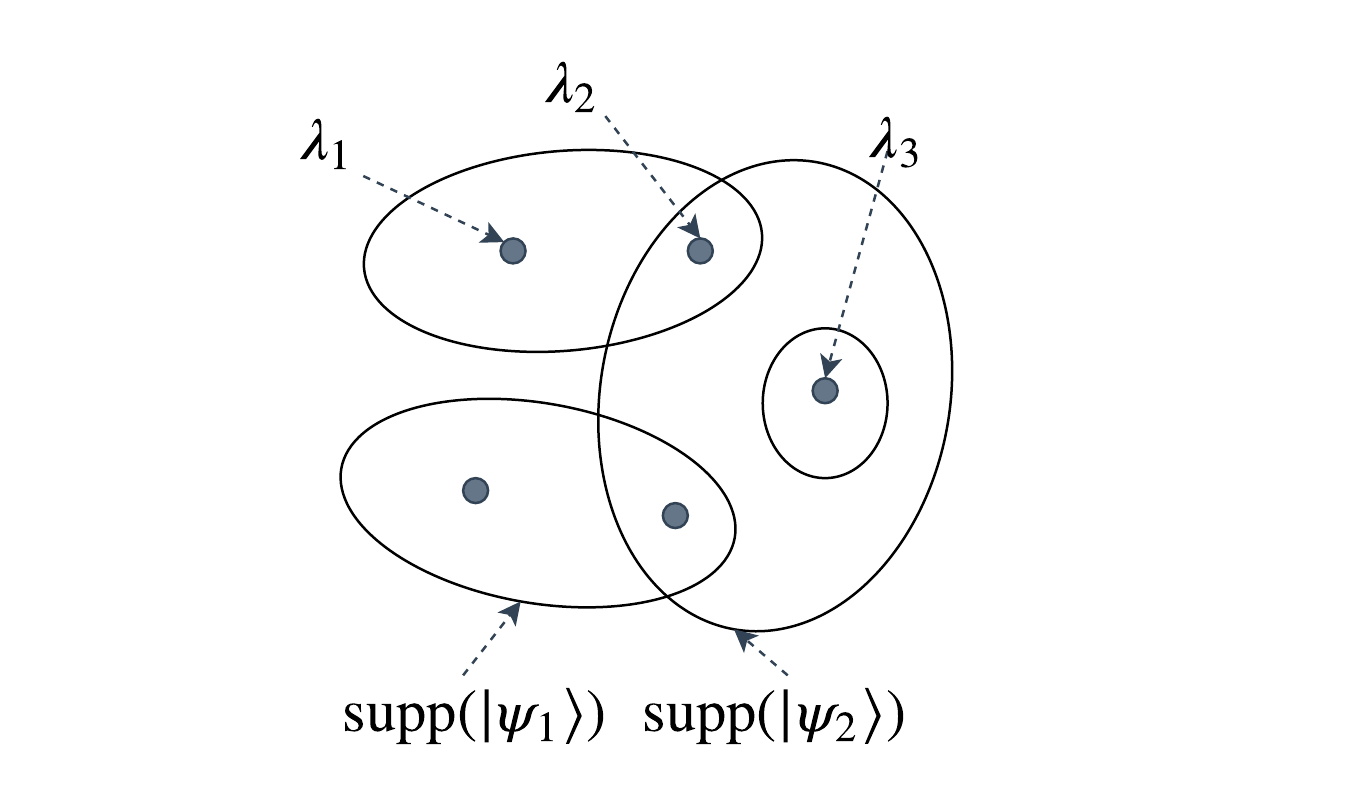}
\caption{Each circle corresponds to a support of a quantum state; there are thus $\left\lvert S\right\rvert$ circles. Each dot corresponds to a hidden variable. Denote the number of dots as $V$, and the number of circles that a dot belongs to is at most $m$. Imagine that we eliminate a dot and its associated circles each time. After $V$ steps, all of the circles are eliminated. Since each time we can eliminate at most $m$ circles, we have $Vm\ge\left\lvert S\right\rvert$.
}
\label{fig:hypergraph}
\end{figure}

The following two lemmas show that by allowing for all stabilizer states and all Pauli measurements, we have large $\left\lvert S\right\rvert$ and relatively small $m$. We upper bound $m$ by using contextuality and proving the existence of the Mermin--Peres square.
\begin{lemma}[Proposition 1 in Ref.~\cite{aaronson2004improved}] \label{lem:NoStabilizers}
The total number of stabilizer states is
\begin{equation}
\left|S\right|=2^{n^2/2+o(n^2)},
\end{equation}
where $n$ is the number of qubits.
\end{lemma}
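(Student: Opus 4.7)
The plan is to count stabilizer states by parameterizing each one as a pair consisting of (i) a maximal abelian subgroup of the $n$-qubit Pauli group $\mathcal{P}_n$ modulo phases and (ii) a choice of sign for each of its generators. First I would quotient $\mathcal{P}_n$ by its center $\{\pm 1,\pm i\}I$ to obtain a group isomorphic to $\mathbb{F}_2^{2n}$, equipped with the alternating symplectic form $\omega$ induced by the Pauli commutator, so that $\omega(\bar{P},\bar{Q})=0$ iff $P$ and $Q$ commute. Under this identification, the maximal abelian subgroups of $\mathcal{P}_n$ of order $2^n$ that do not contain $-I$ are in bijection with Lagrangian (i.e.\ maximal isotropic) subspaces of $(\mathbb{F}_2^{2n},\omega)$. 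Each such subgroup is generated by $n$ independent commuting Paulis $g_1,\ldots,g_n$, and choosing signs $s_i\in\{\pm 1\}$ yields $2^n$ pairwise orthogonal rank-one stabilizer projectors $\tfrac{1}{2^n}\prod_i(I+s_i g_i)$, hence $2^n$ distinct stabilizer states.

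Next I would count the Lagrangians by a standard symplectic induction. Since $\omega$ is alternating over $\mathbb{F}_2$, every nonzero vector is isotropic; pick one of the $2^{2n}-1$ nonzero vectors $v$, pass to its symplectic complement $v^\perp$ of codimension $1$, and quotient by $\langle v\rangle$ to obtain a $(2(n-1))$-dimensional symplectic space. Lagrangians of $\mathbb{F}_2^{2n}$ containing $v$ correspond to Lagrangians of the quotient, and each Lagrangian contains $2^n-1$ nonzero vectors, yielding the recursion $L(n)=(2^n+1)L(n-1)$ and hence
\begin{equation}
L(n) = \prod_{k=1}^{n}(2^k+1).
\end{equation}
Multiplying by the $2^n$ sign assignments gives $|S|=2^n\prod_{k=1}^{n}(2^k+1)$, and taking $\log_2$ yields
\begin{equation}
\log_2|S| = n+\sum_{k=1}^{n}\log_2(2^k+1) = n+\sum_{k=1}^{n}\bigl(k+\log_2(1+2^{-k})\bigr) = \tfrac{n^2}{2}+\tfrac{3n}{2}+O(1),
\end{equation}
since $\sum_{k\ge 1}\log_2(1+2^{-k})$ converges. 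This remainder is $o(n^2)$, proving the lemma.

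The step I expect to need the most care is establishing the bijection between stabilizer states and (Lagrangian, sign-vector) pairs — specifically, verifying that every Lagrangian of $\mathbb{F}_2^{2n}$ lifts unambiguously to an abelian subgroup of $\mathcal{P}_n$ (up to the overall sign freedom absorbed into the $s_i$), and that two distinct sign vectors on the same generating set always yield distinct projectors. Once that is secured, the combinatorial heart of the argument is the short symplectic induction, and the final asymptotic expansion is elementary. Since this formula is already established as Proposition 1 of Ref.~\cite{aaronson2004improved}, one could alternatively simply invoke it; the plan above is a self-contained derivation via symplectic linear algebra over $\mathbb{F}_2$.
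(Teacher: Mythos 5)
Your proposal is correct, and it proves strictly more than the paper does: the paper offers no proof of this lemma at all, simply importing the exact formula $\left|S\right|=2^n\prod_{k=1}^{n}\left(2^k+1\right)$ as Proposition 1 of Ref.~\cite{aaronson2004improved}. Your self-contained route --- identifying stabilizer groups with Lagrangian subspaces of $(\mathbb{F}_2^{2n},\omega)$ decorated by sign data, counting Lagrangians by the double-counting recursion $L(n)=(2^n+1)L(n-1)$, and then reading off the asymptotics --- reproduces that formula and in fact yields the sharper statement $\left|S\right|=2^{n^2/2+O(n)}$, which is consistent with (and stronger than) the $o(n^2)$ error term claimed in the lemma. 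Aaronson and Gottesman's own derivation counts ordered generating sets of stabilizer groups and divides by the number of generating sets per group; your symplectic version is equivalent but arguably cleaner and makes the structure (isotropic subspaces plus characters) explicit. One phrasing quibble: the maximal abelian subgroups of $\mathcal{P}_n$ not containing $-I$ are not literally in bijection with Lagrangians --- the correspondence is $2^n$-to-one, since the $2^n$ sign assignments already enumerate the distinct subgroups over a fixed Lagrangian --- but your final count $2^n L(n)$ absorbs this correctly, and your own flagged concern (that the $2^n$ sign vectors give distinct states) is settled by noting that the resulting projectors are the mutually orthogonal rank-one spectral projectors of the commuting family $\left\{g_i\right\}$ and therefore sum to the identity.
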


\begin{lemma}\label{lem:square}
For any subset of stabilizer states $s$ such that $\left|s\right|>2^{n^2/4+7n/2}$, there exist three states such that some of their stabilizers forming a Mermin-Peres magic square as shown in Table~\ref{tab:square}. Therefore, $m\leq 2^{n^2/4 + 7n/2}$.
\end{lemma}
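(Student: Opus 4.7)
I would work in the symplectic picture, identifying the $n$-qubit Pauli group modulo phases with $\mathbb{F}_2^{2n}$ equipped with the standard symplectic form $\omega$, so that two Paulis commute iff $\omega=0$ and anticommute iff $\omega=1$. A stabilizer state $\ket\psi$ corresponds to a Lagrangian subspace $G_\psi\subset\mathbb{F}_2^{2n}$ together with a consistent sign assignment. The Mermin--Peres commutation pattern on $(A,a,B,b)$ then becomes the system $\omega(A,a)=\omega(B,b)=\omega(A,B)=\omega(a,b)=0$, $\omega(A,b)=\omega(a,B)=1$, which is equivalent to $\langle A,a,B,b\rangle$ spanning a $4$-dimensional symplectic subspace of $\mathbb{F}_2^{2n}$ and each row of the square being a $2$-dimensional Lagrangian inside it.

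The first step is to show that any pair of states with sufficiently generic Lagrangians already carries the first two rows of a square. Concretely, if $\ket{\psi_1},\ket{\psi_2}\in s$ satisfy $\dim(G_1\cap G_2)\le n-2$, then the pairing map $\phi\colon G_1\to G_2^{*}$, $X\mapsto\omega(X,\cdot)\mid_{G_2}$, has kernel $G_1\cap G_2$ and therefore image of rank at least $2$. Picking $A,a\in G_1$ with $\phi(A),\phi(a)$ linearly independent, one can then solve two independent affine equations in $G_2$ to obtain $B,b\in G_2$ realizing the required commutation relations; the resulting $\langle AB,ab\rangle$ is automatically a rank-$2$ isotropic subspace, and any admissible $\ket{\psi_3}$ must contain it as $+1$-stabilizers.

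The second step is a double counting. I would write the number of magic-square triples inside $s^{3}$ as
\begin{equation*}
T(s)\;=\;\sum_{(A,a,B,b)}|s_{A,a}|\cdot|s_{B,b}|\cdot|s_{AB,ab}|,
\end{equation*}
where $s_{P,Q}$ denotes the set of $\ket\psi\in s$ that are simultaneous $+1$-eigenstates of the commuting Paulis $P,Q$, and the sum runs over configurations satisfying the commutation pattern above. The incidence count $\sum_{H}|s_H|=|s|\cdot\#\{\text{rank-$2$ iso.\ subgroups of a Lagrangian}\}\sim|s|\cdot2^{2n}$ combined with the total number $\sim 2^{4n}$ of rank-$2$ isotropic subspaces of $\mathbb{F}_2^{2n}$ and the size $|S_{n-2}|\sim 2^{(n-2)(n+1)/2}$ of the set of Lagrangians through any fixed rank-$2$ isotropic subspace (from Lemma~\ref{lem:NoStabilizers} applied to the symplectic quotient $H^{\perp}/H\cong\mathbb{F}_2^{2(n-2)}$) can be combined via a Cauchy--Schwarz / convexity estimate on the three-factor product to show that $T(s)\ge 1$ whenever $|s|>2^{n^{2}/4+7n/2}$. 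The $n^{2}/4$ exponent reflects the square-root scaling of the Cauchy--Schwarz bound against the $\sim 2^{n^{2}/2}$-sized ambient set from Lemma~\ref{lem:NoStabilizers}, while the $7n/2$ absorbs the $2^{O(n)}$ factors coming from numbers of generators, rank-$2$ subgroups per Lagrangian, and sign choices.

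\paragraph*{Expected main obstacle.} The delicate part is the last step: the candidate $\ket{\psi_3}$ depends jointly on all four Paulis $(A,a,B,b)$ and not merely on the pair $(\ket{\psi_1},\ket{\psi_2})$, so a one-sided pigeonhole over state pairs is not enough and one must control the three-factor incidence simultaneously. On top of this, the sign bookkeeping is non-trivial: the column-product identity $(Aa)(Bb)(ABab)=-I$ must be compatible with all three row-stabilizer states having $+1$ eigenvalues, which costs an overall $2^{O(n)}$ factor that one must keep track of carefully to land on the specific exponent $n^{2}/4+7n/2$ rather than a looser bound.
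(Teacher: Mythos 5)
Your proposal takes a genuinely different route from the paper. The paper's proof is a purely structural double pigeonhole: conjugate the whole set by a Clifford so one state becomes $\ket{0}^{\otimes n}$ and the rest take the canonical form $\sum_{x\in A}\left(-1\right)^{q\left(x\right)}i^{l\left(x\right)}\ket{x}$; pigeonhole on the affine support $A$ (at most $2^{n^2/4+3n/2}$ choices, which is where the $n^2/4$ exponent comes from) and then on the linear phase $l$ (at most $4^n$ choices); this leaves a computational basis state and two graph states differing in a single edge, whose stabilizer generators are written into the nine cells of Table~\ref{tab:square} explicitly, with the signs coming out automatically. Your first step --- extracting the first two rows from any pair of Lagrangians $G_1,G_2$ with $\dim\left(G_1\cap G_2\right)\le n-2$ via the pairing $\phi\colon X\mapsto\omega\left(X,\cdot\right)\vert_{G_2}$ --- is sound as far as the commutation pattern is concerned, and is a reasonable alternative starting point.

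The genuine gap is in your second step. A sum of triple products
\begin{equation*}
T\left(s\right)=\sum_{\left(A,a,B,b\right)}\left\lvert s_{A,a}\right\rvert\cdot\left\lvert s_{B,b}\right\rvert\cdot\left\lvert s_{AB,ab}\right\rvert
\end{equation*}
cannot be bounded \emph{below} by Cauchy--Schwarz or convexity applied to the marginal incidence counts $\sum_{H}\left\lvert s_H\right\rvert$: those inequalities bound such sums from above, and no lower bound follows from the marginals alone, because the three factors could a priori have disjoint supports --- the rank-$2$ isotropic subgroups that are popular as first rows, as second rows, and as third rows might never align into a single magic-square configuration. Proving that they must align when $\left\lvert s\right\rvert$ is large is precisely the content of the lemma, so the counting step as written is unsupported (your own ``expected main obstacle'' paragraph names this difficulty but the proposal does not resolve it). The paper sidesteps the alignment problem constructively: after the two pigeonholes the surviving states share the same affine support and the same phase data, so the single differing edge between the two graph states, together with the $Z$-type stabilizers of the computational basis state, produces all three rows of the square simultaneously with consistent signs. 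To rescue a counting approach you would need an additional supersaturation-type input guaranteeing that three popular rank-$2$ isotropic subgroups sit inside a common $4$-dimensional symplectic subspace in the required row/column pattern, and you would still need to verify the sign constraint imposed by the column identity against the $+1$-eigenvalue requirements of the three states.
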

The idea of the proof is by showing the following: (1) up to classical Clifford circuits (those only composed of CNOT and X), any stabilizer state can be written as a tensor product of a uniform superposition state of $k$ qubits (with possibly non-trivial phases) and a product state of $n-k$ over computational basis; (2) for any $2^{n^2/4+3n/2}\cdot 4^n \cdot 2$ states, we can always find one computational basis product state and $4^n \cdot 2$ other states that are superposition over the \emph{same} $k$ qubits up to a Clifford circuit; (3) we can always find 2 standard graph states from these $4^n  \cdot 2$ states up to single-qubit phase gates (which do not change the computational basis product state); (4) finally, the stabilizers of the computational basis product state and the 2 graph states can always form a Mermin-Peres square. Details of the proof are in Appendix~\ref{app:stab}. From both lemmas, we have a lower bound for $V$:
\begin{equation}
    V \geq \frac{\left\lvert S\right\rvert}{m}\geq 2^{\frac{n^2}{4}-O(n)}.
\end{equation}
Following the discussion on the equivalence of ontological theories and HMMs in Sec.~\ref{sec:otethmm}, the lower bound on $V$ shows that any ``translation form'' HMM simulating Pauli measurements on stabilizer states on $n$ qubits requires at least $2^{\Omega\left(n^2\right)}$ hidden states. In the following section, we show that a basis-enhanced 2-gram model, which is a special case of the ``translation form'' HMM as in Fig.~\ref{fig:baynet}(b) (see Appendix~\ref{app:equivalence}), only needs $2^{O\left(n\right)}$ internal states.
% Choosing $k=2^n$ as in Fig.~\ref{fig:Clifford}, the number of hidden variables required in the hidden Markov model is at least $k^{\Omega\left(k\right)}$.

We have two remarks on the above proof. First, the idea does not need to be restricted to stabilizer states. Here, we use stabilizer states for the simplicity of illustrating the key idea, i.e. making use of the simplest example of contexuality, the Mermin--Peres square. Second, the above argument also works for HMMs without translational invariance. One can remove the requirement of translational invarance by just relabeling $\Gamma^{\left(t\right)}$ and $\operatorname{Supp}^{\left(t\right)}$ for the $t$-th time step.

\begin{figure}[tbp]
\includegraphics[width=0.45\textwidth]{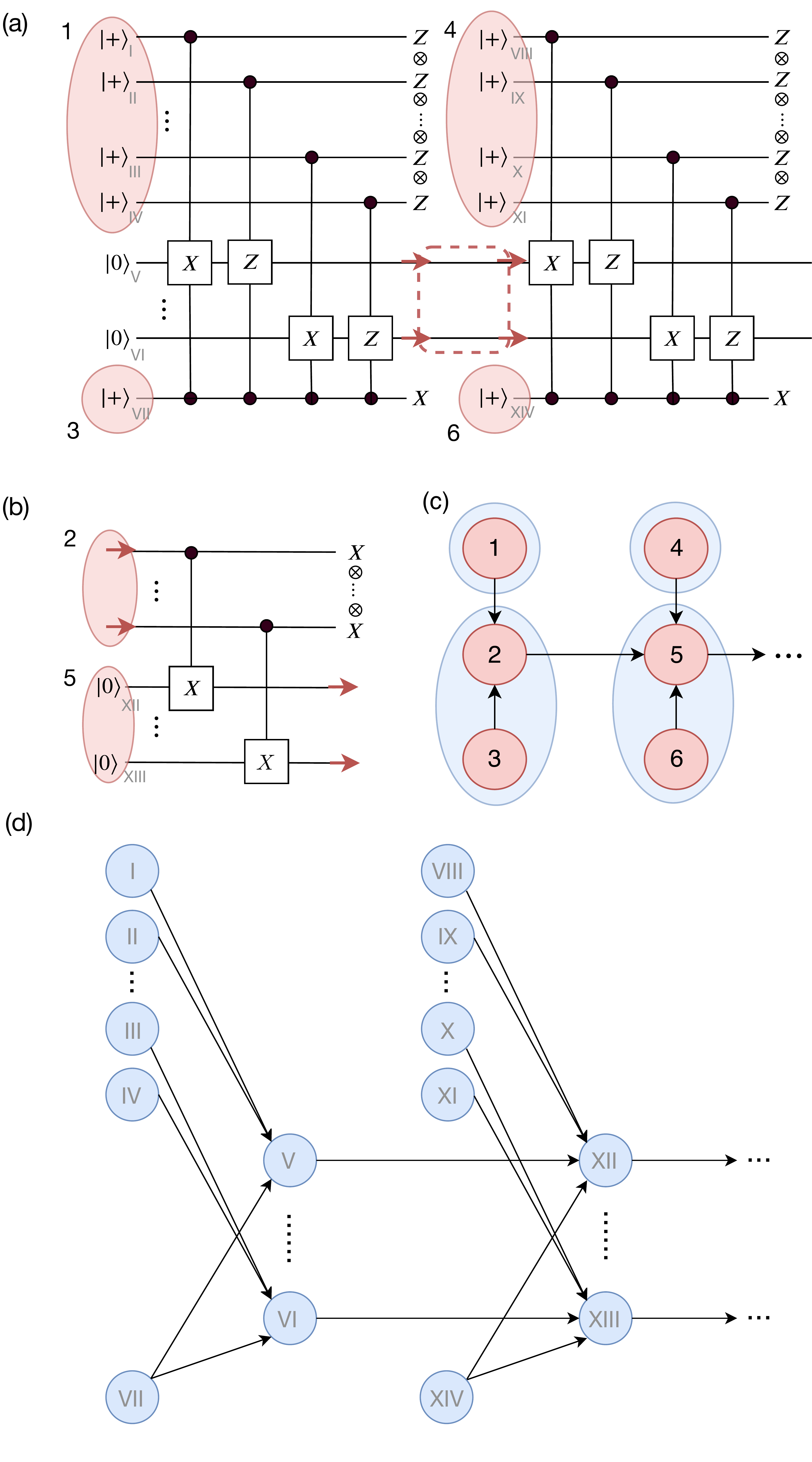}
\caption{(a) Quantum circuits for generating all of the possible stabilizer states by measurements. The circles 1 and 4 are superpositions of $2n$-bit strings with uniform weights, of which measurement results in the $Z$ basis determine which stabilizers are to be measured for the qubits initialized as $\ket{0}^{\otimes n}$ (those qubits in the middle). The measurement results in the $X$ basis of circles 3 and 6 give the outcome of measuring the corresponding Pauli determined by the measurement result of circles 1 and 4, respectively. All of the outcomes of circles $1,3,4,6,\cdots$ will make the qubits in the middle be any possible stabilizer state. (b) Teleportation gadget for the part in the red dashed box of (a). The stabilizer to be measured in the next time step should be determined by measurement results of both 2 and 5 because the measurement result of 2 will sometimes produce some Pauli corrections.
(c) The corresponding Bayesian network circuit of (a) after using (b). All of the red circles correspond to circles in (a) and (b) with the same number labels. We group $\left\{1\right\}$, $\left\{2,3\right\}$ and $\left\{4\right\}$, $\left\{5,6\right\}$ as blue circles to form a 2-gram model with input, which is a special case of the translation form of the HMM given in Fig.~\ref{fig:baynet}(b).
(d) 2D arrangement of the Bayesian network for the circuit shown in (a). The correspondence between quantum circuits and circles are labeled by Roman numerals.
}
\label{fig:Clifford}
\end{figure}

\subsubsection{A Basis-Enhanced 2-gram Model from Stabilizer States}

Here, we construct a basis-enhanced 2-gram model as shown in Fig.~\ref{fig:Clifford} that simulates Pauli measurements on stabilizer states using $O\left(n\right)$ qubits---the underlying 2-gram model, therefore, has $D = 2^{O\left(n\right)}$ internal states. The input and output of the HMM correspond to the choices of different Pauli measurements and the measurement results respectively, and the teleportation gadgets connect successive nodes in the 2-gram model. Together with the result in the previous section~\ref{sec:mpbound} on the lower bound for the number of hidden variables required in a classical HMM to simulate the quantum process, $V \geq 2^{\Omega\left(n^2\right)}=D^{\Omega\left(\log D\right)}$, we have thus completed the proof of Theorem~\ref{thm:hmm}. Furthermore, instead of regarding it as a 1D model with large bond dimension, we can also view it as a 2D model:
in Fig.~\ref{fig:Clifford}(a), if each qubit is regarded as a node in the Bayesian network, the corresponding directed graph is shown in Fig.~\ref{fig:Clifford}(d).

Before concluding this comparison between HMMs and basis-enhanced HMMs, we note that although this basis-enhanced circuit is constructed from stabilizer states, which can be efficiently simulated classically, this does not imply that a quantum computer is not useful in this case. In particular, one can consider continuous Pauli rotations instead of Clifford gates, which will be required in practice in order to train the model using a method such as gradient descent. Such quantum training algorithms for these models are analyzed in Appendix~\ref{app:quantum_algorithm}. In general, such algorithms cannot be simulated efficiently classically.

% The quantum bigram model shows that there is a quantum advantage in expressive power so choosing it as the generative model is reasonable. However, the example shown here could be only the final generative model after training. Because there is no continuous parameters in stabilizer states and training is usually based on gradient descent which requires continuous parameters. In the training process, using quantum computer is required. Besides, unlike training models on general graph of which efficient computation is not guaranteed (see Appendix \ref{app:quantum_algorithm}), we can extend the circuit in Fig.\ref{fig:Clifford}(a) by generalizing control-control-Pauli to general control-control-unitary. This extension could be trained efficiently by quantum computer but classical algorithm could not.

\section{Numerical Tests on Real World Data}\label{sec:numerics}%applications on real world data

% \subsection{Loss function}

\begin{figure*}[t!]
  \begin{center}
    \includegraphics[width=\linewidth]{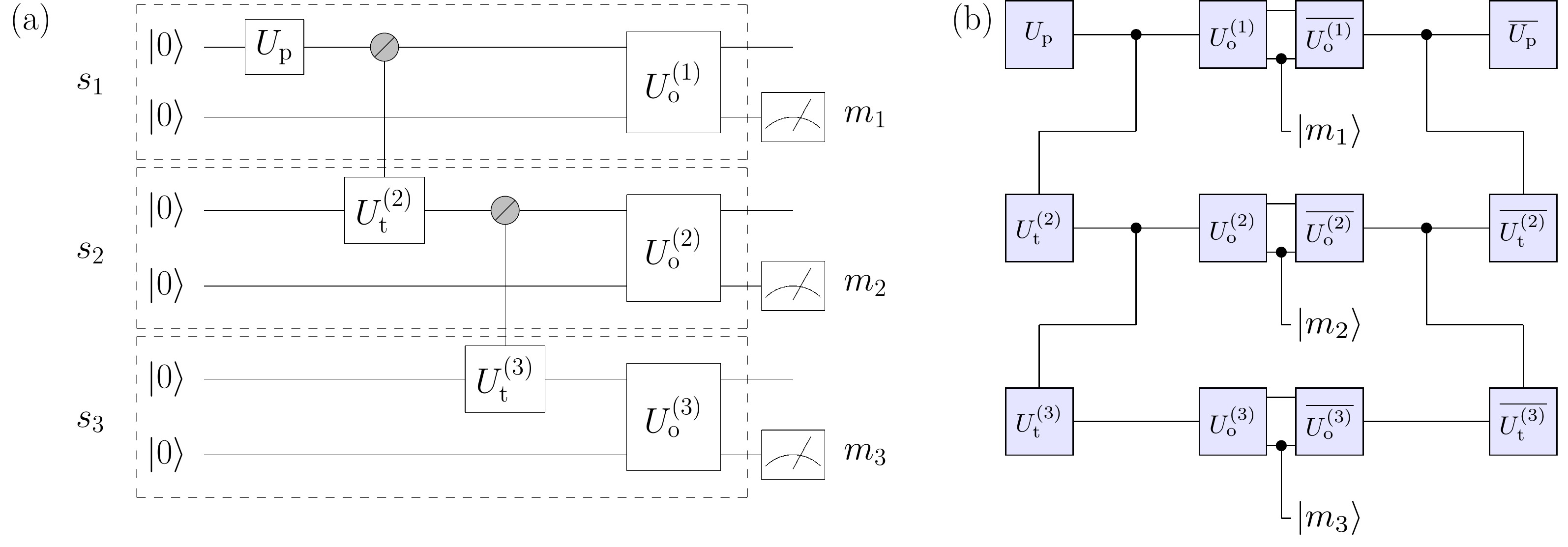}
    \caption{(a) A BBQC interpretation of a three-node $2$-gram model, with states $s_1, s_2, s_3$. An initial unitary $U_{\textrm{p}}$ constructs the prior, and uniformly controlled unitaries $CU_{\textrm{t}}$ encode transitions between $2$-gram states. Each measurement is in some standard basis encoding of the $M$-dimensional output space. To directly compare with classical HMMs, we only consider transfer unitaries on a $k$-dimensional subspace of each state (top wire), and consider the distribution on a $M$ subspace of each state (bottom wire). (b) A tensor network representation of the same quantum circuit.\label{fig:q2gram_circuit}}
  \end{center}
\end{figure*}

In the previous section, we have proven theoretically that the quantum models we consider have more expressive power than the corresponding classical models. The sources of the quantum advantages are quantum nonlocality and contextuality. In this section, we numerically test that the quantum models do indeed have better performance in practice. These numerical results primarily serve two purposes. First, it demonstrates that the quantum models actually have an advantage on real world data. Second, it shows that the quantum advantages are robust to more practical error models beyond the one used theoretically as in Eq.~\eqref{eq:error}. 

Concretely, we specialize to classical HMMs and the quantum extension of $2$-gram models introduced in Sec.~\ref{sec:hmm}. As in most generative modeling tasks, the quantity of interest to evaluate the performance of the parameterized model $p_{\textrm{model}}$ given a data set $p_{\textrm{data}}$ is the forward KL divergence
\begin{equation}
    D_{\textrm{KL}}\left(p_{\textrm{data}}\mid\mid p_{\textrm{model}}\right)=\!\!\!\!\!\! \sum\limits_{\bm{m}\in{\left[1,\ldots,M\right]}^n} \!\!\!\! p_{\textrm{data}}\left(\bm{m}\right)\log\frac{p_{\textrm{data}}\left(\bm{m}\right)}{p_{\textrm{model}}\left(\bm{m}\right)};
    \label{eq:forward_kl}
\end{equation}
as consistent with our convention used in previous sections, we let $M$ denote the dimensionality of a given visible node in our model, and let $n$ be the number of visible nodes in the model. Since summing over the exponential number of terms in Eq.~\eqref{eq:forward_kl} is intractable in practice, we use the stochastic estimate of the KL divergence given in Eq.~\eqref{eq:stoch_kl_div}.

\subsection{Simulation of Basis-Enhanced $2$-gram Models}\label{sec:sim}

We now specialize to (translationally invariant) classical hidden Markov models and basis-enhanced $2$-gram models, both of which were introduced in Sec.~\ref{sec:hmm}. Though in Sec.~\ref{sec:hmm} we considered a specific translation task for the sake of our analysis, here we consider general basis-enhanced $2$-gram models, with the parameters trained to represent some given data set. The general structure of the model we consider is given in Fig.~\ref{fig:q2gram_circuit}.

Though basis-enhanced $2$-gram models cannot directly be interpreted as classical Bayesian networks, they are still classically simulable using tensor networks when they have low bond dimension~\cite{PhysRevX.8.031012,NIPS2019_8429}, making them a natural choice for numerical tests of our analysis (see Fig.~\ref{fig:q2gram_circuit}(b)). In particular, the direction of steepest descent of~\eqref{eq:stoch_kl_div} when varying a particular tensor $U$ is given by its negative gradient with respect to the conjugate of the parameters~\cite{IMM2012-03274}; that is, the direction of steepest ascent with respect to $U$ takes the form:
\begin{equation}
    \pdv{\tilde{D}_{\text{KL}}}{\overline{U}}=\frac{2\partial_{\overline{U}}Z}{Z}-\frac{2}{K}\sum\limits_{\bm{m}\in\textrm{data}}\frac{\partial_{\overline{U}}S_{\bm{U}}\left(\bm{m}\right)}{S_{\bm{U}}\left(\bm{m}\right)},
    \label{eq:derivative_of_loss}
\end{equation}
where $S$ is the unnormalized probability distribution given in Fig.~\ref{fig:q2gram_circuit}(b) and $Z$ is its normalization. When we perform the Riemannian descent algorithm described in Sec.~\ref{sec:opt}, we optimize on the manifold of unitary matrices and thus $Z=1$. As we maintain translational invariance in our model, the total derivative with respect to some parameter $\overline{U}$ is given by the sum of the variation over all equivalent tensors:
\begin{equation}
    \dv{\tilde{D}_{\text{KL}}}{\overline{U}}=\sum_j\pdv{\tilde{D}_{\text{KL}}}{\overline{U^{(j)}}}.
\end{equation}
For completeness, we give examples of the tensor network representations of $\partial_{\overline{U}}S_{\bm{U}}\left(\bm{m}\right)$ and $\partial_{\overline{U}}Z$ in Appendix~\ref{app:avg_perf}. Since within one training mini-batch many of the same tensors are contracted, in practice we precompute intermediate tensor contraction results for each mini-batch. For a basis-enhanced $2$-gram model with bond dimension $h$, the classical runtime is $O\left(nh^3 M\right)$ for computing the gradient with respect to the unitaries in the model. For comparison, a classical HMM trained using the Baum--Welch algorithm~\cite{baum1970} takes time $O\left(nh\left(h+M\right)\right)$ per training iteration.

\begin{figure*}[ht]
  \begin{center}
    \includegraphics[scale=0.275]{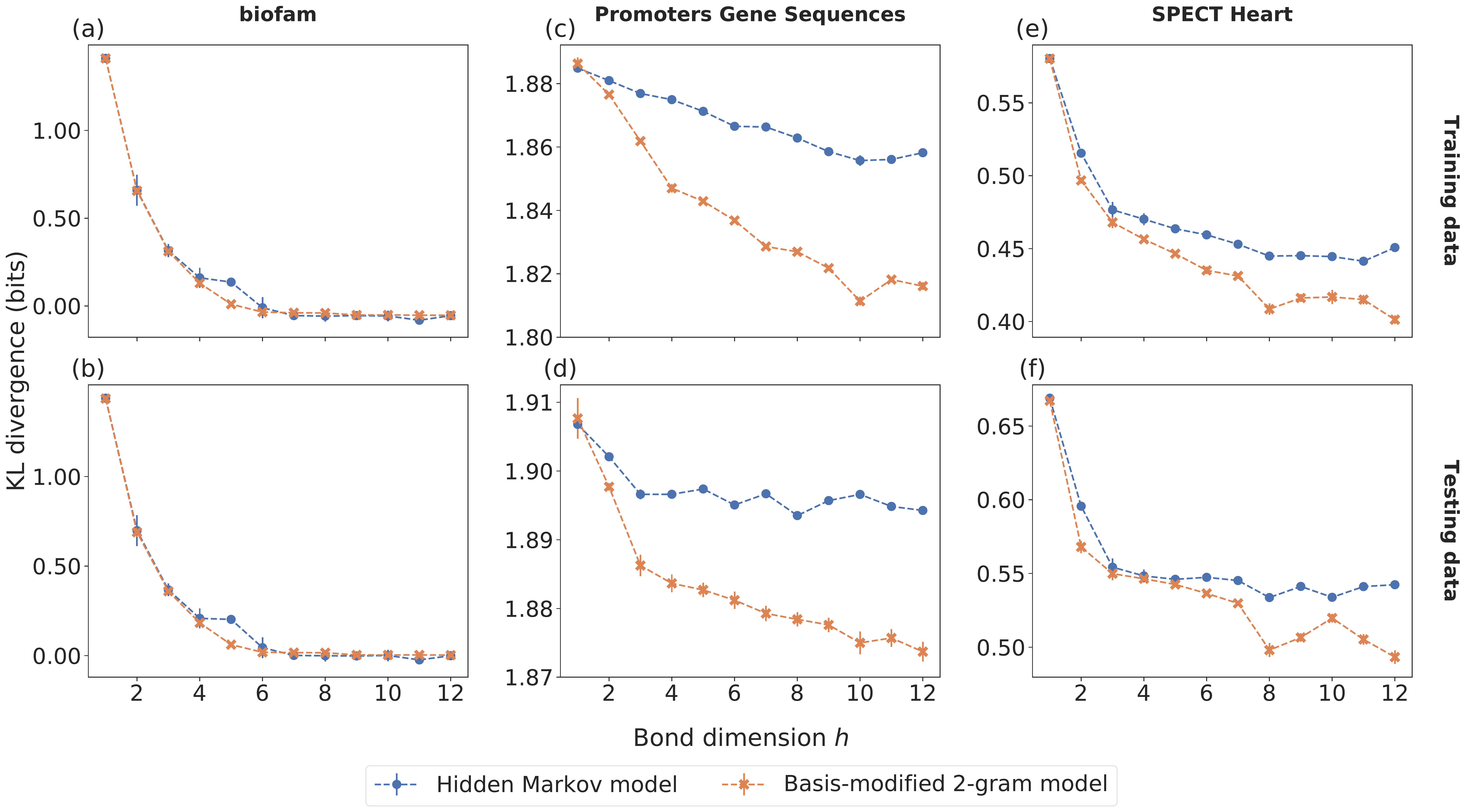}
    \caption{Shown are the best performances over ten trials of the classical HMM (blue circles) and the basis-enhanced $2$-gram model (orange crosses) on the biofam ((a) and (b)), Promoter Gene Sequences ((c) and (d)), and SPECT Heart ((e) and (f)) data sets. The first row plots the performance on the training data, and the second the performance on withheld testing data. The basis-enhanced $2$-gram models have better performance than classical HMMs on the SPECT Heart and Promoter Gene Sequence data sets. Error bars denote one standard error of the mean over ten trials. Dashed lines are to aid the eye.\label{fig:performance_plot}}
  \end{center}
\end{figure*}

\subsection{Model Training}\label{sec:opt}

In general, training Bayesian networks beyond tree graphs is hard~\cite{chickering1996learning,dagum1993approximating}. We note that there exist many heuristic and approximate algorithms that work well in practice for training classical Bayesian networks~\cite{koller2009probabilistic}, and we consider similar heuristics for BBQCs here, as described in more detail in Appendix~\ref{app:quantum_algorithm}.

Since we focus on translationally invariant HMMs here, we can use the Baum--Welch algorithm to efficiently train the classical model~\cite{baum1970}. Furthermore, as discussed in Sec.~\ref{sec:sim}, computing the gradient of the loss function with respect to the parameters in the basis-enhanced $2$-gram model is classically efficient using tensor networks for small bond dimension. However, naively performing gradient descent on the parameters of the model would generally violate unitarity constraints in the underlying quantum circuit model. Therefore, to optimize the unitaries used in the construction of the quantum model, we perform a variant of the Riemannian gradient descent algorithm introduced in Ref.~\cite{abrudan2008steepest}.

Normally, in the gradient descent of some loss function $\mathcal{L}\left(\left\{A_i\right\}\right)$ for complex-valued matrices $A_i$, one iteratively estimates the optimal $A_i$ through the update rule~\cite{IMM2012-03274}:
\begin{equation}
    A_i\to A_i-\alpha\pdv{\mathcal{L}}{\overline{A_i}},
\end{equation}
where $\alpha$ is the learning rate. In practice, keeping a moving average of previous gradient estimates smooths out stochastic fluctuations in estimates of $\pdv{\mathcal{L}}{\overline{A_i}}$; thus, we consider the momentum-based update rule~\cite{rumelhart1986learning}:
\begin{align}
    v_{A_i}&\to\beta v_{A_i}+\alpha\pdv{\mathcal{L}}{\overline{A_i}},\\
    A_i&\to A_i-v_{A_i}.\label{eq:momentum_update}
\end{align}

For unitary $A_i$, however---as in the case of quantum circuits---this procedure will generally yield nonunitary $A_i$. Therefore, we analytically calculate the direction of steepest descent in unitary space in terms of $\pdv{\mathcal{L}}{\overline{A_i}}$, and perform parallel transport in that direction~\cite{abrudan2008steepest}. This leads to the update rule for a unitary matrix $U$:
\begin{equation}
    U\to\exp\left(-\alpha\left(U\left(\pdv{\mathcal{L}}{\overline{U}}\right)^\dag-\pdv{\mathcal{L}}{\overline{U}}U^\dag\right)\right)U.
\end{equation}
We modify the method in Ref.~\cite{abrudan2008steepest} slightly to allow for the momentum update rule of Eq.~\eqref{eq:momentum_update}; namely, we use the update rule:
\begin{equation}
    U\to\exp\left(-\left(Uv_U^\dag-v_U U^\dag\right)\right)U.
\end{equation}

\begin{figure}[ht]
  \begin{center}
    \includegraphics[scale=0.45]{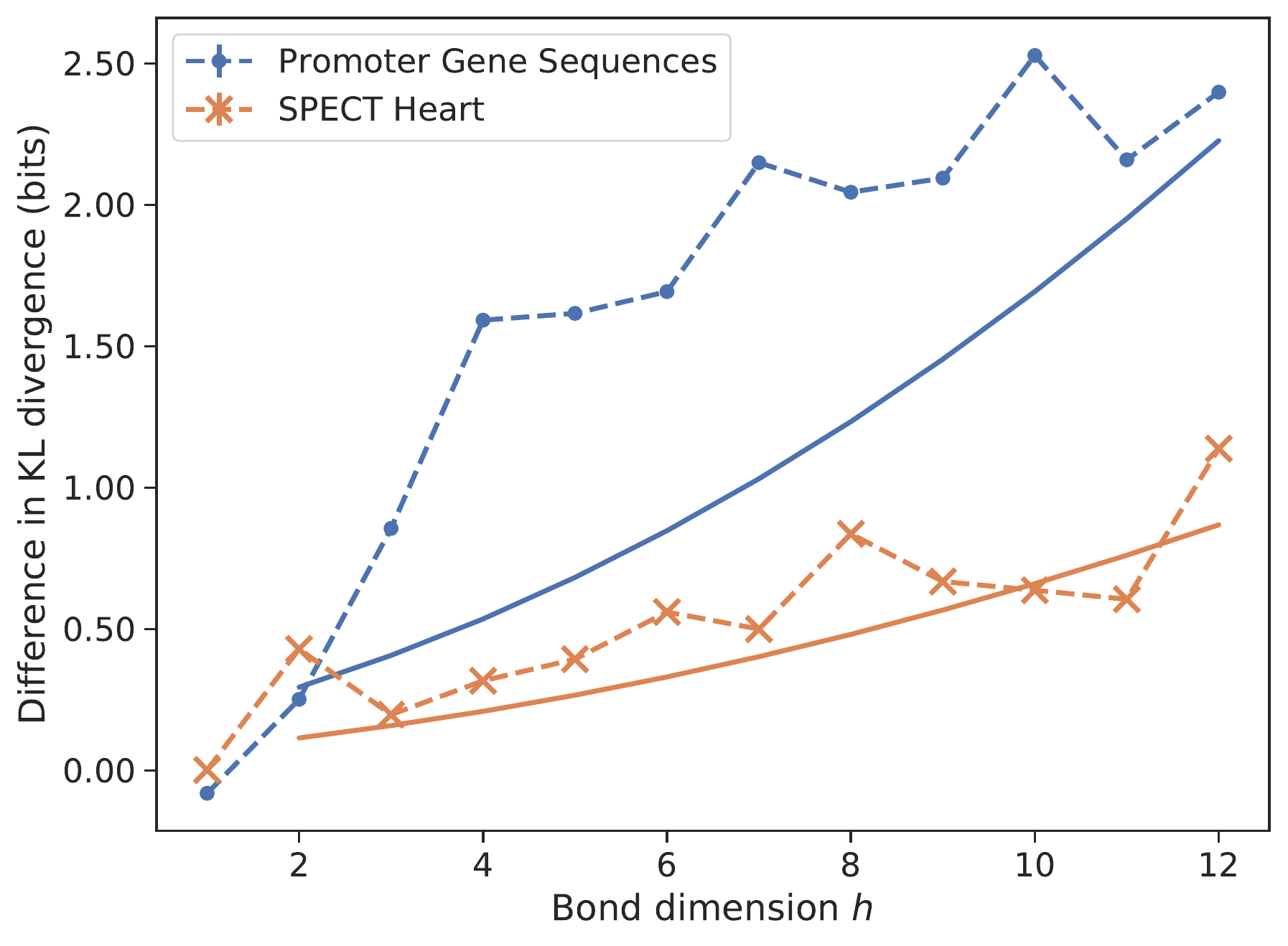}
    \caption{Plotted with dashed lines are the improvements in KL divergence between the best performing basis-enhanced $2$-gram model and the best performing classical hidden Markov model for each tested model size on the Promoter Gene Sequences and SPECT Heart data sets. The solid lines show the improvement needed to reject the null hypothesis in a likelihood-ratio test with $5\sigma$ confidence (see Sec.~\ref{sec:model_comparison}). Wherever the dashed lines are above their corresponding solid lines, the null hypothesis was rejected with $>5\sigma$ confidence.\label{fig:p_value_plot}}
  \end{center}
\end{figure}

\subsection{Model Comparison on Data Sets}\label{sec:model_comparison}

We test the performance of our implemented quantum extension of a $2$-gram model on three data sets: the biofam (sequence length $n=16$, output dimensionality $M=8$)~\cite{Mueller2007,ritschard2013exploratory}, Promoter Gene Sequences ($n=57$, $M=4$)~\cite{Dua:2019}, and SPECT Heart ($n=23$, $M=2$)~\cite{Dua:2019} data sets. For all of our simulations, we use $\beta=0.5$. For the biofam data set~\cite{Mueller2007,ritschard2013exploratory}, we used $\alpha=10^{-3}$, and for the Promoter Gene Sequences and SPECT Heart data sets~\cite{Dua:2019}, we used $\alpha=10^{-2}$. For the biofam data set, we trained for $75$ epochs, and for the Promoter Gene Sequences and SPECT Heart data set, we trained for $150$ epochs. For all data sets, we estimated the gradient over a mini-batch size of $8$ training samples. The biofam data set tracks the family life of individuals from year to year (e.g.\ married, divorced, married with children), and is correlated from year to year. We expect it to be efficiently captured by a classical HMM due to the local nature of the data, and use it as a control. The Promoter Gene Sequences data set consists of DNA sequences that encode promoters and non-promoters, and therefore has a less obvious local structure. Finally, the SPECT Heart data set encodes binary feature vectors of heart images, with little to no local correlations. 

To estimate the generalization performance of the models, we withheld a quarter of the data for testing the biofam and Promoter Gene Sequences data set, and used the standard SPECT Heart testing data set. Our results are summarized in Fig.~\ref{fig:performance_plot}, where we plot the stochastic estimate of the KL divergence $\tilde{D}_{\text{KL}}$ (normalized by the sequence length) as a function of the local hidden dimension $k$. As we are interested in the optimal performance over all parameters to compare the expressive power of quantum models versus classical models, we plot the minimum achieved loss over ten trials. In particular, for the Promoter Gene Sequences and SPECT Heart data sets, the basis-enhanced $2$-gram model learns the distribution of samples more effectively and also generalizes more effectively than the classical HMM. As expected, both models perform equally well on the biofam data set, since it has very local correlations. These results also demonstrate that for data sets that have no obvious local structure, quantum models tend to perform better, which is consistent with our theoretical analysis in Sec.~\ref{sec:hmm}. Furthermore, we performed a likelihood-ratio test between the two models to measure the statistical significance of the improvement in performance, accounting for any potential overfitting due to the quantum model having more parameters than the classical model. Taking the null hypothesis that the optimal parameters of the basis-enhanced $2$-gram model reduce the model to a classical hidden Markov model, we found using the observed difference in achieved KL divergence that the null hypothesis can be rejected with $5\sigma$ confidence on the Promoter Gene Sequences and SPECT Heart data sets (see Fig.~\ref{fig:p_value_plot}).

Interestingly, the performance separation between quantum and classical models persists even when considering the average performance over many runs. These results are summarized in Appendix~\ref{app:avg_perf}. In combination, these sets of numerical results show that quantum models with nonclassical correlations do have better performance as generative models on real world data. The performance boost is also robust to practical training procedures and realistic performance metric considerations. 

\section{Conclusion and Outlook}\label{sec:outlook}\label{sec:beyond}\label{sec:beyondBayesian}
In this work, we have presented unconditional proof of the separation in expressive power between Bayesian networks and their minimal extension, basis-enhanced Bayesian quantum circuits. We showed that the origin of this separation is associated with quantum nonlocality and contextuality. Focusing on sequential models, we constructed examples via quantum nonlocality of a linear separation in $k$ between $k$-gram models and their basis-enhanced version, and through quantum contextuality, a quasi-polynomial separation in bond dimension for hidden Markov model and its basis-enhanced version. In addition, we numerically tested this separation on standard data sets, showing that this separation holds even on practical data sets.

Although we focused on Bayesian networks, 
our approach can also be applied to more general models. Contextuality provides a general framework since the error model in Eq.~\eqref{eq:error} is independent of the normalization of probability distributions; therefore, our techniques can be applied to graphical models without well-defined transition probabilities along some edges of the graph. For example, Theorem~\ref{thm:hmm} also works for deep Boltzmann machines,
which is a much harder model than Bayesian networks in terms of computational cost. However, there is an intrinsic difficulty in extending Theorem~\ref{thm:hmm} to get a separation with some non-energy-based neural networks (e.g., CNN, RNN~\cite{goodfellow2016deep}). The reason is that one hidden neuron in such kinds of models can take values over real numbers and thus it could potentially carry infinite information, and our counting methods used in the proof of Theorem~\ref{thm:hmm} do not directly apply. The possible extensions and applications of our approach to such models deserve further theoretical investigations.

Our results establish a powerful connection between quantum foundations and machine learning. Since many traditional machine learning models are based on the understanding and intuition from classical physics, they can be naturally characterized by ontological models. Our study shows that quantum correlation can be a resource to enhance the efficiency of these models even if the task is purely classical (e.g.\ by noting the similarity between contextuality in natural languages and quantum contextuality). Our work opens new avenues for using ideas from quantum foundations to develop novel  machine learning models based on MPS, tree-like tensor network, or the Multi-scale Entanglement Renormalization Ansatz (MERA)~\cite{stoudenmire2016supervised,PhysRevX.8.031012,1803.09111,glasser2020probabilistic,1803.10908,stoudenmire2017learning}. In addition, we expect the concepts of quantum correlations can be used to provide theoretical foundations for other quantum-inspired classical models and quantum machine learning models.

Finally, our work provides new insights into designing practical quantum machine learning algorithms
%, and makes major advancements in looking for structured and practical quantum circuits 
that exhibit quantum advantage in tackling machine learning tasks; this can be achieved by starting from a successful classical machine learning model and enhancing it with quantum correlations. Although the examples we used here can be efficiently simulated classically, some of them require the use of quantum machines during the training stage. Furthermore, the example models we used in this work are subclasses of more general sequential quantum generative models, involving quantum circuits with sequential (adaptive or non-adaptive) measurements, which have to be implemented on quantum hardware. It can be expected that the ideas of contextuality presented here can be extended to these cases to achieve an even stronger quantum advantage. 

\begin{acknowledgments}
We thank Stephen Bartlett, Dongling Deng, Zhengfeng Ji, Jordi Tura, and Seth Lloyd for helpful discussion. X.G.\ is supported by the Postdoctoral Fellowship in Quantum Science of the Harvard-MPQ Center for Quantum Optics, the Templeton Religion Trust grant TRT 0159, and by the Army Research Office under Grant W911NF1910302 and MURI Grant W911NF-20-1-0082.
E.R.A.\ is supported by the National Science Foundation Graduate Research Fellowship Program under Grant No.\ 4000063445, and a Lester Wolfe Fellowship and the Henry W.\ Kendall Fellowship Fund from MIT.
S.-T.W.\ is partially supported by Air Force STTR under grant No.\ FA8750-20-P-1708. J.I.C.\ acknowledges funding QUENOCOBA, ERC-2016-ADG (grant no.\ 742102) and the D-A-CH Lead-Agency Agreement through project No. 414325145(BEYOND C). M.D.L.\ acknowledges funding NSF CUA - PHY-1125846,
NSF - PHY-2012023,
ARO MURI - W911NF2010082,
DARPA ONISQ - W911NF2010021.
\end{acknowledgments}

\appendix

\section{Relations Among Various Machine Learning Models}\label{app:relation_models}
Deep belief nets have the form
\begin{equation}
\begin{aligned}
&p\left(\bm{v}, \bm{h_1}, \bm{h_2}, \ldots,\bm{h_d}\right)=\\
&p\left(\bm{v}\mid\bm{h_1}\right)
\ldots p\left(\bm{h_{d-2}}\mid\bm{h_{d-1}}\right)p\left(\bm{h_{d-1}},\bm{h_{d}}\right);
\end{aligned}
\end{equation}
this is exactly the form of a $2$-gram model if the hidden variables are also observed.

A $k$-gram model can be simulated by an HMM with $L^{k-1}$ hidden variables per site, where $L$ is the vocabulary length of the $k$-gram model; this can be done straightforwardly by combining sets of $k-1$ sites in the $k$-gram model into one site in the HMM. The $L^{k-1}$ possible values of these sites in the $k$-gram model map to each of the $L^{k-1}$ hidden variables in the HMM.

\section{Details of Basis-Enhanced Bayesian Quantum Circuits}

\subsection{Mapping Between Bayesian Networks and Quantum Circuits}\label{app:equivalence}

\begin{figure*}[tp]
\includegraphics[width=\textwidth]{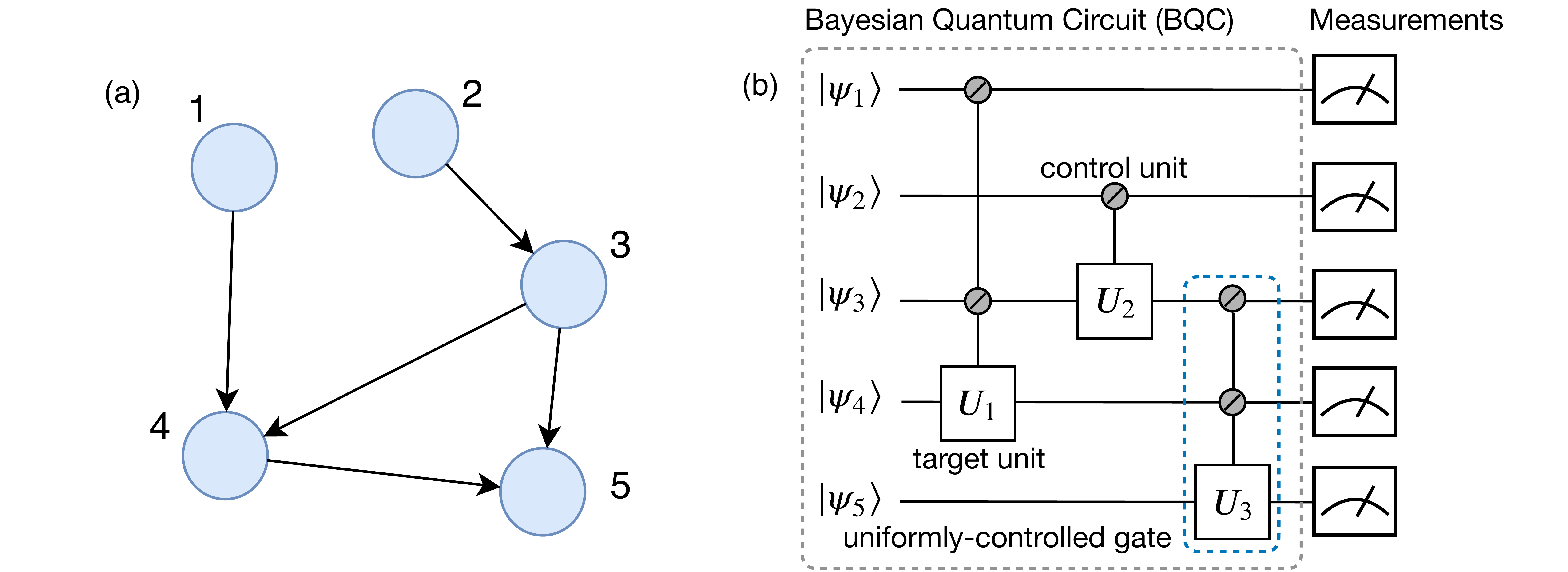}
\caption{A Bayesian network on a generic graph and its associated Bayesian quantum circuit. (a) An example of a Bayesian network on a generic graph. Each node $x_i$ corresponds to a transition probability $p\left(x_i\mid\text{parents of }x_i\right)$, e.g. the node $x_4$ corresponds to $p\left(x_4\mid x_1,x_3\right)$. If there is no parent for the node $x_i$, it simply corresponds to a marginal probability $p\left(x_i\right)$, e.g. the node $x_2$ corresponds to $p\left(x_2\right)$. The joint probability distribution defined by the Bayesian network is the product of the transition or marginal probability over all of the nodes. For this example, the probability distribution is $p\left(x_1,x_2,x_3,x_4,x_5\right)=p\left(x_1\right)p\left(x_2\right)p\left(x_3\mid x_2\right)p\left(x_4\mid x_1,x_3\right)p\left(x_5\mid x_3,x_4\right)$. (b) The associated Bayesian quantum circuit. Each node in the Bayesian network corresponds to a qubit and each uniformly controlled gate corresponds to a transition probability. For example, the uniformly control-$U_1$ gate corresponds to $p\left(x_4\mid x_1,x_3\right)$. The marginal probabilities $p\left(x_1\right),p\left(x_2\right)$ have been absorbed into $\ket{\psi_1},\ket{\psi_2}$. Because all Bayesian networks are associated with directed acyclic graphs, an ancestor of a node cannot also be a child of the same node. This means, in Bayesian quantum circuits, there cannot be a target unit after a control unit. Furthermore, the target unit can only involve one qubit in a uniformly controlled gate. Measuring the output qubits of Bayesian quantum circuits in the computational basis will produce the same probability distribution as the corresponding Bayesian network.
}
\label{fig:baynet2}
\end{figure*}

In the following, we give the explicit construction of the mapping between Bayesian networks and quantum circuits (see Fig.~\ref{fig:baynet2} as an illustration):
\begin{itemize}
\item Bayesian networks $\Longrightarrow$ BQCs. Each node corresponds to a qubit. According to the direction of edges in the graph, assign an order for these qubits. Then do the following steps in order. If the node has no parent, prepare the corresponding qubit as $\ket{\psi_i}$ such that $\left\lvert\innerp{x_i}{\psi_i}\right\rvert^2=p\left(x_i\right)$. Otherwise, prepare the corresponding qubit as $\ket{\psi_i}$ and apply $U(\text{parents of }x_i)$ on the corresponding qubits such that
\begin{equation}\label{eq:transition}
\left\lvert\bra{x_i} U\left(\text{parents of }x_i\right)\ket{\psi_i}\right\rvert^2=p\left(x_i\mid\text{parents of }x_i\right).
\end{equation}
Notice that there are no other operations between $U\left(\text{parents of }x_i\right)$ and $\ket{\psi_i}$ since (i) if there were a target unit $V$ in another uniformly controlled gate, $U$ and $V$ could be merged into a single uniformly controlled gate; (ii) the order guarantees there is no control unit before a target unit.

\item BQCs $\Longrightarrow$ Bayesian networks. First, we assign a directed acyclic graph to the BQC. Each qubit corresponds to a node and we draw an arrow from node $x_i$ to node $x_j$ if and only if there exists a uniformly controlled gate with control unit on qubit $i$ and target unit on qubit $j$. Then we assign each node transition probabilities in the following way: if there is no target unit on qubit $i$, assign $p\left(x_i\right)=\left\lvert\innerp{x_i}{\psi_i}\right\rvert^2$ for the corresponding node; otherwise, assign a transition probability for this node according to Eq.~\eqref{eq:transition}.
\end{itemize}

\subsection{Efficient Implementation Using Multi-Qubit Collective Gates}\label{app:MRG}

The implementation of uniformly controlled gates is not efficient in general~\cite{bergholm2005quantum}. This is true even if we have the ability to implement collective gates which are native, for instance, to Rydberg-based quantum platforms (e.g. implementing quantum fan-out gates with $k$ control units \cite{isenhower2011multibit} as shown in Fig.~\ref{fig:transition}(a)). Even though these gates are very powerful~\cite{hoyer2005quantum}, it is unclear how to implement general uniformly controlled gates more efficiently (in terms of scaling with $k$). However, in almost all machine learning models, the transition probabilities have specific forms when $k$ is large. In particular, transition probabilities usually take the form: 
\begin{equation}\label{eq:linear}
p\left(x_{k+1}\mid x_1,\dots,x_k\right)= f\left( x_{k+1} \left( \sum_{i=1}^k w_i x_i \right)\right);
\end{equation}
that is, the dependence on the parent variables is linear in a non-linear function $f$ in general. For example, in deep belief nets, $f\left(y\right)=e^{-\beta y}/\left(1+e^{-\beta y}\right)$. Here we give a construction showing how to implement such uniformly controlled gates approximately such that the number of elementary collective gates does not depend on $k$.

\begin{figure*}
\includegraphics[width=\textwidth]{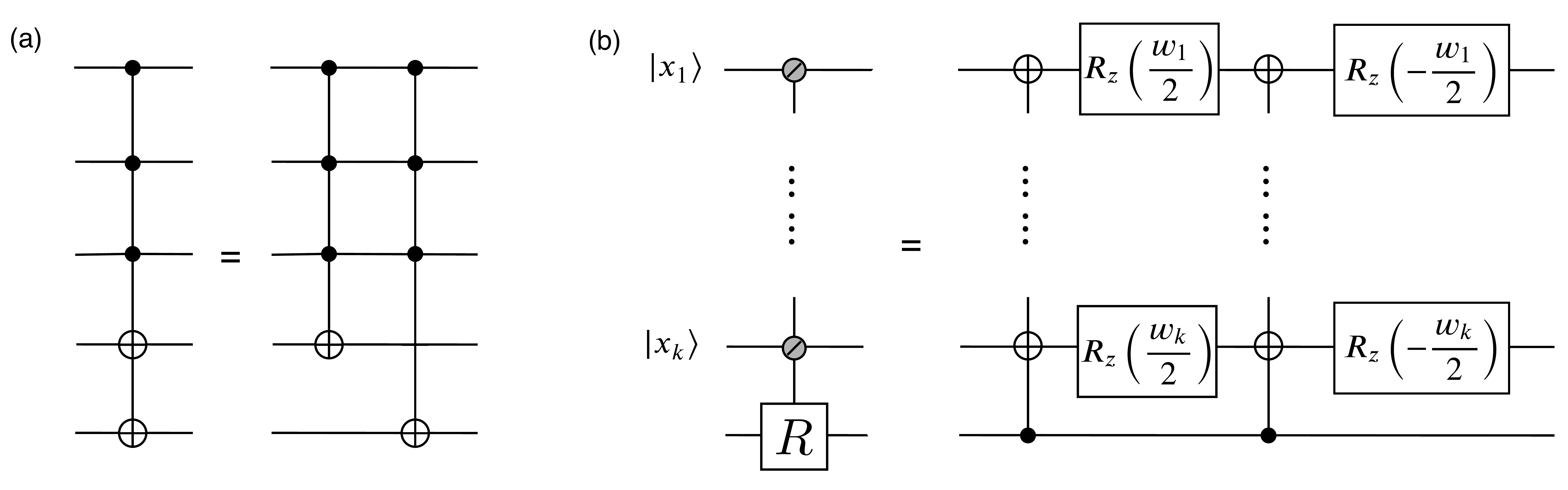}
\caption{Basic elements in Bayesian quantum circuits implemented by multi-qubit collective gates. (a) ``Basic'' collective gate. (b) Implementation of control-$R=R_z\left(\sum\limits_iw_ix_i\right)$ gate. This gate can be implemented directly on a Rydberg-atom based platform.
}
\label{fig:transition}
\end{figure*}
In the following, we will show how to implement the transition probability shown in Eq.~\eqref{eq:linear} in BQCs with a circuit depth independent of $k$, using collective gates. We define $\theta=\sum w_ix_i$, such that according to Eq.~\eqref{eq:linear}, $p\left(0\mid x_1,\ldots,x_k\right)=f\left(0\right)$ and $p\left(1\mid x_1,\ldots,x_k\right)=f\left(\theta\right)$. Thus we have a normalization condition $f\left(0\right)+f\left(\theta\right)=1$. We introduce the notation $\left\langle\cdot\right\rangle_{m}$ as a binary representation of $\cdot$ up to the $m$-th digit. We also introduce $\tilde{\theta}$ as an approximation of $\theta$, with binary representation $\left\langle\theta\right\rangle_{d_1}$. We then use the following procedure to implement the transition:
\begin{widetext}
%$\phi$ is an approximation of $\arccos{\sqrt{f(\tilde\theta)}}$ of which binary representation is exactly $<\arccos{\sqrt{f(\tilde\theta)}}>_{d_2}$
\begin{eqnarray}
\ket{0}^{\otimes d_1}\ket{0}^{\otimes d_2}\ket0 
&\longrightarrow &
\ket{\left\langle\theta\right\rangle_{d_1}} \ket{0}^{\otimes d_2}\ket0 
\text{\quad(phase estimation algorithm, }O(d_1^2)\text{ gates)}
\notag\\
&\longrightarrow&
\ket{\left\langle\theta\right\rangle_{d_1}} \ket{\left\langle\arcsin{\sqrt{f(\tilde\theta)}}\right\rangle_{d_2}}\ket0
\text{\quad(classical computing, usually }O\left(\operatorname{poly}\left(d_2^2\right)\right)\text{ gates)}
\notag\\
&\longrightarrow&
\ket{\left\langle\theta\right\rangle_{d_1}} \ket{\left\langle\arcsin{\sqrt{f(\tilde\theta)}}\right\rangle_{d_2}}
\left(\sqrt{f(0)}\ket0+\sqrt{f(\theta)}\ket1\right)
\text{\quad(controlled rotation along }x\text{ axis)}
\notag\\
&\longrightarrow&
\ket{0}^{\otimes d_1}\ket{0}^{\otimes d_2}\left(\sqrt{f(0)}\ket0+\sqrt{f(\theta)}\ket1\right)
\text{\quad(uncomputing)}.
\end{eqnarray}
\end{widetext}
The precision of the transition is determined by $d_1$ and $d_2$. $d_1$ determines the precision of the input of function $f$ and $d_2$ determines the effect of truncation for the function $\arcsin\sqrt{ f(\cdot)}$. The total error is bounded by
\begin{equation}
\epsilon=\max_\theta \left\lvert \left( \arcsin\sqrt{ f(\theta)} \right)^\prime \right\rvert 2^{-d_1}+2^{-d_2}.
\end{equation}
Therefore, if the derivative is bounded by a constant (for example, in the case of $f(\theta)=e^{-\beta\theta}/(1+e^{-\beta\theta})$, the derivative is bounded by $\beta$), $d_1$ and $d_2$ can be taken to be $O\left(\log\left(1/\epsilon\right)\right)$ such that the depth is bounded by $\operatorname{poly}\left(\log\left(\frac{1}{\epsilon}\right)\right)$,
which is independent of $k$ and only depends on the precision $\epsilon$.

\subsection{Exponential Separation of Expressive Power Between BBQCs and Bayesian Networks Based on Computational Complexity Theory}\label{app:general_graph}
The proof of the exponential expressive power of BBQCs is a slight modification of the proof for Quantum Generative Models (QGMs) detailed in Ref.~\cite{gao2018quantum}.
\begin{theorem}[Ref.~\cite{gao2018quantum}]
There exists a BBQC with $n$ qubits such that, if any Bayesian networks with a polynomial number of parameters in $n$ could approximate it under multiplicative error, the polynomial hierarchy in computational complexity theory would collapse.
\end{theorem}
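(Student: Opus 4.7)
The plan is to follow the template of conditional quantum-advantage arguments based on polynomial hierarchy (PH) collapse, in the style of Bremner--Jozsa--Shepherd and Aaronson--Arkhipov, and imported here as in Ref.~\cite{gao2018quantum}. Concretely, I would exhibit a specific BBQC $\mathcal{C}_n$ whose output distribution $q$ is known to be hard to sample from classically, even under multiplicative error, unless PH collapses, and then argue that a polynomially-parametrized Bayesian network approximating $q$ would yield exactly such a classical sampler.

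First, I would choose the hard quantum distribution. The cleanest choice is an IQP-type or measurement-based quantum computation (MBQC) distribution: take a 2D cluster state on $n$ qubits (prepared via single-qubit $\ket{+}$ initializations and CZ gates), then measure every qubit in a locally chosen basis that encodes the computation. Such distributions cannot be sampled classically up to multiplicative error unless PH collapses to the third level, by Stockmeyer's theorem combined with the \#P-hardness of the corresponding output probabilities. Second, I would verify that $\mathcal{C}_n$ actually fits inside the BBQC definition of Def.~\ref{def:BQN} augmented with local basis measurements. The CZ gate is symmetric, so it can be written as a uniformly controlled gate with one control and one target; by introducing qubits in a topological order consistent with the 2D graph and issuing each CZ only when a fresh qubit becomes the target for the last time, one obtains a circuit in which each uniformly controlled gate has a single-qubit target and no target unit ever follows a control unit on the same wire. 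The non-computational-basis measurements are exactly the "basis-enhanced" ingredient.

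Third, I would close the loop with Stockmeyer's approximate counting argument. If a Bayesian network $\mathcal{B}_n$ with $\operatorname{poly}(n)$ parameters approximated $q$ to multiplicative error $1 \pm \epsilon$ for some constant $\epsilon < 1$, then (i) one can draw exact samples from $\mathcal{B}_n$ in classical polynomial time via ancestral sampling, and (ii) individual probabilities $p_{\mathcal{B}_n}(\bm{x})$ can be computed exactly in polynomial time. Multiplicative closeness transfers the hardness of estimating the BBQC probabilities to $p_{\mathcal{B}_n}$: Stockmeyer's theorem then places $\#\mathrm{P}$-hard quantities into $\mathrm{BPP}^{\mathrm{NP}}$, collapsing PH by Toda's theorem. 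This is the same logical skeleton used in Ref.~\cite{gao2018quantum}, so the only novelty needed is the BBQC realization in step two.

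The main obstacle I anticipate is step two: ensuring that the chosen hard circuit really respects the two structural constraints of a BQC (single-qubit targets, no target after a control on the same wire) while still producing a provably hard distribution after the basis-enhanced measurement. IQP circuits have commuting diagonal gates, which is a mismatch with the acyclic ancestral structure; therefore I would lean on the MBQC/cluster-state construction, where the CZ entangling layer has a manifestly DAG-compatible ordering and all the nontrivial computation is pushed into the adaptive-free local measurement bases. A secondary technical point is to make the hardness robust to the precise multiplicative-error model inherited from the KL/multiplicative-error discussion around Eq.~\eqref{eq:error}; this should be handled by standard padding and error-amplification tricks already present in Ref.~\cite{gao2018quantum}.
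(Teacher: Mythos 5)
Your overall skeleton matches the paper's: a cluster-state BBQC whose output probabilities are $\mathsf{\#P}$-hard to compute multiplicatively, combined with Stockmeyer approximate counting to force a collapse of the polynomial hierarchy, and the same check that CZ layers on a lattice respect the BQC ordering constraints. However, there are two genuine gaps in how you close the argument. First, the theorem only assumes the \emph{existence} of a Bayesian network with polynomially many parameters approximating $q$; it hands you no algorithm for finding those parameters. Your steps (i)--(ii) (exact ancestral sampling and exact evaluation of $p_{\mathcal{B}_n}(\bm{x})$ ``in classical polynomial time'') therefore hold only non-uniformly, with the network description supplied as advice. What you actually derive is $\mathsf{\#P}\subseteq\mathsf{\Sigma}^p_2/\operatorname{poly}$, not $\mathsf{\#P}\subseteq\mathrm{BPP}^{\mathrm{NP}}$, and plain Toda's theorem does not convert a non-uniform containment into a PH collapse; a Karp--Lipton-style step is required. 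This is precisely why the paper routes through $\mathsf{\Sigma}^p_2/\operatorname{poly}$ and invokes the modification of Theorem~3 of Ref.~\cite{aaronson2017implausibility} to collapse PH to the third level.

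Second, the theorem asserts a \emph{single} BBQC, i.e.\ one fixed distribution $q$, such that approximating $q(\bm{x})$ as a function of the measurement outcome $\bm{x}$ is $\mathsf{\#P}$-hard. Generic IQP/MBQC hardness statements are worst-case over a \emph{family} of circuits or measurement settings, with a fixed target outcome such as the all-zeros string; if, as you propose, ``all the nontrivial computation is pushed into the \ldots local measurement bases,'' then different hard instances correspond to different BBQCs, and a classical model need only approximate one of them. The paper resolves this with the ``single-instance-hardness'' construction of Ref.~\cite{gao2016quantum}: the measurement basis is fixed for each lattice size and the $\mathsf{\#P}$-hard problem instance is encoded in the outcome string $\bm{x}$ itself, so that a single output distribution already contains all hard instances. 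Without this (or an equivalent encoding of the instance into the outcome register), your reduction does not produce the one hard BBQC that the theorem requires.
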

For completeness, we give a brief review of the proof. First, we give a brief introduction of related concepts. Second, we introduce a specific BBQC which is used to separate the expressive power between the classical and quantum models. Third, we give a sketch of the proof. See Ref.~\cite{gao2018quantum} for more details.

\subsubsection{Related Computational Complexity Classes}

The polynomial hierarchy is a hierarchy of complexity classes that generalizes \textsf{P} and \textsf{NP}, and are denoted as $\mathsf{\Sigma}^p_0,\mathsf{\Sigma}^p_1,\mathsf{\Sigma}^p_2,\ldots$. Here, $\mathsf{\Sigma}^p_0=\mathsf{P}$, $\mathsf{\Sigma}^p_1=\mathsf{NP}$ and $\mathsf{\Sigma}^p_{i+1}=\mathsf{NP}^{\mathsf{\Sigma}^p_i}$ where $\mathsf{NP}^{\mathsf{\Sigma}^p_i}$ is called \textsf{NP} relative to $\mathsf{\Sigma}^p_i$. \textsf{NP} denotes problems which can be verified in polynomial time by a Turing machine and $\mathsf{NP}^{\mathsf{\Sigma}^p_i}$ denotes problems which can be verified in polynomial time by a Turing machine that is equipped with an oracle which can solve any $\mathsf{\Sigma}^p_i$ problems in one step. A detailed discussion can be found in~\cite{arora2009computational} or in the recent review article
on quantum supremacy~\cite{Harrow2017}. It is widely believed that the polynomial hierarchy does not collapse which means $\mathsf{\Sigma}^p_i\neq \mathsf{\Sigma}^p_{i+1}$ (which implies $\mathsf{\Sigma}^p_i\neq \mathsf{\Sigma}^p_{i+j}$ for any constant $j>0$).

\subsubsection{The Basis-Enhanced Bayesian Network Used in the Proof}\label{app:qs}
Here we give a construction of a basis-enhanced Bayesian network such that approximately computing the probability of a specific configuration up to multiplicative error is \textsf{\#P}-hard.

\begin{figure}
\centering
\includegraphics[width=0.8\linewidth]{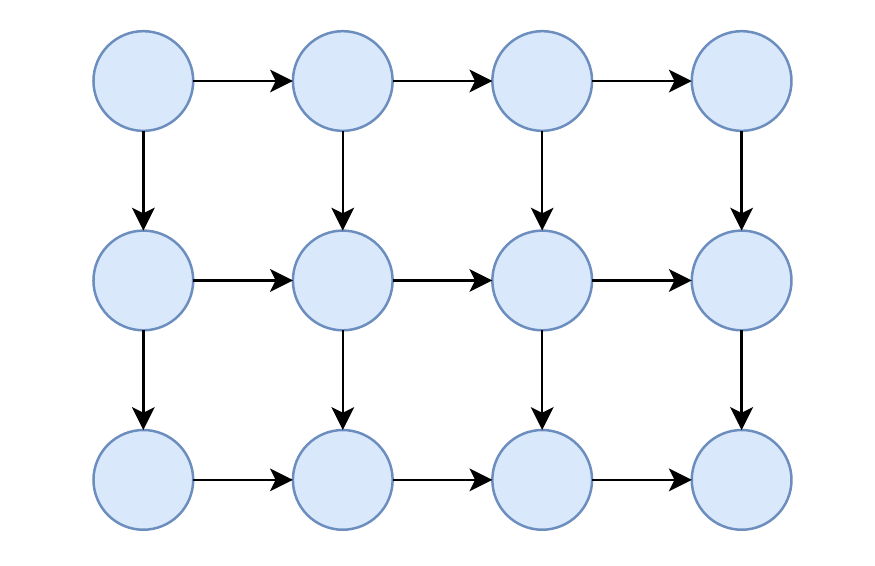}\caption{A cluster state in a Bayesian quantum circuit. A cluster is generated from initial state $\ket+^{\otimes n}$ by applying control-Z gates between each pair of neighbors on a square lattice. We may assign each edge an arrow in the way shown in this figure in order to get a directed acyclic graph. We can see that this circuit is a Bayesian quantum circuit by checking Definition~\ref{def:BQN} and noticing that, in a control-Z gate, there is no need to distinguish between control and target units. %\textsf{\#P}-hardness for quantum extension of Bayesian networks: the corresponding Bayesian quantum circuit and  in Ref.\cite{gao2016quantum}. Here is a conventional representation for graph state where each circle corresponds to an identity tensor and each edge corresponds to a Hadamard tensor as shown in Fig.\ref{fig:QGM}.
%To construct this state, we start from a brickwork of white circles \cite{broadbent2009universal} (the top side), with each white circle
%representing seven blue circles. Each blue circle represents a qubit. Then we apply $HZ(\theta)$ which is clearly an invertible matrix on each qubit with the angle $\theta$ shown at the bottom. The idea of the proof is through measurement-based quantum computing \cite{raussendorf2001one} and the equivalence between quantum computing with post-selection and \textsf{\#P} \cite{aaronson2005quantum}, and is widely used to prove quantum supremacy of sampling from dynamics of Ising interaction \cite{bremner2011classical}.
}%
\label{fig:cluster}%
\end{figure}

This BBQC begins as a cluster state on a square lattice. The corresponding Bayesian network is drawn as the graph shown in Fig.~\ref{fig:cluster}. Then, we use the measurement basis shown in Ref.~\cite{gao2016quantum}. One of the important properties of this construction is its ``single-instance-hardness,'' which means there is only one measurement basis for any fixed size; i.e.\ the probability distribution $q\left(\bm{x}\right)$ only depends on the size of the lattice. We demand this property because in the proof of exponential expressive power, we will associate a probability distribution with a problem with output that are nonnegative numbers: $\bm{x}$ specifies an instance of the problem and the task is to compute the probability given a specific $\bm{x}$, i.e.\ $q\left(\bm{x}\right)$, to multiplicative error. Thus the complexity of a probability distribution is defined as the complexity of the associated problem.

The proof of exponential expressive power works for any efficiently computable classical model. Thus it also works for any neural networks.

\subsubsection{Sketch of the Proof}

The key to separating the complexity of the classical and quantum models is formalizing a sign problem caused by quantum interference: approximately computing (up to multiplicative error) a summation of many nonnegative numbers is easier than the summation of many complex or real numbers. This can be done via Stockmeyer's theorem~\cite{stockmeyer1985approximation} (see~\cite{gao2018quantum} for an introduction oriented to the proof here); the former is inside $\mathsf{\Sigma}^p_2$ and the latter is \textsf{\#P}-hard. The same reasoning has been used to separate QGMs and general probabilistic graphical model~\cite{gao2018quantum}. Here we only give a sketch of the proof.

Assume there exists a Bayesian network which generates the joint probability $p\left(\bm{x},\bm{y},\bm{z}\right)$ such that the conditional probability $\sum_{\bm{y}}p\left(\bm{x},\bm{y}\mid\bm{z}\right)$ approximates $q\left(\bm{x}\right)$ to multiplicative error. We can use Stockmeyer's theorem to prove that, based on this assumption and supposing that the parameters of the network are given, approximating $q\left(\bm{x}\right)$ to multiplicative error is in $\mathsf{\Sigma}^p_2$.
We should keep in mind that the probability defines a problem with $\bm{x}$ specifying an instance of the problem.

%According to Stochmeyer's theorem, there exists an $\mathsf{FBPP}^\mathsf{NP}$ algorithm approximating $P_1=\sum_{\mathbf x,\mathbf y}p(\mathbf x,\mathbf y,\mathbf z)$ by $\widetilde P_1$ such that $|P_1-\widetilde P_1|\le\gamma_1P_1$ and $P_2=\sum_{\mathbf y}p(\mathbf x,\mathbf y,\mathbf z)$ by $\widetilde P_2$ such that $|P_2-\widetilde P_2|\le\gamma_2P_2$. Define $\widetilde p(\mathbf x)=\widetilde P_2/\widetilde P_1$ and we have $p(\mathbf x)=P_2/P_1$.

%the last step is because we choose $\gamma_1$ and $\gamma_2$ as sufficiently small as $1/\text{poly}(n)$. 
%Under the assumption that the representation is efficient, such $f(\mathbf x,\mathbf y)$ can be represented by a polynomial size circuit, where the circuit corresponds to the description of the factor graph. So $\widetilde p(\nonumber x)$ can be computed in $\mathsf{FBPP}^\mathsf{NP}/\text{poly}$.

However, though we can show that it is possible to approximate $q$ to multiplicative error, the proof is not constructive. More concretely, ``$/\operatorname{poly}$'' denotes that, for any fixed input size (the length of $\bm{x}$), there exists a polynomial sized classical circuit that computes all the instances of the problem, but the circuit may not be efficiently constructed~\cite{karp1982turing,arora2009computational}. In Appendix~\ref{app:qs}, we construct a BBQC such that computing $q\left(\bm{x}\right)$ to multiplicative error is $\mathsf{\#P}$-hard. Thus, assuming the efficient representation of the BBQC via classical Bayesian networks, we (roughly) obtain
\begin{equation}
\mathsf{\#P}\subseteq  \mathsf{\Sigma}^p_2/\operatorname{poly}.
\end{equation}
This implies that the polynomial hierarchy would collapse to the third level, as more formally shown in Ref.~\cite{gao2018quantum} (which follows from a modification of the reasoning of the proof of Theorem 3 in Ref.~\cite{aaronson2017implausibility}).

\subsection{Algorithms for Inference and Learning}\label{app:quantum_algorithm}

There are mainly two computational problem associated with a generative model.
One is inference, i.e.\ how to extract useful information from the representation of the generative model. Making inference on a generative model usually means computing marginal probabilities, conditional probabilities, or performing maximum likelihood estimation. With this, we can make predictions for new data after getting an approximately correct representation of a data distribution $p_{\mathcal D}$. Later on, we will give examples to show the applications of computing conditional probabilities.

The other is training (or learning), i.e.\ how to determine parameters of the generative model from training data in order to approximate $p_{\mathcal D}$. Training usually means minimizing the
KL divergence:
\begin{equation}
D(p_{\mathcal D}||p_{\bm\theta})=\sum_{\mathbf v}p_{\mathcal D}(\mathbf v)\log\left(\frac{p_{\mathcal D}(\mathbf v)}{p_{\bm\theta}(\mathbf v)}\right)
\end{equation}
between $p_{\mathcal D}$ and $p_{\bm\theta}$, the distribution of the generative model, with the whole parameter set denoted by $\bm{\theta}=\left(\theta_1,\cdots,\theta_{\operatorname{poly}\left(n\right)}\right)$. The $\bm\theta$-dependent part of $D\left(p_{\mathcal D}\mid\mid p_{\bm\theta}\right)$ can be expressed as 
\begin{equation}\label{eq:stoch_kl_div}
D\left(\bm\theta \right)\equiv-\frac{1}{N}\sum_{\bm{v}\in \text{training data set}}\log p_{\bm\theta }\left(\bm{v}\right),
\end{equation} 
where $N$ denotes the total number of data ($1/N$ approximates $p_{\mathcal D}\left(\bm{v}\right)$) and the summation is over all the training data or a batch of data for stochastic optimization. As the number of parameters is bounded by $\text{poly}\left(n\right)$,
the required data size $N$ is typically bounded by $\text{poly}\left(n\right)$~\cite{shalev2014understanding}. We can also understand optimizing $D\left(\bm \theta\right)$ as maximum likelihood estimation since $D\left(\bm\theta \right)$ is proportional to the log-likelihood $\log\left(\prod_{\mathbf v} q_{\bm\theta}\left(\bm{v}\right)\right)$. It is worth mentioning that, in addition to $D\left(\bm{\theta}\right)$ with $\bm{v}=\left(\bm{x},y\right)$, it is also usual to adopt the following loss function for supervised learning~\cite{larochelle2008classification}:
\begin{equation}
L\left(\bm{\theta}\right)\equiv-\sum_{\bm{v}\in \text{training data set}}\log q_{\bm\theta }\left(y\mid\bm{x}\right),
\end{equation}
since it is the log-likelihood $\log\left(\prod_{\bm v} q_{\bm\theta }(y\mid\bm{x})\right)$. Typically, we
minimize these loss functions via a so-called optimizer, usually the gradient descent method~\cite{goodfellow2016deep} with a proper learning rate (the step length for updating parameters) or its variations like adding a stochastic term, adjusting the learning rate adaptively, utilizing training data in batches, and so on.

\subsubsection{Heuristic Quantum Algorithms for Inference and Learning}
Even for classical Bayesian networks, the training and inference problems are computationally hard for quantum computers~\cite{chickering1996learning,dagum1993approximating}. However, there are a number of proposed heuristic and approximate algorithms that works well in practice in tackling these computation problems for classical algorithms~\cite{koller2009probabilistic}.

BBQCs have a similar problem in that exact training and inference are computationally difficult, and it is natural to propose heuristic quantum algorithms. Since Bayesian networks are a special case of probabilistic graphical models~\cite{koller2009probabilistic}, the quantum algorithm for the learning and inference problems in extensions of probabilistic graphical models~\cite{gao2018quantum} also works here. The idea is to convert the learning and inference problems to preparing ground states of a local Hamiltonian. The runtime of the quantum algorithm is proportional to the inverse of the energy gap, although we cannot guarantee that the energy gap scales as $1/\operatorname{poly}\left(n\right)$. Other heuristic quantum algorithms more specific to Bayesian networks may also exist as in the classical case.

\section{Relations Among Various Error Models}\label{app:errors}

For completeness, we also define two other error models that we refer to. One is the multiplicative error:
\begin{equation}
\left\lvert p\left(\bm{x}\right)-q\left(\bm{x}\right)\right\rvert\le\gamma q\left(\bm{x}\right), \forall \bm{x}
\end{equation}
with $\gamma$ being a constant smaller than $1/2$. $p$ approximating $q$ under this error implies that $D_{\text{KL}}\left(p\mid\mid q\right)$ is bounded by $\gamma$.
Thus $p$ approximating $q$ under this error model is a stronger requirement than a small KL divergence of $p$ from $q$. This error model is used for our complexity theory based proof of quantum advantage on general graphs in Appendix~\ref{app:general_graph}.

For translation problems, the generative models usually only define conditional probabilities $p\left(\bm{y}\mid\bm{x}\right)$ for the classical model (e.g.\ the second model in Fig.~\ref{fig:baynet}(b)) and $q\left(\bm{y}\mid\bm{x}\right)$ for the quantum extension. The prior probability for $\bm{x}$ is unspecified and we denote it as $p\left(\bm{x}\right)$ and $q\left(\bm{x}\right)=p\left(\bm{x}\right)$. Then, the KL divergence is
\begin{eqnarray}
D_\text{KL}(p\mid\mid q)&=&\sum_{\bm{x},\bm{y}}p\left(\bm{x},\bm{y}\right)\log\frac{p\left(\bm{x},\bm{y}\right)}{q\left(\bm{x},\bm{y}\right)}\\\nonumber
&=&\sum_{\bm{x}}p\left(\bm{x}\right)\left(\sum_{\bm{y}}p\left(\bm{y}\mid\bm{x}\right)\log\frac{p\left(\bm{y}\mid\bm{x}\right)}{q\left(\bm{y}\mid\bm{x}\right)}\right).
\end{eqnarray}
In order to avoid any assumptions on $p\left(\bm{x}\right)$, we require the quantity inside the brackets be bounded for any $\bm{x}$. This implies that
\begin{equation}\label{eq:error2}
q\left(\bm{y}\mid\bm{x}\right)=0\iff p\left(\bm{y}\mid\bm{x}\right)=0, \forall \bm{x}, \bm{y}.
\end{equation}

Now, let us show that the multiplicative error is bounded implies that $D_\text{KL}\left(p\mid\mid q\right)$ and $D_\text{KL}\left(q\mid\mid p\right)$ are bounded, which in turn implies the error model in Eq.~\eqref{eq:error}. We see that:

\begin{eqnarray}
\nonumber\left\lvert \sum_{\bm{x}}p\left(\bm{x}\right)\log \frac{p\left(\bm{x}\right)}{q\left(\bm{x}\right)}\right\rvert 
&\le&\sum_{\bm{x}} \left\lvert p\left(\bm{x}\right)\log\left( 1+\frac{p\left(\bm{x}\right)}{q\left(\bm{x}\right)}-1\right)\right\rvert
\\
 &\leq
&
\sum_{\bm{x}}
\left\lvert p\left(\bm{x}\right)\left(\frac{p\left(\bm{x}\right)}{q\left(\bm{x}\right)}-1\right)\right\rvert
\notag \\
\nonumber&\leq &\sum_{\bm{x}}p\left(\bm{x}\right)\gamma \\
&=&\gamma.
\end{eqnarray}
The second inequality comes from
\begin{equation}
\left\lvert p\left(\bm{x}\right)-q\left(\bm{x}\right)\right\rvert\leq\gamma q\left(\bm{x}\right)
\end{equation}
with small $\gamma$. A similar proof holds for $D_\text{KL}\left(q\mid\mid p\right)$ by exchanging $p$ and $q$.

According to the definition of $D_\text{KL}\left(p\mid\mid q\right)$, $q\left(\bm{x}\right)=0$ implies that $p\left(\bm{x}\right)=0$ in order to make $D_\text{KL}\left(p\mid\mid q\right)$ bounded. According to the definition of $D_\text{KL}\left(q\mid\mid p\right)$, $p\left(\bm{x}\right)=0$ implies that $q\left(\bm{x}\right)=0$ in order to make $D_\text{KL}\left(q\mid\mid p\right)$ bounded. Thus both $D_\text{KL}\left(p\mid\mid q\right)$ and $D_\text{KL}\left(q\mid\mid p\right)$ are bounded implies the error model in Eq.~\eqref{eq:error}.

\section{Lemma Proofs for the Mermin--Peres Magic Square}\label{app:stab}

\begin{table*}[t!]
\caption{Mermin--Peres magic square}
\begin{center}
\begin{tabular}{|c|c|c|c|}
\hline
The computational states & $\left(-1\right)^{f}I_1Z_2Z^{\bm a_3+\bm c_3}$ &  $\left(-1\right)^{g}Z_1I_2Z^{\bm b_3+\bm d_3}$ &  $\left(-1\right)^{f+g}Z_1Z_2Z^{\bm a_3+\bm b_3+\bm c_3+\bm d_3}$\\
\hline
The first graph state & $X_1I_2Z^{\bm a_3}$ & $I_1X_2Z^{\bm b_3}$ & $X_1X_2Z^{\bm a_3+\bm b_3}$\\
\hline
The second graph state & $X_1Z_2Z^{\bm c_3}$ & $Z_1X_2Z^{\bm d_3}$ & $-X_1X_2 Z_1Z_2Z^{\bm c_3+\bm d_3}$\\
\hline
\end{tabular}
\end{center}
\label{tab}
\end{table*}%

First, we prove the following lemma:

\begin{lem}\label{lem:square}
For any subset of stabilizer states $s$ such that $\left|s\right|>2^{n^2/4+7n/2}$, there exist three states such that some of their stabilizers forming a Mermin-Peres magic square as shown in Table~\ref{tab}.
\end{lem}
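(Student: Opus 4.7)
The plan is to follow the four-step blueprint sketched after the lemma: normalize the stabilizer states, apply pigeonhole to extract a structured triple, and then verify by inspection that this triple realizes Table~\ref{tab:square}.

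First I would invoke the standard normal form for stabilizer states. Using classical Clifford unitaries (those built from CNOT and Pauli-$X$, which permute computational basis vectors and therefore send single computational-basis product states to single computational-basis product states), any $n$-qubit stabilizer state is equivalent to $\bigl(\prod_i S_i^{c_i}\bigr)\ket{G}\otimes\ket{\bm z}$, where $\ket{\bm z}\in\{0,1\}^{n-k}$ is a computational-basis product on the ``classical'' qubits, $\ket{G}$ is the graph state of a graph $G$ on the remaining $k$ ``superposition'' qubits, and each $c_i\in\{0,1\}$ records a phase-gate decoration on the $i$th superposition qubit. This reduction is obtained by Gaussian elimination on the stabilizer tableau over $\mathbb F_2$. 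I call the tuple (superposition support, string $\bm z$, graph $G$, phase vector $\bm c$) the \emph{shape} of the state.

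Second, I would apply a pigeonhole argument. Bounding support choices, strings $\bm z$, graphs $G$, and phase decorations using the usual stabilizer-counting estimates (the same bookkeeping that underlies the $2^{n^2/2+o(n^2)}$ count of Lemma~\ref{lem:NoStabilizers}), the number of distinct shapes is at most $2^{n^2/4+O(n)}$. A family $s$ with $|s|>2^{n^2/4+7n/2}$ therefore contains at least $4^n\cdot 2$ states with identical shape up to phase decoration and the residual classical Clifford used to bring them to normal form. The slack exponent $7n/2$ is chosen exactly so that within this equivalence class I can simultaneously isolate a computational-basis product state $\ket{\bm w}$ (degenerating to $k=0$ on the relevant qubits) and two states whose underlying graphs $G_1,G_2$ differ. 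Post-composing by single-qubit $S$ gates, which preserve $\ket{\bm w}$, I may take the latter two to be canonical graph states $\ket{G_1}$ and $\ket{G_2}$ on a common support.

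Third, I would assemble the magic square from $\ket{\bm w}$, $\ket{G_1}$, and $\ket{G_2}$. Since $G_1\ne G_2$, there exist two qubits $i,j$ in the common support whose adjacency patterns in $G_1$ and $G_2$ do not match at the pair $(i,j)$. Set $A=X_i Z^{N_{G_1}(i)}$ and $a=X_j Z^{N_{G_1}(j)}$ in the stabilizer group of $\ket{G_1}$, and $B=X_i Z^{N_{G_2}(i)}$ and $b=X_j Z^{N_{G_2}(j)}$ in the stabilizer group of $\ket{G_2}$. The products $Aa$, $Bb$, $AB$, $ab$, $ABab$ all have their $X$ factors cancel and reduce to $Z$-strings on the support, which are stabilizers of $\ket{\bm w}$ up to signs absorbed into the corresponding eigenvalues. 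The required commutation pattern $[A,a]=[B,b]=[A,B]=[a,b]=0$ and $\{A,b\}=\{a,B\}=0$ is then a parity check on $\{i,j\}$-overlaps between $N_{G_1}$ and $N_{G_2}$, and the single anticommutation obstruction forces the $-1$ sign on $ABab$. This yields Table~\ref{tab:square}.

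The main obstacle is the quantitative bookkeeping in the second step. I would need to establish that the number of shapes really is $2^{n^2/4+O(n)}$ rather than the naive $2^{n^2/2}$ one would get from enumerating adjacency matrices without accounting for the normalization constraints of stabilizer states, and then check that the exponent $7n/2$ in the hypothesis is enough to absorb the additive overhead from support choices, $\bm z$, residual phase decorations, and the classical Clifford post-processing while still leaving the factor $4^n\cdot 2$ of multiplicity needed to extract both a product state and a differing pair of graph states. Once the triple is in hand, the magic-square identities reduce to a short algebraic check on the chosen qubit pair $(i,j)$.
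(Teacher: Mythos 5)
Your step three---building the square from a computational-basis state and two graph states that differ at an edge, with $A=X_iZ^{N_{G_1}(i)}$, $b=X_jZ^{N_{G_2}(j)}$, etc.---is correct and is exactly the paper's construction. The gap is in step two, and it is structural rather than a matter of bookkeeping. You pigeonhole on a ``shape'' that includes the graph $G$. The number of graphs on $n$ vertices is already $2^{\binom{n}{2}}=2^{n^2/2-n/2}$, so the number of shapes is $2^{n^2/2+O(n)}$, not $2^{n^2/4+O(n)}$; a family of size $2^{n^2/4+7n/2}$ is far too small for that pigeonhole to return anything. You flag this worry yourself, but it cannot be repaired in this form---and even if it could, every state in a single shape class would share the \emph{same} graph $G$, which is precisely what you cannot afford, since step three needs $G_1\neq G_2$. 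The paper pigeonholes only on the affine support $A$ (at most $\sum_k 2^{(n-k+1)k}\cdot 2^n\le 2^{n^2/4+3n/2}$ choices) and then, within a fixed $A$, on the linear phase $\bar l$ (at most $4^n$ choices), deliberately leaving the quadratic part---the graph---unconstrained; two distinct states surviving both pigeonholes are then forced to be graph states with \emph{different} graphs. This is exactly why the threshold factorizes as $2^{n^2/4+3n/2}\cdot 4^n\cdot 2$.

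You also have no mechanism for producing the computational-basis product state $\ket{\bm w}$: the family $s$ is arbitrary and need not contain any such state, and it certainly cannot be extracted from the same pigeonhole class as the graph states, since it has a different superposition support. The paper instead picks an arbitrary member of $s$ and conjugates the \emph{entire} family by a single global Clifford mapping that member to $\ket{0}^{\otimes n}$; since the existence of a Mermin--Peres square among stabilizers is Clifford-covariant, it suffices to exhibit the square in the transformed frame. The subsequent CNOT/X normalization of $A$ maps $\ket{0}^{\otimes n}$ to another computational-basis product state $\ket{z_1\cdots z_n}$, and the final diagonal $S$/$Z$ gates preserve it, so the designated computational-basis state survives every normalization step. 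Without this global change of frame, the triple you need does not exist in general.
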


\begin{proof}
We write $\left|s\right| =a\left(n\right)b\left(n\right)c\left(n\right)+1$ where $a\left(n\right)=2^{n^2/4+3n/2}$, $b\left(n\right)=4^n$ and $c\left(n\right)=2$; their meaning will be clear later. Given $\left|s\right|$ stabilizer states, we can always transform these states to another set of stabilizer states by a Clifford circuit such that one of the state will become $\ket{0}^{\otimes n}$ and the other $a\left(n\right)b\left(n\right)c\left(n\right)$ states have the following form (see Ref.~\cite{nest2008classical}):
\begin{equation}
\ket{\psi}\propto \sum_{x\in A}\left(-1\right)^{q\left(x\right)}i^{l\left(x\right)}\ket{x},
\end{equation}
where $A$ is an affine subspace of $\mathbb{Z}_2^n$ and $q\left(x\right)$ and $l\left(x\right)$ are quadratic and linear functions on $\mathbb Z_2$ and $\mathbb Z_4$, respectively. The state is determined by $A, q, l$.  

We denote $a\left(n\right)$ as the number of different $A$: an affine subspace is composed of a linear subspace and a displacement, and thus there are at most 
\begin{equation}
a\left(n\right)\leq\sum_k 2^{\left(n-k+1\right)k}\times 2^n=2^{n^2/4+3n/2}
\end{equation}
possible $A$, where the first term involving summation over $k$ is the number of linear subspaces (where $k$ the dimension of the subspace---see Theorem~2.14 of~\cite{z2_vec_space}) and the second term $2^n$ is the number of possible displacements.

According to the pigeonhole principle, we can prove that we now have the $\ket{0}^{\otimes n}$ state and at least $b\left(n\right)c\left(n\right)$ states belonging to the same affine subspace $A$. Those $b\left(n\right)c\left(n\right)$ states only differ by $q$ and $l$.
Using $C_\text{CNOT}$ and $C_\text{X}$ which are circuits only composed of CNOT and Pauli X gates respectively, these states can be transformed to be of the form \begin{equation}\label{eq:graph_state}
\sum_{u\in\left\{0,1\right\}^k}\left(-1\right)^{\bar q\left(u\right)}i^{\bar l\left(u\right)}\ket{u}\ket{0}^{\otimes \left(n-k\right)},
\end{equation}
which are graph states over the first $k$ qubits. These circuits simultaneously transform $\ket{0}^{\otimes n}$ to a state of the form $\ket{z_1z_2\ldots z_n}$.
% Now we have the $\ket{0}^{\otimes n}$ state and at least $b(n)c(n)$ states belonging to the same affine subspace $A$. By applying $C_\text{CNOT}^\dag  C_\text{X}^\dag$, we have a $\ket{z_1z_2\cdots z_n}$ state and at least $b(n)c(n)$ states having the form in Eq.(\ref{eq:graph_state}) on the first $k$ qubits.

Denote $b\left(n\right)$ as the number of $\bar l$ which is no greater than $4^n$ (where the worst case is $k=n$). Then we have the $\ket{z_1z_2\ldots z_n}$ state and at least $c\left(n\right)$ graph states after applying $S$ or $Z$ gates to eliminate $\bar l$ . As long as $c\left(n\right)\ge 2$, we can always find two graph states such that there exists a pair of vertices where there is no edge for the first graph and there is an edge for the second graph. Without loss of generality, we may assume this pair is comprised of qubits $1$ and $2$.
% Then we show that we can always find three states up to a Clifford transformation: the first one is a computational state and the other two are different graph states.

The first graph state (without an edge between qubits $1$ and $2$) has stabilizer generators $X_1I_2Z^{\bm a_3}$ and $I_1X_2Z^{\bm b_3}$ and the second has generators $X_1Z_2Z^{\bm c_3}$ and $Z_1X_2Z^{\bm d_3}$ where $\bm a_3,\bm b_3,\bm c_3,\bm d_3$ are $n-2$ dimensional vectors on $\mathbb Z_2$. The computational state has generators $\left\langle (-1)^{e_i}Z_i\right\rangle$ so it could have any Pauli Z type stabilizers up to $\pm$ signs. Then, we have the Mermin--Peres magic square given by Table~\ref{tab}.
In this table, operators in each row commute with each other since they are chosen from stabilizers of the same quantum states. It is also easy to check that operators in each column commute with each other. The first two Paulis in each observable form the ``traditional'' Mermin square, and the Pauli $Z$s after the first two qubits do not change the commutation relations between observables. Thus, this table forms a Mermin--Peres magic square and thus exhibits contextuality.
\end{proof}

Second, we prove the following lemma:

\begin{lemma}
If three stabilizer states and a subset of their stabilizers form a Mermin--Peres magic square as shown in Table~\ref{tab:square}, the intersection of their support in an ontological theory should be empty in order to be consistent with quantum mechanics.\end{lemma}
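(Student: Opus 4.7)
The plan is to argue by contradiction, following the same template as the worked example in the main text (with $\ket{00}$, $\ket{++}$ and $Y_1Y_2$) but lifting it to an abstract argument that uses only the commutation data in Table~\ref{tab:square}. Suppose there exists a hidden variable $\lambda_i$ lying in $\operatorname{supp}\left(\ket{\psi_1}\right)\cap\operatorname{supp}\left(\ket{\psi_2}\right)\cap\operatorname{supp}\left(\ket{\psi_3}\right)$. Because $ABab$ is a stabilizer of $\ket{\psi_3}$ with eigenvalue $+1$ and $\lambda_i\in\operatorname{supp}\left(\ket{\psi_3}\right)$, Eq.~\eqref{eq:ontological} forces $p\left(+1\mid ABab,\lambda_i\right)=1$. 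Consequently the transition rule $\Gamma_{(ABab,+1)}(\,\cdot\,\mid\lambda_i)$ is a bona fide probability distribution, so there exists some $\lambda_j$ with $\Gamma_{(ABab,+1)}(\lambda_j\mid\lambda_i)>0$. Faithfulness of the ontological theory then demands
\begin{equation}
\lambda_j\;\in\;\operatorname{supp}\!\left(\tfrac{1+ABab}{2}\ket{\psi_1}\right)\cap\operatorname{supp}\!\left(\tfrac{1+ABab}{2}\ket{\psi_2}\right).
\end{equation}

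The heart of the argument is to show these two post–measurement states are orthogonal, which by the rule from Sec.~\ref{sec:Ontological} is incompatible with a shared $\lambda_j$. The key identity is $(Aa)(Bb)=-ABab$, which I would derive directly from the commutation relations stated beneath Table~\ref{tab:square}: moving $B$ past $a$ costs a sign by $\{a,B\}=0$, while all other adjacent pairs commute. From this identity, $Aa$ and $Bb$ commute with $ABab$, so both remain stabilizers of the projected states. Since $Aa\ket{\psi_1}=+\ket{\psi_1}$, the projected state $\ket{\psi_1'}\propto(1+ABab)\ket{\psi_1}$ still satisfies $Aa\ket{\psi_1'}=+\ket{\psi_1'}$; combined with $ABab\ket{\psi_1'}=+\ket{\psi_1'}$ and $(Aa)(Bb)=-ABab$, this yields $Bb\ket{\psi_1'}=-\ket{\psi_1'}$. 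An identical calculation on $\ket{\psi_2}$ gives $Bb\ket{\psi_2'}=+\ket{\psi_2'}$. Thus $\ket{\psi_1'}$ and $\ket{\psi_2'}$ are eigenstates of the Hermitian operator $Bb$ with opposite eigenvalues, hence orthogonal.

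Finally, I invoke the observation from Sec.~\ref{sec:Ontological}: two quantum states that are orthogonal eigenstates of a single admissible observable cannot share a hidden variable in their supports, because any common $\lambda_j$ would have to simultaneously reproduce two distinct deterministic outcomes for $Bb$. This contradicts the existence of $\lambda_j$ produced by the transition, and hence the original assumption that $\lambda_i$ lies in all three supports. Therefore Eq.~\eqref{eq:3states} holds, completing the proof.

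The main obstacle is conceptual bookkeeping rather than any deep calculation: one must carefully verify that $Aa$, $Bb$, and $ABab$ really mutually commute (so that projection does not spoil them as stabilizers) and that the sign in $(Aa)(Bb)=-ABab$ comes out correctly from exactly the given anticommutators $\{A,b\}=\{a,B\}=0$. Everything else is a faithful translation of the $\ket{00}$–$\ket{++}$ example. The proof is also manifestly independent of whether $A,a,B,b$ act on two qubits or are the more general multi-qubit Paulis produced in Lemma~\ref{lem:square} (including the extra Pauli-$Z$ strings on ancillary qubits), since those extensions do not alter the commutation structure used above.
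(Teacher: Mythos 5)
Your proposal is correct and follows essentially the same route as the paper's proof: assume a common $\lambda_i$, use $ABab\ket{\psi_3}=\ket{\psi_3}$ to force a deterministic $+1$ outcome, pass to a $\lambda_j$ in the supports of the two projected states, and derive a contradiction from their orthogonality, which in both arguments rests on the identity $(Aa)(Bb)=-ABab$. The only (cosmetic) difference is that you exhibit $\tfrac{1+ABab}{2}\ket{\psi_1}$ and $\tfrac{1+ABab}{2}\ket{\psi_2}$ as opposite-eigenvalue eigenstates of $Bb$, whereas the paper computes $\bra{\psi_1}\tfrac{1+Aa}{2}\tfrac{1+ABab}{2}\tfrac{1+Bb}{2}\ket{\psi_2}=0$ directly.
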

\begin{proof}
The proof basically follows the discussion in the main text with the example in  Eq.~\eqref{eq:eg}, but written in a more general way. Assume there is a common $\lambda$ in the intersection among the supports of the three states $\ket{\psi_1}$, $\ket{\psi_2}$, and $\ket{\psi_3}$.

It is simple to show the following two equations:
\begin{eqnarray}
 \notag\frac{1- ABab}{2}\ket{\psi_3}&=&0, \text{ or equivalently }\\ \frac{1+ABab}{2}\ket{\psi_3}&=&\ket{\psi_3},
\end{eqnarray}
  and 
\begin{eqnarray}
\notag \frac{1+ ABab}{2}\ket{\psi_1} &\perp &
\frac{1+ ABab}{2}\ket{\psi_2},\text{ or equivalently }\\\notag
\bra{\psi_1} \frac{1+ ABab}{2} \ket{\psi_2}&=&
\bra{\psi_1}\frac{1+ Aa}{2} \frac{1+ ABab}{2}\frac{1+ Bb}{2} \ket{\psi_2}\\
&=&0.
\end{eqnarray}
The first set of equalities forces the measurement result of $\lambda$ to deterministically be $+1$. The second set of equalities shows that the resulting states of $\ket{\psi_1}$ and $\ket{\psi_2}$ after measuring $ABab$ and getting $+1$ are orthogonal. Thus, there is a contradiction.\end{proof}

\section{Robust Separation of $k$-gram Model Under $l_1$-Distance}\label{app:nonlocality}

Here we prove that any $k$-gram model with the probability distribution $p$ with $k<n/6$ cannot approximate a particular basis-enhanced 2-gram model with the probability distribution $q$ to $l_1$-distance smaller than $1/288$, i.e.
\begin{equation}\label{eq:l1}
\sum_{\bm{x}}\left\lvert p\left(\bm{x}\right)-q\left(\bm{x}\right)\right\rvert \ge 1/288.
\end{equation}
For simplicity, we assume $n=3+4k$. The key to proving the separation between the quantum extension and its classical counterpart is through a Bell test of the GHZ state through measurements in the $X$ and $Y$ bases. By measuring the remaining qubits, we obtain a GHZ state up to three single qubit Clifford gates. However, as we restricted measurement to the $X$ and $Y$ bases, this does not always hold. The following lemma gives the probability of still having nonlocality.
\begin{lemma}
The probability of measuring the remaining qubits to get a GHZ state up to Pauli and $S$ gates is larger than $1/9$. We call this measurement a \emph{GHZ-type measurement}.
\end{lemma}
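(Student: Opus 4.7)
The plan is to track the evolution of the cluster-state stabilizer group through the ancillary measurements and to count the measurement outcomes that leave the three chosen physical qubits in a state that is GHZ up to Pauli and $S$ corrections. I would start from the cluster-state stabilizers $X_i Z_{i-1} Z_{i+1}$ on the 1D physical-qubit chain together with the basis qubits in each pair that control the $X$-versus-$Y$ measurement basis of the physical qubit. Each $b_i$ outcome from measuring a basis qubit determines which Pauli is measured on the corresponding physical qubit, and each $X$ or $Y$ measurement on an ancillary physical cluster qubit acts, via the standard MBQC teleportation identity, as propagation of the ``logical'' degree of freedom through the chain with an effective single-qubit gate $HS^{b_i}$ together with a Pauli byproduct depending on $s_i$.

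Next, I would compose these per-site operations over each of the two segments lying between the three chosen pairs. Since, by the main-text construction, each segment contains an even number of ancillary pairs, the composition $\prod_i HS^{b_i}$ naturally groups into factors of the form $HS^{b_{2j}} H S^{b_{2j-1}}$. Using $H^{2}=I$, such a pair simplifies to $S^{b_{2j-1}}$ precisely when $b_{2j}=0$, while otherwise it contains the element $HSH=\sqrt{X}$, which is not in the subgroup generated by Pauli and $S$. Thus the effective per-segment operation lies in $\langle P,S\rangle$ exactly when the even-indexed basis outcomes within each segment vanish, and any Hadamard needed to rotate the residual three-qubit cluster stabilizer state into a GHZ state can be absorbed into the final $X/Y$ measurement chosen on the outer qubits.

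Finally, I would lower bound the probability of a favorable outcome by combining (i) the probability that the three chosen basis qubits satisfy the parity condition $b_1\oplus b_2 \oplus b_3=0$, which equals $1/2$ by uniformity, and (ii) the probability that each of the two segments yields an in-subgroup effective operation. The latter factorizes segment-wise and, after accounting for additional favorable patterns in which $Y$ measurements occur in pairs whose contributions combine to $S^2 = Z\in\mathcal{P}$, gives a combined probability strictly larger than $1/9$. This bound is loose but explicit and suffices for the downstream $\ell_1$-distance argument leading to the constant $1/288$.

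The main obstacle I anticipate is the combinatorial bookkeeping: each ancillary measurement contributes both a basis outcome $b_i$ controlling the effective gate and a physical outcome $s_i$ producing a Pauli byproduct, and both must be propagated consistently through the cluster including the two segment boundaries and the cluster-to-GHZ rotation on the outer chosen qubits. I would adapt the stabilizer-tracking techniques of Ref.~\cite{bravyi2018quantum}, where a structurally similar cluster-state-with-basis-qubit analysis appears, and exploit the uniformity of measurement outcomes on stabilizer states to carry out the enumeration cleanly.
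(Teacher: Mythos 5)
There are two genuine gaps here. First, your combinatorial criterion for a segment to be ``favorable'' is not correct as stated, and the enumeration strategy built on it does not yield a $k$-independent constant. You claim the per-segment product lies in $\langle \mathcal{P},S\rangle$ exactly when the even-indexed basis outcomes vanish; but this condition is only sufficient for each \emph{individual} factor $HS^{b_{2j}}HS^{b_{2j-1}}$ to reduce to a power of $S$, not necessary for the full product — e.g.\ $(HSH)(HSH)=HZH=X\in\mathcal{P}$, so two ``bad'' factors can recombine into a favorable one. If you only count the patterns where every $HSH$ is absent, the per-segment probability is $2^{-k}$, which vanishes as $k$ grows; the ``additional favorable patterns'' you mention must therefore carry essentially the entire weight, and you have not shown they sum to a constant. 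The paper avoids this bookkeeping entirely by passing to the quotient of the single-qubit Clifford group by the Pauli group, which is $S_3$ acting by permutation on $\{X,Y,Z\}$: each block $HS^{a}HS^{b}$ is a uniformly random element of $\{e,(23),(12),(12)(23)\}$, the product is a random walk on $S_3$ whose transfer matrix can be diagonalized explicitly, and the probability of landing in the favorable coset pair $\{I,S\}$ (density $2/6=1/3$) after $k$ steps is exactly $\tfrac{1}{3}+\tfrac{2}{3}4^{-k}>\tfrac{1}{3}$ for every $k$. Squaring over the two independent segments gives the stated $>1/9$.

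Second, the factor of $1/2$ you include for the parity condition $b_1\oplus b_2\oplus b_3=0$ does not belong in this lemma: the statement concerns only the measurement of the \emph{remaining} (ancillary) qubits, and the Bell-test settings on the three chosen pairs are handled in a separate lemma downstream (the one bounding $\sum_{\bm{a},\bm{t}\in\operatorname{sol}_{\bm{a}}}q(\bm{a})p(\bm{t}\mid\bm{a},\text{GHZ-type})\le 7/8$). Worse, with the correct per-segment probability of $1/3$, your product $\tfrac{1}{2}\cdot\tfrac{1}{3}\cdot\tfrac{1}{3}=\tfrac{1}{18}$ falls short of the claimed $1/9$, so even granting the segment analysis your arithmetic would not close. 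Your physical setup (MBQC teleportation identities, $HS^{b_i}$ per ancilla, Pauli byproducts from the $s_i$, absorbing the residual rotation into the outer measurement bases) matches the paper's; the missing idea is the reduction to an equidistributing random walk on $S_3$ and the removal of the extraneous parity factor.
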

\begin{proof}
Suppose the measurement basis and results for the remaining qubits are $b_1s_1b_2s_2\cdots $ with equal probability. Then the resulting state is
\begin{equation}
\sigma_1C_1\otimes \sigma_3 \otimes \sigma_2C_2 \ket{\text{GHZ}},
\end{equation}
where $\sigma_i$ is a Pauli matrix and
\begin{eqnarray}
\notag C_1&=&HS^{s_1}H\cdot S^{s_2}\cdot  \cdots \cdot HS^{s_{2k-1}}H \cdot S^{s_{2k}},\\
C_2&=&HS^{s_{2k+1}}H\cdot S^{s_{2k+2}}\cdot  \cdots \cdot HS^{s_{4k-1}}H \cdot S^{s_{4k}}.
\end{eqnarray}
We only need to prove that the probability of $C_1$ equaling $I$ or $S$ up to a Pauli matrix is at least $1/3$.
 
 All of the single qubit Clifford gates can be represented as permutations in $S_3$ among single qubit Pauli matrices up to an unimportant phase factor. $HSH$ and $S$ can be regarded as $\left(12\right)$ and $\left(23\right)$, which are generators of $S_3$. Starting from $I$, each time we apply $HS^{s_{2i-1}}H\cdot S^{s_{2i}}$ with probability $1/4$ for all of the choices of $s_{2i-1},s_{2i}$, we obtain a random walk among the 6 group elements of $S_3$. The transfer matrix is:
 \begin{equation}
\renewcommand\arraystretch{1.2}{\begin{pmatrix}
 \frac{1}{4} & \frac{1}{4} & \frac{1}{4} & 0 & \frac{1}{4} & 0 \\
 \frac{1}{4} & \frac{1}{4} & \frac{1}{4} & 0 & \frac{1}{4} & 0 \\
 \frac{1}{4} & 0 & \frac{1}{4} & \frac{1}{4} & 0 & \frac{1}{4} \\
 \frac{1}{4} & 0 & \frac{1}{4} & \frac{1}{4} & 0 & \frac{1}{4} \\
 0 & \frac{1}{4} & 0 & \frac{1}{4} & \frac{1}{4} & \frac{1}{4} \\
 0 & \frac{1}{4} & 0 & \frac{1}{4} & \frac{1}{4} & \frac{1}{4} \\
\end{pmatrix}}.
\end{equation}
By solving for the eigenstates and eigenvalues, and choosing an initial state of $\left(1,0,0,0,0,0\right)$, we find that after $k$ steps the probability to get $I$ and $S$ up to Pauli operators (which are $\left(1\right)$ and $\left(12\right)$ in $S_3$) is given by
\begin{equation}
\frac{1}{3}+\frac{2}{3}4^{-k}>\frac{1}{3},
\end{equation}
which proves the lemma.
\end{proof}

\begin{lemma}
For distributions $p$ and $q$, and any positive number $c$,
\begin{equation}
\sum_{\bm{x}}\left|p\left(\bm{x}\right)-cq\left(\bm{x}\right)\right|\ge\frac{\sum_{\bm{x}}|p\left(\bm{x}\right)-q\left(\bm{x}\right)|}{2}.
\end{equation}\end{lemma}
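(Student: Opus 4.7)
The plan is to exploit the fact that both $p$ and $q$ are normalized probability distributions (so $\sum_{\bm{x}} p(\bm{x}) = \sum_{\bm{x}} q(\bm{x}) = 1$), together with the triangle inequality applied in two complementary ways. The key observation is that the $l_1$ distance $\sum_{\bm{x}}|p(\bm{x}) - cq(\bm{x})|$ simultaneously controls both the discrepancy between $p$ and $q$ (via the triangle inequality routed through $cq$) and the deviation $|c-1|$ of the normalization (since $\sum_{\bm{x}}(p(\bm{x}) - cq(\bm{x})) = 1 - c$).

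First, I will apply the triangle inequality $|p(\bm{x}) - q(\bm{x})| \le |p(\bm{x}) - cq(\bm{x})| + |cq(\bm{x}) - q(\bm{x})|$, sum over $\bm{x}$, and use $\sum_{\bm{x}} |cq(\bm{x}) - q(\bm{x})| = |c-1|\sum_{\bm{x}} q(\bm{x}) = |c-1|$, to obtain
\begin{equation}
\sum_{\bm{x}}|p(\bm{x}) - q(\bm{x})| \;\le\; \sum_{\bm{x}}|p(\bm{x}) - cq(\bm{x})| + |c-1|.
\end{equation}
Second, I will bound $|c-1|$ itself by the same quantity: since $\sum_{\bm{x}}(p(\bm{x}) - cq(\bm{x})) = 1 - c$, we have
\begin{equation}
|c-1| \;=\; \left|\sum_{\bm{x}}(p(\bm{x}) - cq(\bm{x}))\right| \;\le\; \sum_{\bm{x}}|p(\bm{x}) - cq(\bm{x})|.
\end{equation}

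Combining these two inequalities immediately gives $\sum_{\bm{x}}|p(\bm{x}) - q(\bm{x})| \le 2 \sum_{\bm{x}}|p(\bm{x}) - cq(\bm{x})|$, which is the desired bound. There is no real obstacle here; the only subtle point is recognizing that the normalization identity for $p - cq$ lets us absorb the scaling mismatch $|c-1|$ into the same quantity we are lower-bounding, so that the two triangle-inequality estimates reinforce rather than fight each other. The factor $1/2$ is tight, as seen by taking $p$ and $q$ disjointly supported and $c$ so that $cq$ matches $p$ on one support, which saturates both inequalities simultaneously.
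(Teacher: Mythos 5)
Your proof is correct and uses the same two ingredients as the paper's: the reverse triangle inequality $\sum_{\bm{x}}|p-cq|\ge\sum_{\bm{x}}|p-q|-|1-c|$ and the normalization bound $\sum_{\bm{x}}|p-cq|\ge\left|\sum_{\bm{x}}(p-cq)\right|=|1-c|$. The only difference is cosmetic: the paper combines these via a case split on whether $|1-c|\ge\frac{1}{2}\sum_{\bm{x}}|p-q|$, whereas you substitute one bound into the other directly, which is slightly cleaner.
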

\begin{proof}
Denote $\delta=\sum_{\bm{x}}\left|p\left(\bm{x}\right)-q\left(\bm{x}\right)\right|$. We consider the following 2 cases:
\begin{itemize}

\item $\left|1-c\right|\ge\delta/2$:
\begin{eqnarray}
\notag \sum_{\bm{x}}\left|p\left(\bm{x}\right)-cq\left(\bm{x}\right)\right| & \ge & \left|\sum_{\bm{x}}p\left(\bm{x}\right)-cq\left(\bm{x}\right)\right|\\
\notag&=& \left|1-c\right|\\
&\ge&\delta/2.
\end{eqnarray}

\item $\left|1-c\right|\le\delta/2$:
\begin{eqnarray}
\notag \sum_{\bm{x}}\left|p\left(\bm{x}\right)-cq\left(\bm{x}\right)\right| & \ge & \sum_{\bm{x}}\left|p\left(\bm{x}\right)-q\left(\bm{x}\right)\right|-\left|\left(1-c\right)q\left(\bm{x}\right)\right|\\
\notag&=& \delta-\left|1-c\right|\\
&\ge&\delta/2.
\end{eqnarray}\end{itemize}\end{proof}

\begin{lemma}
Denote the measurement bases and results for the 3 chosen qubits as $\bm{a}=\left(a_1,a_2,a_3\right)$ and $\bm{t}=\left(t_1,t_2,t_3\right)$, respectively. Then,
\begin{equation}
\sum_{\bm{a},\bm{t}\in\operatorname{sol}_{\bm{a}}}q\left(\bm{a}\right)p\left(\bm{t}\mid\bm{a},\text{GHZ-type}\right)\le \frac{7}{8},
\end{equation}
where $\operatorname{sol}_{\mathbf a}$ means $\bm{t}$ satisfy Eq.~\eqref{eq:nonlocality} up to flips of some $b_i$ and $s_i$ determined by the GHZ-type measurement basis and results.
\end{lemma}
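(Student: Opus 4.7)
The plan is to combine two ingredients: (i) the classical conditional $p(\bm t\mid\bm a,\text{GHZ-type})$ is a local hidden variable (LHV) distribution, and (ii) any LHV satisfies a Mermin--GHZ inequality whose essential structure is preserved under the local-Clifford corrections implicit in ``GHZ-type.'' For (i), I would reuse the factorization~\eqref{eq:HVT}: for any fixed outcome pattern $\pi$ of the ancillary $(b_i,s_i)$ variables, the three chosen pairs sit more than $2k$ apart in the $k$-gram dependency chain, so $p(\bm t\mid\bm a,\pi)$ splits as $f_1(t_1,a_1)f_2(t_2,a_2)f_3(t_3,a_3)$. Since ``GHZ-type'' is a disjoint union of such patterns, $p(\bm t\mid\bm a,\text{GHZ-type})$ is a convex mixture of factorized distributions, i.e.\ an LHV of the form $\sum_\lambda \mu(\lambda)\prod_i p(t_i\mid a_i,\lambda)$.

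For (ii), I would observe that, in each fixed GHZ-type branch, the three chosen qubits lie in $\sigma_1 C_1\otimes\sigma_3\otimes\sigma_2 C_2\ket{\text{GHZ}}$ with $\sigma_i$ Paulis and $C_{1,2}\in\{I,S\}$. Among the eight $\bm a\in\{0,1\}^3$, exactly four label measurement settings whose effective observable has definite eigenvalue on the corrected state (``nontrivial'' settings), while the remaining four have vanishing expectation (``trivial'' settings). The product of the four nontrivial effective observables equals $-I$---conjugation by the local Clifford corrections preserves the raw identity $X_1X_2X_3\cdot X_1Y_2Y_3\cdot Y_1X_2Y_3\cdot Y_1Y_2X_3=-I$---so the XOR of the four target parities $T_{\bm a}$ remains $1$. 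A deterministic LHV assignment $(\alpha_i,\beta_i)$ forces the XOR of the four LHS to be $0$ (every $\alpha_i$ and $\beta_i$ appears an even number of times), so it can match at most three of the four targets; by convexity this passes to the mixture, yielding $\sum_{\bm a\text{ nontrivial}}p(\operatorname{sol}_{\bm a}\mid\bm a,\text{GHZ-type})\le 3$. For the four trivial settings, the quantum distribution is uniform over $\{0,1\}^3$, so $\operatorname{sol}_{\bm a}=\{0,1\}^3$ and $p(\operatorname{sol}_{\bm a}\mid\bm a,\text{GHZ-type})=1$ holds automatically.

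Combining these with $q(\bm a)=1/8$, which is uniform since the basis-choice qubits of the three chosen pairs are independent of the ancillary-qubit outcomes defining ``GHZ-type,'' gives
\begin{equation*}
\sum_{\bm a,\bm t\in\operatorname{sol}_{\bm a}} q(\bm a)\,p(\bm t\mid\bm a,\text{GHZ-type})\;\le\;\tfrac{1}{8}(3+4)\;=\;\tfrac{7}{8}.
\end{equation*}

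I expect the main obstacle to be the bookkeeping across GHZ-type branches: different branches induce different Pauli and $S$ corrections, which permute which $\bm a$ are nontrivial and modify the target parities $T_{\bm a}$. Verifying that the $-I$ identity (and hence $\bigoplus T_{\bm a}=1$) is preserved uniformly under every possible Clifford correction, so that the three-out-of-four Mermin bound and the trivial-setting identification $\operatorname{sol}_{\bm a}=\{0,1\}^3$ both hold in each branch, is the delicate step; once this algebra is done per branch, linearity across branches and the uniformity of $q(\bm a)$ assemble the per-branch bounds into the final $7/8$ inequality.
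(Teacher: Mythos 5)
Your proposal is correct and follows essentially the same route as the paper's proof: reduce $p\left(\bm{t}\mid\bm{a},\text{GHZ-type}\right)$ to a local hidden variable model via the factorization of Eq.~\eqref{eq:HVT}, then invoke the GHZ/Mermin parity obstruction to conclude that each deterministic hidden-variable assignment fails on at least one of the eight settings $\bm{a}$, which together with $q\left(\bm{a}\right)=1/8$ yields the $7/8$ bound. The paper states the ``at least one setting fails'' step without spelling out the four-constraint XOR argument or the per-branch Clifford-correction bookkeeping, both of which you make explicit (and which indeed go through, since the Pauli and $S$ corrections only flip $s_i$ and relabel $b_i$, preserving the $-I$ identity).
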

\begin{proof}
The probability distribution in Eq.~\eqref{eq:HVT} could also be understood as the following: there is a probability distribution $p\left(\lambda\right)$ and each $\lambda$ determines $t_i$, i.e. $t_i=t_i\left(s_i,\lambda\right)$. Because the GHZ test cannot be described by a local hidden variable theory, there exists at least 1 assignment of $\bm{a}$ given $\lambda$ such that $\bm{t}=\bm{t}\left(\bm{s},\lambda\right)\notin \operatorname{sol}_{\bm{a}}$. Assuming this and $q\left(\bm{a}\right)=1/8$ we have that:
\begin{eqnarray}
\notag\sum_{\bm{a},\bm{t}\in \operatorname{sol}_{\bm{a}}}q\left(\bm{a}\right)p\left(\bm{t}\mid\bm{a}\right)&=&
\frac{1}{8}\sum_{\bm{a}}\sum_{\bm{t}\in \operatorname{sol}_{\bm{a}}}p\left(\bm{t}\mid\bm{a}\right)\\
\notag &=&
\frac{1}{8}\int_\lambda p\left(\lambda\right)\sum_{\bm{a}}\bm{1}\left(\bm{t}\left(\lambda,\bm{a}\right)\in \operatorname{sol}\right)\\
&\le&\frac{7}{8}
\end{eqnarray}
where $\bm{1}\left(\cdot\right)$ is the indicator function which is 1 if the condition $\cdot$ holds, and is otherwise 0.
\end{proof}

\begin{figure*}[ht]
  \begin{center}
    \includegraphics[width=\textwidth]{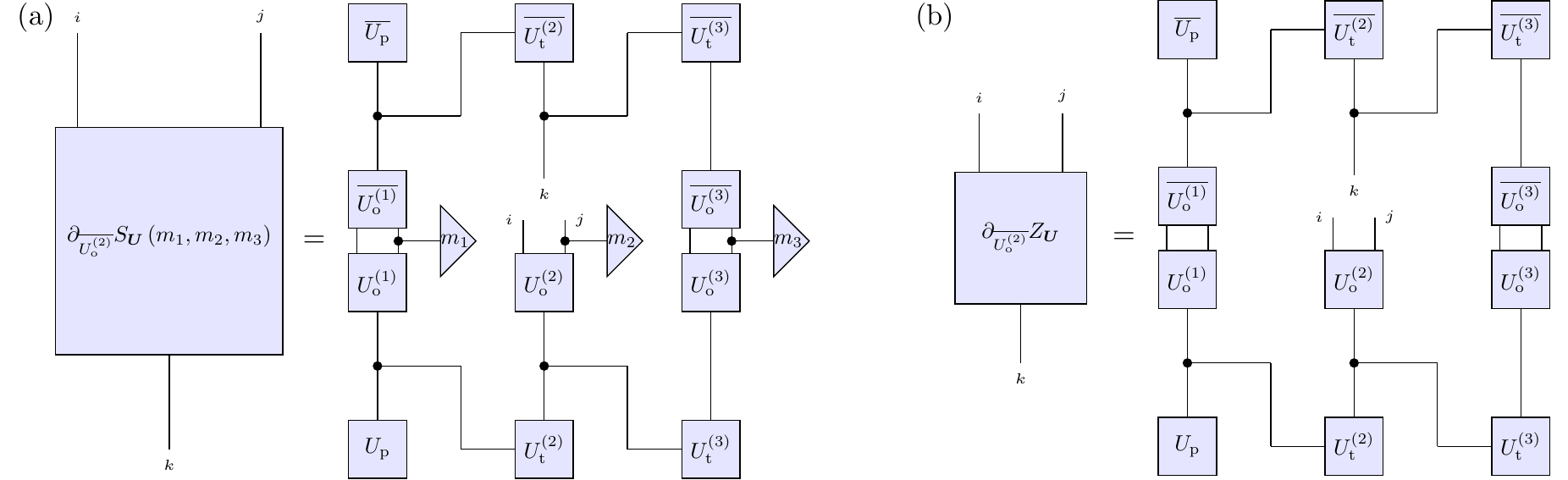}
    \caption{(a) The derivative of $S_{\bm{U}}\left(\bm{m}\right)$  with respect to $\overline{U_{\textrm{o}}^{\left(2\right)}}$, using the network given in Fig.~\ref{fig:q2gram_circuit}(b) as an illustrative example. (b) The derivative of $Z_{\bm{U}}$ with respect to $\overline{U_{\textrm{o}}^{\left(2\right)}}$, using the network given in Fig.~\ref{fig:q2gram_circuit}(b) as an illustrative example.\label{fig:dv}}
  \end{center}
\end{figure*}

\begin{figure*}[ht]
  \begin{center}
    \includegraphics[scale=0.275]{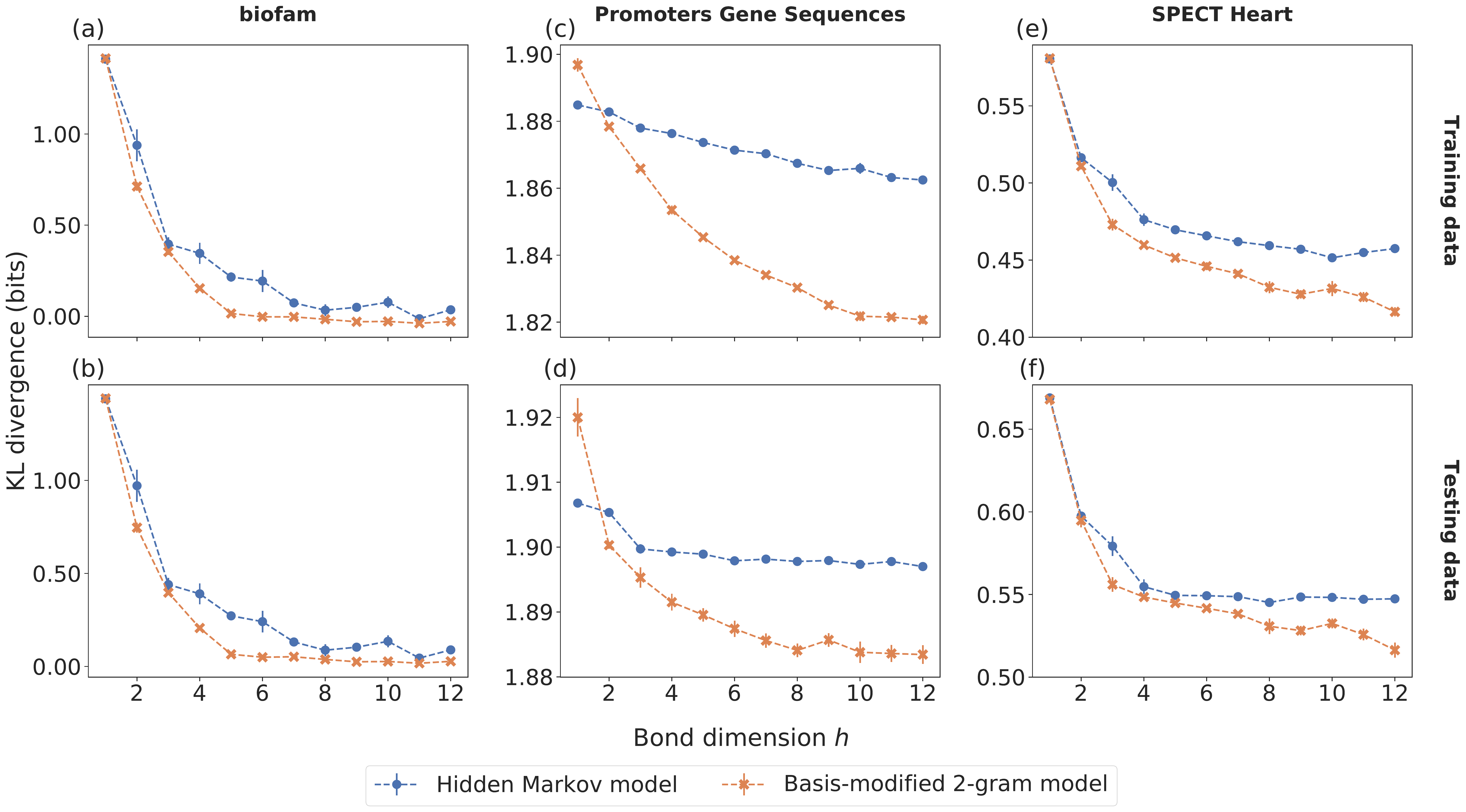}
    \caption{Shown are the average performances over ten trials of the classical HMM (blue circles) and the basis-enhanced $2$-gram model (orange crosses) on the biofam ((a) and (b)), Promoter Gene Sequences ((c) and (d)), and SPECT Heart ((e) and (f)) data sets. The first row plots the performance on the training data, and the second the performance on withheld testing data. The basis-enhanced $2$-gram, on average, performed better than the classical model even on the simple biofam data set, implying more consistent performance in the basis-enhanced $2$-gram model. Error bars denote one standard error of the mean over ten trials. Dashed lines are to aid the eye.\label{fig:avg_performance_plot}}
  \end{center}
\end{figure*}

Combining the above lemmas, we now show that Eq.~\eqref{eq:l1} holds. First:
%\begin{proof}

\begin{widetext}
\begin{eqnarray}
\notag\sum_{\bm{a},\bm{b},\bm{s},\bm{t}}\left|p\left(\bm{a},\bm{b},\bm{s},\bm{t}\right)-q\left(\bm{a},\bm{b},\bm{s},\bm{t}\right)\right|&=&
\sum_{\bm{a},\bm{b},\bm{s},\bm{t}}\left|p\left(\bm{b},\bm{s}\right)p\left(\bm{a},\bm{t}\mid\bm{b},\bm{s}\right)-q\left(\bm{b},\bm{s}\right)q\left(\bm{a},\bm{t}\mid\bm{b},\bm{s}\right)\right|\\
\notag
&=&
\left(\sum_{\bm{b},\bm{s}\in\text{GHZ-type}}+\sum_{\bm{b},\bm{s}\notin\text{GHZ-type}}\right)\sum_{\bm{a},\bm{t}}\left|p\left(\bm{b},\bm{s}\right)p\left(\bm{a},\bm{t}\mid\bm{b},\bm{s}\right)-q\left(\bm{b},\bm{s}\right)q\left(\bm{a},\bm{t}\mid\bm{b},\bm{s}\right)\right|\\
\notag &\ge& 
\sum_{\bm{b},\bm{s}\in\text{GHZ-type}}
\sum_{\bm{a},\bm{t}}\left|p\left(\bm{b},\bm{s}\right)p\left(\bm{a},\bm{t}\mid\bm{b},\bm{s}\right)-q\left(\bm{b},\bm{s}\right)q\left(\bm{a},\bm{t}\mid\bm{b},\bm{s}\right)\right|\\
\notag
&\ge&\sum_{\bm{b},\bm{s}\in\text{GHZ-type}}p\left(\bm{b},\bm{s}\right)
\sum_{\bm{a},\bm{t}}\left|p\left(\bm{a},\bm{t}\mid\bm{b},\bm{s}\right)-\frac{q\left(\bm{b},\bm{s}\right)}{p\left(\bm{b},\bm{s}\right)}q\left(\bm{a},\bm{t}\mid\bm{b},\bm{s}\right)\right|\\
\notag&\ge&
\sum_{\bm{b},\bm{s}\in\text{GHZ-type}}\frac{p\left(\bm{b},\bm{s}\right)}{2}
\sum_{\bm{a},\bm{t}}\left|p\left(\bm{a},\bm{t}\mid\bm{b},\bm{s}\right)-q\left(\bm{a},\bm{t}\mid\bm{b},\bm{s}\right)\right|\\
\notag&\ge&\sum_{\bm{b},\bm{s}\in\text{GHZ-type}}\frac{p\left(\bm{b},\bm{s}\right)}{2}
\min_{\bm{b},\bm{s}\in\text{GHZ-type}}\sum_{\bm{a},\bm{t}}\left|p\left(\mathbf{a,t}\mid\bm{b},\bm{s}\right)-q\left(\bm{a},\bm{t}\mid\bm{b},\bm{s}\right)\right|\\
&=&\frac{1}{18}\min_{\bm{b},\bm{s}\in\text{GHZ-type}}\sum_{\bm{a},\bm{t}}\left|p\left(\bm{a},\bm{t}\mid\bm{b},\bm{s}\right)-q\left(\bm{a},\bm{t}\mid\bm{b},\bm{s}\right)\right|.
\end{eqnarray}
Making the minimization over GHZ-types implicit and noting that $q\left(\bm{t}\mid\bm{a},\text{GHZ-type}\right)=1$ for $\bm{t}\in\operatorname{sol}_{\bm{a}}$ finally yields:
\begin{eqnarray}
\notag \sum_{\bm{a},\bm{t}}\left|p\left(\bm{a},\bm{t}\right)-q\left(\bm{a},\bm{t}\right)\right|&=&\sum_{\bm{a},\bm{t}}\left|p\left(\bm{a}\right)p\left(\bm{t}\mid\bm{a}\right)-q\left(\bm{a}\right)q\left(\bm{t}\mid\bm{a}\right)\right|\\
\notag &=&\sum_{\bm{a}}q\left(\bm{a}\right)
\sum_{\bm{t}}\left|q\left(\bm{t}\mid\bm{a}\right)-\frac{p\left(\bm{a}\right)}{q\left(\bm{a}\right)}p\left(\bm{t}\mid\bm{a}\right)\right|\\
\notag &\ge&\frac{1}{2}\sum_{\bm{a}}q\left(\bm{a}\right)
\sum_{\bm{t}}\left|q\left(\bm{t}\mid\bm{a}\right)-p\left(\bm{t}\mid\bm{a}\right)\right|\\
\notag &=&\frac{1}{2} \left(1- \sum_{\bm{a,t}\in\operatorname{sol}_{\bm{a}}}q\left(\bm{a}\right)p\left(\bm{t}\mid\bm{a}\right)\right)\\
&\ge&\frac{1}{16}.
\end{eqnarray}
With the last inequality, we arrive at the separation of $1/288$ under the $l_1$-distance.
\end{widetext}

\section{HHM to Simulate Fig.~\ref{fig:quantum2gram}(a)}\label{app:hmm}
The process described in Fig.~\ref{fig:quantum2gram} could also be understood by an HMM model described in Sec.~\ref{sec:hmm}. The total number of possible quantum states involved is an upper bound for the number of hidden variables in HMM. The state stored in the second qubit of each pair in Fig.~\ref{fig:quantum2gram}, with the evolution driven by measuring the first qubit in each pair, only involves single qubit stabilizer states. The total number is thus $6$ ($\ket0,\ket1,\ket+,\ket-,\ket{+_i},\ket{-_i}$).

\section{Supplemental Numerics Figures}\label{app:avg_perf}

In Fig.~\ref{fig:dv}, we plot the tensor network representation of the derivative of the loss function Eq.~\eqref{eq:derivative_of_loss}. Furthermore, in Fig.~\ref{fig:avg_performance_plot}, we plot the average performance of both the trained HMM and basis-enhanced $2$-gram model. Note that the performance separation between quantum and classical models persists even when considering the average performance over many runs. A slight separation is observed even for the biofam data set, implying more consistent performance in the basis-enhanced $2$-gram model.

%\bibliographystyle{naturemag}
%\bibliography{ref}

\end{document}